%% file: main.tex
\newif\ifllncs\llncsfalse

\newif\iffull\fulltrue

\newif\ifblind\blindfalse

\newif\ifnotes\notesfalse

\ifllncs \documentclass[runningheads,a4paper,envcountsect,envcountsame]{llncs/llncs} 
\else \documentclass[11pt]{article} \fi 

\input{preamble}

\title{A Complexity-Theoretic Approach to Proofs of Space}
\ifblind
\author{}
\date{}
\else
\author{
        Marshall Ball\thanks{New York University. Email: \texttt{marshall.ball@cs.nyu.edu}. ORCID: \texttt{0000-0002-4236-3710}.}
    \and
        Jiaxin Guan\thanks{Zellic. Email: \texttt{jiaxin@guan.io}. ORCID: \texttt{0000-0003-1823-8845}.}
    }
\fi

\begin{document}

\maketitle
\begin{abstract}
    A Proof of Space, PoS, as introduced by Dziembowski \textit{et al.} [CRYPTO'15], is a two-phase protocol that enables a Prover to convince an efficient Verifier that it has allocated a large amount of persistent memory to storing some information.

    To our knowledge, all existing PoS protocols are only known to be secure in the random oracle model (or under ad hoc assumptions about cryptographic assumptions). We provide an elementary framework for constructing PoS from a combination of derandomization assumptions and cryptographic assumptions.

    We provide a few simple instantiations of the framework. We show that non-trivial PoS follow from (a) $\mathsf{E}=\mathsf{DTIME[2^{O(n)}]}$ is hard for exponential-size nondeterministic circuits (an assumption introduced to show $\mathsf{AM}=\mathsf{NP}$), and (b) collision-resistant hash functions. We also show that PoS with nearly optimal parameters and interaction pattern follows from assumption (a) above and (c) SNARGs for $\mathsf{P}$.
\end{abstract}

\noteswarning 
\newif\ifsubfilebib\subfilebibfalse  
\mnote{TODO:\begin{itemize}
    \item section 6 sound proof clarify [Jiaxin]
    \item Add $\varepsilon$ $\delta/\rho$ relationship for ECC corollary. For any $\varepsilon\in(0, 1/2)$, there exists $\rho$... [Jiaxin]
    \item proofread section 5 + adapt to covering lemma (remove old covering lemma) [Marshall]
    \item proofread intro [Jiaxin] + add informal theorem
\end{itemize}}

\subfile{intro}

\subfile{prelim}

\subfile{ktcomp}

\subfile{extraction}

\subfile{pos-2}

\bibliographystyle{alpha}
\bibliography{Bibliography/abbrev3,Bibliography/crypto,Bibliography/biblio}

\end{document}

%% file: preamble.tex
\usepackage{amsmath,amsthm,amsfonts,amssymb}
\usepackage{fullpage}
\usepackage{hyperref}
\usepackage{xcolor}
\usepackage[english]{babel}
\usepackage[utf8x]{inputenc}
\usepackage{tcolorbox}
\usepackage{subfiles}
\usepackage{mdframed}
\usepackage[normalem]{ulem}
\usepackage[shortlabels]{enumitem}

\theoremstyle{definition}
\newtheorem{definition}{Definition}

\theoremstyle{remark}
\newtheorem{remark}{Remark}

\theoremstyle{plain}
\newtheorem{theorem}{Theorem}
\newtheorem{lemma}{Lemma}
\newtheorem{prop}{Proposition}
\newtheorem{corollary}{Corollary}
\newtheorem{construction}{Construction}

\ifnotes 
\usepackage{color}
\newcommand{\noteby}[3]{{\textcolor{#3}{\footnotesize{\bf (#1:} {#2}{\bf )}}}}
\newcommand{\noteswarning}{{\begin{center} {\Large WARNING: NOTES ON}\end{center}}}
\else 
\newcommand{\noteby}[3]{{}}
\newcommand{\noteswarning}{{}}
\fi 
\newcommand{\mnote}[1]{{\noteby{Marshall}{#1}{red!70}}}

\def\ddefloop#1{\ifx\ddefloop#1\else\ddef{#1}\expandafter\ddefloop\fi}

\def\ddef#1{\expandafter\def\csname bb#1\endcsname{\ensuremath{\mathbb{#1}}}}
\ddefloop ABCDEFGHIJKLMNOPQRSTUVWXYZ\ddefloop

\def\ddef#1{\expandafter\def\csname c#1\endcsname{\ensuremath{\mathcal{#1}}}}
\ddefloop ABCDEFGHIJKLMNOPQRSTUVWXYZ\ddefloop

\def\ddef#1{\expandafter\def\csname v#1\endcsname{\ensuremath{\boldsymbol{#1}}}}
\ddefloop ABCDEFGHIJKLMNOPQRSTUVWXYZabcdefghijlmnopqrstuvwxyz\ddefloop

\def\ddef#1{\expandafter\def\csname v#1\endcsname{\ensuremath{\boldsymbol{\csname #1\endcsname}}}}
\ddefloop {alpha}{beta}{gamma}{delta}{epsilon}{varepsilon}{zeta}{eta}{theta}{vartheta}{iota}{kappa}{lambda}{mu}{nu}{xi}{pi}{varpi}{rho}{varrho}{sigma}{varsigma}{tau}{upsilon}{phi}{varphi}{chi}{psi}{omega}{Gamma}{Delta}{Theta}{Lambda}{Xi}{Pi}{Sigma}{varSigma}{Upsilon}{Phi}{Psi}{Omega}{ell}\ddefloop

\newcommand{\ignore}[1]{}

\newcommand{\poly}{\mathsf{poly}}
\newcommand{\polylog}{\mathsf{polylog}}
\newcommand{\negl}{\mathsf{negl}}

\newcommand{\from}{\leftarrow}

\newcommand{\zo}{\{0,1\}}

\newcommand{\out}{\mathsf{out}}
\newcommand{\inp}{\mathsf{in}}

\newcommand{\params}{\mathsf{params}}

\newcommand{\cf}{\mathsf{cf}}

\newcommand{\Gen}{\mathsf{Gen}}
\newcommand{\Prove}{\mathsf{Prove}}
\newcommand{\Verify}{\mathsf{Verify}}

\newcommand{\accept}{\mathsf{accept}}

\newcommand{\pk}{\mathsf{pk}}
\newcommand{\vk}{\mathsf{vk}}

\newcommand{\barT}{\overline{T}}

\newcommand{\ARG}{\mathsf{ARG}}
\newcommand{\ECC}{\mathsf{ECC}}
\newcommand{\En}{\mathsf{En}}
\newcommand{\De}{\mathsf{De}}

\newcommand{\Commit}{\mathsf{Commit}}
\newcommand{\Open}{\mathsf{Open}}

\newcommand{\PCP}{\mathsf{PCP}}

\newcommand{\crs}{\mathsf{crs}}
\newcommand{\pf}{\mathsf{pf}}
\newcommand{\ans}{\mathsf{ans}}

\newcommand{\lin}{{\ell_{\mathsf{in}}}}
\newcommand{\lout}{{\ell_{\mathsf{out}}}}

\newcommand{\MERKLE}{\mathsf{MT}}

\newcommand{\st}{\mathsf{st}}

\newcommand{\good}{\mathrm{Good}}

\newcommand{\nin}{{n_{\mathsf{in}}}}
\newcommand{\nout}{{n_{\mathsf{out}}}}

\newcommand{\Tau}{T}

\newcommand{\aK}{\mathsf{aK}}
\newcommand{\pK}{\mathsf{pK}}
\newcommand{\Kt}{\mathsf{K}^{t}}
\newcommand{\KT}{\mathsf{K}^{T}}
\newcommand{\pKt}{\pK^{t}}
\newcommand{\aKt}{\aK^{t}}

\newcommand{\great}{\mathrm{Great}}

%% file: intro.tex
\section{Introduction}
A Proof of Space protocol, introduced by Dziembowski {\em et al.}~\cite{C:DFKP15}, is a two-phase protocol that enables a very efficient Verifier to check that a Prover is using a lot of storage (persistently).\footnote{A different definition with the same name emerged at the same time, due to Ateniese et al.~\cite{SCN:ABFG14}. This latter notion is not the subject of this paper.} The first phase consists of a single short verifier message, but takes place over a relatively long period of time. After the first phase, the Prover is supposed to have stored some data $\sigma$ of size $N$. In the second phase, the Verifier interacts with the Prover to check that prover has indeed committed $\approx N$ bits of random access memory to this task. It is integral that this second phase takes place over a short interval of time, else the Prover could simply expand the succinct description of $\sigma$ at that moment to respond to the Verifier's queries.

A rich line of literature~\cite{C:DFKP15,SCN:ABFG14,TCC:RenDev16,EC:ACKKPT16,EC:AlwBloPie17,EC:AlwBloPie18,AC:AACKPR17,EC:Fisch19,FOCS:Reyzin24} has emerged following Dziembowski {\em et al.}'s seminal work. These works contain elegant theoretical constructions and connections to graph pebbling, however (to our knowledge) they all \emph{critically} rely on the random oracle model to work. These proofs critically make simultaneous use of both the incompressibility and extractibility of random oracle queries, and we do not know any constructions outside of the random oracle model. In fact, we do not even know of any clean (even non-falsifiable) assumptions about cryptographic hash functions that would suffice for the provable security of these constructions.\footnote{For the related but incomparable case of memory-hard functions, Ameri, Block and Blocki~\cite{SCN:AmeBloBlo22} gave a construction assuming memory-hard puzzles, iO, and OWF (following a template laid out by Bitansky {\em et al.}~\cite{ITCS:BGJPVW16} for time-lock puzzles).}

In this work, we present a very simple approach to constructing Proofs of Space in the plain model. We provide three straightforward instantiations of our framework with various trade-offs. Before continuing, we emphasize that our focus is on elementary, pedagogically clean, generic feasibility; not sophisticated threat models and practical protocols.

\subsection{This Work: An Elementary Approach to Proofs of Space}
Perhaps the key difficulty in constructing Proofs of Space (PoS) without a random oracle, is that one needs a long string $\sigma$ that is incompressible, and yet it should paradoxically also have a short description! The only reason this is not impossible is the time constraints on the PoS phases: the first phase is long (time $T$), while the second is short (time $t$). So, we actually have two different fine-grained notions of (in)compressibility:
\begin{enumerate}
    \item $\sigma$ can be expanded from a very short description in time $T$,
    \item $\sigma$ \emph{cannot} be expanded from any description slightly shorter than $|\sigma|$ in time $t$, even when given the short description.
\end{enumerate}
Thankfully, this in fact nearly corresponds to a notion in complexity theory, \emph{computational depth} introduced by Antunes {\em et al.}~\cite{TCS:AFvMV06}. In essence, any PoS necessarily implies an explicit means of finding strings with a (fine-grained and computationally-sound) notion of computational depth: strings that can be highly compressed (if decompression is allowed to be inefficient) but don't admit ``easy'' compressions (if decompression must be fast).\footnote{Technically, in PoS we simply require that it is hard to find a quickly (time $t$) decompressible description of $\sigma$ (Yao-style incompressibility), whereas computational depth says that no such description exists. Antunes et al. defined a few notions of computational depth, but in all there is no time bound $T$ in the short description.} To explain this notion, we need to first introduce some preliminaries. 

\paragraph{Background: time-bounded Kolmogorov complexity.} We require a classic notion of compressibility: the \emph{$t$-time-bounded Kolmogorov complexity} of a string $x$, $\Kt(x)$, is the length of the shortest string $s$ such that $U:s\stackrel{t}{\rightsquigarrow}x$, i.e.~the universal Turing machine $U$ on input $s$ produces $x$ in at most $t$ steps.\footnote{The choice of $U$ only matters up to an additive constant.} The \emph{conditional} $t$-time-bounded Kolmogorov complexity of $x$ given $y$, $\Kt(x|y)$, is the length of the shortest string $s$ such that $U:(s,y)\stackrel{t}{\rightsquigarrow}x$.

\paragraph{Functions of high computational depth.} We are interested not just in the existence of strings with high computational depth, but also an efficient means of finding them. With this in mind, we are interested in \emph{functions with high computational depth}. 

We say a function $f:\zo^k\to\zo^N$ has $(t,T)$ \emph{computational depth} $\alpha N$ with error $\varepsilon$ if
\begin{enumerate}
    \item $f$ is computable in time $T$,
    \item With probability at most $\varepsilon$ over a random $x\stackrel{u}{\gets} \zo^k$,
    \[ \Kt(f(x)|x)) \le \alpha N.\]
\end{enumerate}
In other words, $\KT(f(x))\le |x|=k$ and yet $\Kt(f(x)|x)\approx N$.\footnote{A straight-forward fine-grained extension of Antunes et al.'s definition of \emph{computational depth} of a string $x$ is the difference between the length of its $t$-time compression and its $T$-time compression:$\mathsf{CD}^{t,T}(x) = \Kt(x) - \KT(x)$.} In our setting we want $k$ to be as small as possible, $k=N^\gamma$ for some  constant $\gamma<1$ or even $k=O(\log n)$.

In fact, in our setting of small $k=|x|$, we can observe that it suffices that $\Kt(f(x)) >\alpha N$ as $\Kt(f(x)|x) \ge \Kt(f(x)) - |x|$.

We will actually require a more robust notion in two respects.

Firstly, that $f$ has high \emph{approximate} computational depth, namely that for most $x$ $\Kt(\hat{y}|x)>\alpha N$ for any $y$ close to $x$ in hamming distance. This is easily achieved by composing with an (explicit) list-decodable code, $(\mathsf{E},\mathsf{D})$: $f'(x) = \mathsf{E}(f(x))$. However, our construction methodology directly gives this with essentially optimal parameters.

Secondly, that $f(x)$ remains incompressible even in the presence of \emph{auxiliary information} (i.e.~in the non-uniform setting). This roughly says that for any non-uniform time $t$ (decompression) algorithm $\cA$, except with negligible probability (over $x$) there is no short $\sigma$ such that $\cA(\sigma)=f(x)$.
This robust notion is tailored to a cryptographic setting where we wish for non-uniform security. For example, an adversary may perform some expensive preprocessing to use at execution time. We note that our constructions of $f$ immediately achieve this stronger non-uniform notion.

Having defined a notion that intuitively seems to capture a necessary element of an explicit PoS, the natural next question is how do we construct this function of high computational depth?

We observe that such a function (with high stretch) can be easily lifted from classic derandomization techniques.

\paragraph{Background: PRGs for nondeterministic distinguishers.} The key ingredient in our construction is the notion of a \emph{PRG against nondeterministic distinguishers}~\cite{DBLP:journals/cc/KinneMS12,CCC:CU05}, $G:\zo^k\to\zo^n$. We say that $G$ is such a seed-extending PRG against nondeterministic $t(n)=\poly(n)$ with error $\varepsilon$, if it is computable in time $T=t^{O(1)}=\poly(n)$ time, and fools any nondeterministic $t(n)$-time distinguisher, $D$:
\[ \left|\Pr_{x\gets \zo^k}[D(G(x))=1]-\Pr_{y\gets\zo^{N}}[D(y)=1]\right|\le \varepsilon.\]

Such PRGs are known to follow from assumptions that effectively say: ``There are uniform computations for which nondeterminism (and nonuniformity) does not give arbitrarily large speed-ups.''

In particular, relatively standard (worst-case) derandomization assumptions introduced in the context of showing $\mathsf{NP}=\mathsf{AM}$~\cite{DBLP:journals/cc/KinneMS12,CCC:CU05}: $\mathsf{E}=\mathsf{DTIME}[2^{O(n)}]$ requires exponential size nondeterministic circuits.
This assumption implies PRGs with exponential stretch, but just inverse polynomial (in output length) error.\footnote{There is a black box barrier to getting negligible error PRGs from such assumptions~\cite{CCC:AASY15}.}

\paragraph{Computational depth via nondeterministic hardness.}
It is not difficult to see that such a seed-extending PRG $G$ against nondeterministic hardness must have high computational depth:
\begin{enumerate}
    \item A standard counting argument shows that a random string, $r\gets \zo^N$ given a random string $s\gets \zo^k$ has nearly $N$ conditional $\Kt$ complexity with overwhelming probability.
    
    In particular, there are $2^N$ strings of length $N$ and only $2^\ell$ descriptions of length $\ell$. Hence, a random $r$ cannot be compressed to even $N-\log^2(N)$ bits, except with negligible probability.
    \item Suppose (for the sake of contradiction) that $\Kt(G(x))<\alpha N$ with significant probability. That means that there is a short description of $G(x)$, $\sigma$, that can expand to $G(x)$ in time $t$.
    \item These two facts yield a nondeterministic time $t$ distinguisher: nondeterministically guess a short description $\sigma$ and check that $\sigma$ expands to $y=G(x)$. If $y$ was random, no such $\sigma$ exists with very high probability.
\end{enumerate}

\paragraph{Computational depth with overwhelming probability?} Unfortunately, the noticeable error of the PRGs described above fails to achieve high computational depth with overwhelming probability. Discouragingly, there are even black-box barriers to achieving negligible error in such PRGs~\cite{CCC:AASY15}.

To remedy this, we instead turn to the more nuanced notion of a \emph{multiplicative PRG against nondeterministic distinguishers}~\cite{CCC:AIKS16,CCC:Shaltiel25,DBLP:journals/eccc/DermerS26}. This notion, while failing to achieve negligible distinguishing distance, nonetheless preserves the probability of certain negligible probability events. In particular, an $\varepsilon$-multiplicative PRG against nondeterministic time $t$ has the guarantee that for any nondeterministic time $t$ distinguisher $D$, 
\[ \Pr_{x\gets \zo^k}[D(G(x))=1] < 2 \Pr_{y\gets\zo^{N}}[D(y)=1]+\varepsilon.\]
Such PRGs with $\varepsilon=\negl(N)$ follow from the same assumptions described above~\cite{CCC:Shaltiel25,DBLP:journals/eccc/DermerS26}. Moreover, our simple argument that PRG against nondeterministic distinguishers have computational depth hinged around a particular nondeterministic distinguisher, $D$ (the compression tester), such that $\Pr_y[D(y)=1] = \negl(N)$. It follows that $Pr_x[D(G(x))=1]=\negl(N)$ and hence such $G$ has nearly maximal computational depth with overwhelming probability.

\begin{theorem}[\textit{Informal:} Explicit Incompressibility from Nondeterministic Hardness]
	Assume $\mathsf{E}=\mathsf{DTIME}(2^{O(n)})$ is hard for exponential-size circuits. Then for every polynomial time $t$ and every small constant approximation factor $\rho>0$, there is an explicit function
	\[
	f:\zo^{\poly\log N}\to\zo^N
	\]
	computable in polynomial time and is time $t$-incompressible to even $S=N-O(\rho N \log^2(1/\rho)-\polylog N)$ bits within relative distance $\rho$. Namely, for any $t$ decompression procedure, $D:\zo^S\to\zo^N$, except with negligible probability over random $x$, $f(x)=y$ is $\rho$ far from any string in $D(\zo^S)$.
	
\end{theorem}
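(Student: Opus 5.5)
The plan is to take $f = G$, a multiplicative PRG against nondeterministic distinguishers. Under the assumption that $\mathsf{E}$ is hard for exponential-size nondeterministic circuits, the results cited above \cite{CCC:Shaltiel25,DBLP:journals/eccc/DermerS26} give, for every polynomial $t'$, an explicit $G:\zo^{k}\to\zo^N$ with $k=\polylog N$ (or $O(\log N)$ for the seed, padded as needed). This $G$ is computable in polynomial time $T=\poly(t')$ and satisfies
\[ \Pr_{x}[D(G(x))=1] \le 2\Pr_{y\gets\zo^N}[D(y)=1]+\varepsilon \]
for every nondeterministic time-$t'$ distinguisher $D$ (nonuniform, of size $t'$), with $\varepsilon=\negl(N)$. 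I will choose $t'$ a fixed polynomial larger than $t$ plus the overhead of the test described next. Nonuniformity of $D$ is exactly what lets the decompressor $D$ in the statement be an arbitrary time-$t$ procedure, possibly with advice.

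\textbf{Step 1: the compression tester.} Given a time-$t$ decompressor $\mathsf{Dec}:\zo^S\to\zo^N$, define a nondeterministic distinguisher $\mathsf{T}$. On input $y$, it guesses $\sigma\in\zo^S$, computes $z=\mathsf{Dec}(\sigma)$ in time $t$, and accepts iff the Hamming distance satisfies $\Delta(y,z)\le\rho N$. This runs in time $t+O(N)\le t'$.

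\textbf{Step 2: counting.} The set accepted by $\mathsf{T}$ is the union of Hamming balls of radius $\rho N$ around the at most $2^S$ points of $\mathsf{Dec}(\zo^S)$. Its size is therefore at most $2^{S}\cdot 2^{H(\rho)N}$. Hence
\[ \Pr_y[\mathsf{T}(y)=1]\le 2^{S+H(\rho)N-N}. \]
Since $H(\rho)=O(\rho\log(1/\rho))$, setting $S=N-O(\rho N\log^2(1/\rho))-\polylog N$ makes this at most $2^{-\polylog N}$, which is negligible. This is a generous choice of $S$, well within the stated bound.

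\textbf{Step 3: transfer to the PRG.} By the multiplicative guarantee, $\Pr_x[\mathsf{T}(G(x))=1]\le 2\cdot\negl(N)+\varepsilon=\negl(N)$. So except with negligible probability over $x$, $G(x)$ is $\rho$-far from every string in $\mathsf{Dec}(\zo^S)$. This is the claim with $f=G$.

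\textbf{Main obstacle.} The delicate point is Step 3's input: matching parameters of the cited multiplicative PRG. I need the seed to be $\polylog N$, the error $\varepsilon$ to be negligible in $N$ (rather than $1/\poly$), and the guarantee to hold against nondeterministic circuits of size $\poly(t)$ rather than uniform machines. I expect to obtain the negligible error by instantiating the hardness assumption at input length $\polylog N$, which gives seed length $\polylog N$ and error $2^{-\polylog N}$. I would check that the cited constructions allow this scaling, and otherwise amplify by concatenating independent copies. Concatenation preserves distance-based incompressibility only up to a union bound, so it would need care.
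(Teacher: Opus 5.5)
Your proposal is correct and follows essentially the paper's route: a nondeterministic tester that guesses a short description and accepts anything within Hamming distance $\rho N$ of its decompression, the Hamming-ball counting bound, and the transfer through the multiplicative PRG together make up exactly the lemma behind Corollary~\ref{cor:approx-incompressible}. The paper also lets the decompressor see $x$ by adding $x$ to the witness, at a cost of $|x|=\polylog N$ bits that your slack absorbs; and your stated obstacle is already settled by Theorem~\ref{thm:PRG}, which allows any $\delta\in[2^{-s},1/s]$ with seed length $O(\log(1/\delta))$, so taking $\delta=2^{-\log^2 N}$ and truncating the output to $N$ bits gives a $\polylog N$ seed with negligible error and no concatenation is needed.
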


At last, we can turn to constructing PoS from such functions with computational depth.

\paragraph{PoS from functions of high computational depth.}
We next show an elementary recipe for constructing a Proof of Space from any function $f$ of high computational depth:
\begin{itemize}
    \item {\bf Initialization phase.} 
    \begin{enumerate}
        \item Verifier samples random $x\gets \zo^k$, and the key to a hash function $h$, and sends both to Prover.
        \item Prover computes $y=f(x)$ and stores to memory. (This phase takes roughly time $T$, the time to evaluate $f$.)
        
        The Prover additionally computes and stores two other items at this time:
        \begin{enumerate}
            \item $r=h^*(y)$, the root of the Merkle tree of $y$ using $h$.
            \item $\pi$, a SNARG, or preprocessing for a succinct interactive argument for the statement ``$r=h^*(f(x))$.''
        \end{enumerate}
        
    \end{enumerate}
    \item {\bf Verification phase.}
    \begin{enumerate}
        \item Verifier samples a set of random challenge indices $i_1,\ldots,i_\ell \in [N]$ and sends them to the Prover.
        \item Prover responds with the root $r$ and Merkle openings of $y_{i_1},\ldots,y_{i_\ell}$, as well as a succinct proof the root of the Merkle tree $r$ is the Merkle root of $y=f(x)$. If the prover does not respond within time $t'< t$ (where $t$ is such that $f(x)$ cannot be compressed relative to time $t$), the verifier rejects.
    \end{enumerate}
\end{itemize}
Intuitively, the succinct argument binds the Prover to $y$ and thus the Verifier can be assured that the openings are correct. There are some subtleties in soundness which we will address later in this overview.

We note that this naive recipe is only possible because $f$ is an explicit function (and does not rely on the random oracle)!

Because the SNARG forces comparison to the verifier chosen $y=f(x)$, the Merkle tree only requires 2nd pre-image resistance and hence can be instantiated from one-way functions~\cite{STOC:AmiRot23}. (However, we do not know appropriate succinct arguments from such assumptions.) Additionally, the succinct argument requires a particular structure which is not shared by all doubly efficient proof systems: the prover can effectively preprocess the whole proving strategy into a succinct state, which can be quickly used when interacting with the verifier. While we do not know how to adapt certain recursive proof systems to this framework, such as~\cite{STOC:ReiRotRot16}, this is a feature of Kilian's classic PCP-based argument and any SNARG (as well as~\cite{STOC:GolKalRot08}, but this imposes undesirable structural limitations).

\paragraph{Warm-up: constant soundness.} To prove the soundness of this framework, we wish to show that if there exists a cheating prover strategy that maintains a small state and yet can convince the verifier to accept with non-negligible probability, then $f$ must not have had high computational depth. In other words, given a cheating prover $\cA$ that, on challenge $x$, produces a small state $\sigma_x$ that convinces the verifier to accept, we want to show how to extract $f(x)$ from the short description $\sigma_x$. However, the key challenge, for this to be a contradiction, our extractor must run in time $t$.

To warm up let's run through how to achieve \emph{constant soundness}, a special case where the runtime restriction is not an issue. A standard Markov argument shows that a constant fraction of $x$ must be ``good'': on input $x$ the cheating prover will (with constant probability) produce a small state $\sigma_x$ that enables it to convince the verifier with constant probability, $\varepsilon$. Moreover, because the argument system and the Merkle tree is negligibly sound (and the cheating prover is polynomial time overall), most PoS soundness violations cannot be due to proofs of false statements. In other words, for almost all ``good'' $x$ with overwhelming probability, any polynomial number proofs $\pi_1,\ldots,\pi_m$ produced by the cheating prover will be valid. Hence, if we query the prover with state $\sigma_x$ on $q$ tuples of indices $(i_1,\ldots,i_\ell)$, we can expect that nearly $\varepsilon q$ of the Prover's responses will provide valid bits of $y$, $(y_{i_1},\ldots,y_{i_\ell})$, with valid openings and proofs that the root was computed correctly. Hence, by properties of the direct product (and tail bounds), we can extract nearly all the bits of $y=f(x)$ with high probability from $\sigma_x$ by making $q= O(|y|/\varepsilon^2) = O(|y|)$ queries in time $O(t'\cdot |y|) < t$ (where $f(x)$ cannot be compressed relative to time $t$).

Note that this reduction consumes many random bits to produce an approximation of $y=f(x)$, which potentially contributes to the description of $y$ (beyond simply $\sigma_x$ and the code of the adversary). However, this randomness can be simply folded into the non-uniform advice using standard methods~\cite{FOCS:Adleman78}.

\paragraph{Negligible soundness.} The approach above fails dramatically if one wants to argue the protocol satisfies negligible soundness. In this case, a violating adversary only convinces the verifier with $N^{-c}$ probability for some arbitrarily large constant $c$. Thus, to expect to see the adversary work correctly even just once (when making random queries) would require $q=N^c$ queries. This means that our reduction would run in time at least $n^c$. To reach a contradiction, this would require $t$, such that $f(x)$ cannot be expanded from a short description in time $t$, to be larger than any polynomial $N^c$. But the $T$, the time to compute $f$, would also have to be super-polynomial and the scheme could not be efficient overall.

Instead, we avoid making random queries and instead take a different approach. We start as before, given a cheating prover $\cA$ that is convincing with inverse polynomial probability, ``good'' instances $x$, which admit descriptions $\sigma_x$ that enable $\cA$ to convince the verifier with probability $N^{-c}$, must be inverse polynomially dense. So it suffices to focus on compressing ``good'' $x$. For simplicity in this introduction, let us assume $\cA$ is deterministic. This means the only randomness in the reduction above is the choices of index tuples made to $\cA$.  We observe that for ``good'' $x$, we do not need to simply blindly sample index bundles at random, but can instead use additional bits to encode ``useful'' queries where $\cA$ is guaranteed to respond. 

Our idea is (roughly) to apply the following sequence of observations:
\begin{enumerate}
    \item If we sample randomly from the set of ``useful'' index bundles/tuples in $[N]^\ell$, we will recover nearly all the bits of $y$ due to the properties of the direct product (i.e.~parallel repetition) with just $q=\Theta(N/\ell)$ queries (with overwhelming probability). If told such a sequence of bundles, the reduction technique above will be very time-efficient.
    \item If we can quickly encode such a sequence of $q=O(N/\ell)$ bundles of $\ell$ indices in $[N]$ using $m\ll\ell$ bits to per bundle, then overall we are only using $m\cdot q \ll \ell \cdot N/\ell = N$ additional bits to describe an approximation of $y$. Thus, if we can do this, the reduction will be efficient in its use of additional description length (in addition to $\sigma_x$), as well as being efficient in time.
    \item Because useful tuples are $N^{-c}$-dense in the set of all queries, we can effectively apply a \emph{coding theorem} to encode these tuples using just $O(\log N)$ bits. In particular, with high probability over a random $\ell\log(N) \times O(\log N)$ dimensional matrix $M$, for many useful tuples $(i_1,\ldots,i_\ell)= M s$ for some $s\in\zo^{O(\log N)}$. Thus, so long as we set $\ell \ggg \log N$, we will arrive at a contradiction.
\end{enumerate}

We do not exactly implement this approach as described. First, while we observe a random matrix suffices\footnote{Actually, we prefer to use a union of such linear maps, as it yields a more compressed description.} to instantiate the approach, we actually replace the linear map above with an explicit $\varepsilon$-Hitting Set Generator (HSG)~\cite{CCC:AIKS16}. This HSG follows from the same assumption as the incompressible function and has a constant size description. Additionally, we use this HSG to hit an entire sequence of ``useful,'' ``covering'' tuples at once (as well as any random bits required for a randomized $\cA$ to be successful on this sequence). Because the event that all the tuples are useful and cover most of $y$ occurs with probability significantly greater than $2^{-N}$ we can obtain a short compression of $y$.

\paragraph{Instantiating the framework.}
We consider two ways of instantiating the succinct argument, with various tradeoffs. (Recall that in addition to a succinct argument, both instantiations additionally rely on the derandomization-style nondeterministic hardness assumption: $\mathsf{E}$ is hard for exponential-size nondeterministic circuits)
\begin{enumerate}
\item Perhaps the most straightforward way, is using a SNARG for P~\cite{FOCS:ChoJaiJin21,EC:HJKS22}. This way very little additional storage overhead is required to store the proof. On the other hand, SNARGs for P is a relatively strong assumption.
\item On the other end of the spectrum, observe that if the argument is implemented with a nonadaptive version of Kilian's argument system then we only need to assume Collision-Resistant Hashing. However, in this case the prover must store a PCP of the computation used to generate $y$ in order to later argue the Merkle root was computed correctly. The length of this PCP is $\tilde{\Omega}(T)=\tilde{\Omega}(|y|)$, which means the honest prover requires much more space than what soundness guarantees in an adversarial prover: $\alpha|y|$ space.
\end{enumerate}

\begin{theorem}[\textit{Informal:} PoS via SNARGs for $\mathsf{P}$]
	Assume $\mathsf{E}$ is hard for exponential-size nondeterministic circuits, one-way functions, and a semi-universal SNARG for $\mathsf{P}$. 
	
	Then, there is a Proof of Space where the honest prover stores $N$ bits and secure against any cheating prover using $(1-o(1))N$ storage.
\end{theorem}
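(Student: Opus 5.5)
The plan is to instantiate the recipe from the overview directly. We take $f$ from the previous informal theorem: $f:\zo^{\polylog N}\to\zo^N$, computable in polynomial time $T$. The assumption that $\mathsf{E}$ is hard for nondeterministic circuits gives, for any fixed polynomial $t$, that $f(x)$ is $\rho$-far from $D(\zo^S)$ for every non-uniform time-$t$ decompressor $D$, except with negligible probability over $x$. Here $S=N-O(\rho N\log^2(1/\rho))-\polylog N$. We then let $\rho=\rho(N)\to 0$ slowly, so that $S=(1-o(1))N$.

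\textbf{Construction and completeness.} In initialization the verifier sends $x$, a hash key, and the CRS of the semi-universal SNARG. The prover stores $y=f(x)$, the Merkle tree of $y$ (using one-way-function-based second-preimage-resistant hashing, as in the overview), and a proof $\pi$ that ``$r=h^*(f(x))$.'' The Merkle tree plus $\pi$ costs $o(N)$ extra bits, or $O(N)$ bits that can be amortized by hashing blocks, so the honest prover stores $(1+o(1))N$ bits. By rescaling, this is ``$N$.'' In the challenge phase the verifier sends $\ell$ indices, with $\ell\gg\log N$. The prover answers with $r$, $\pi$, and the openings, all within a time bound $t'$. Completeness is immediate from SNARG completeness.

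\textbf{Soundness.} Suppose a cheating prover $\cA$ keeps a state $\sigma_x$ of size at most $S-o(N)$ and is accepted with probability $N^{-c}$ for infinitely many $N$. By Markov, an $N^{-c}/2$ fraction of $x$ are good. SNARG soundness, together with second-preimage resistance for the fixed honest tree of $f(x)$, implies that except with negligible probability every accepting transcript opens bits consistent with $f(x)$; otherwise a polynomial-time reduction, which computes $f(x)$ itself in time $T$, breaks one of these primitives.

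We then build a decompressor $D$ running in time $\poly(t')$. Its description consists of $\sigma_x$, the code of $\cA$, and a seed $s$ of $O(\log N)$ bits for each of $q=O(N/\ell)$ bundles. $D$ expands each seed through the explicit HSG (which follows from the same assumption) into index tuples and coins for $\cA$. It runs $\cA$ on these, collects the opened bits, and fills in the unopened bits arbitrarily.

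A direct-product and covering argument then shows the following. A random sequence of $q$ tuples drawn from the useful tuples covers all but a $\rho$ fraction of $[N]$ with probability $1-2^{-\Omega(N)}$. The event ``all tuples useful and covering'' has density at least $N^{-cq}\gg 2^{-N}$. This event is testable by a small co-nondeterministic or deterministic circuit given $x$ and $\sigma_x$, so the HSG hits it. The output of $D$ is therefore $\rho$-close to $f(x)$, and the total description length is $|\sigma_x|+O(q\log N)=|\sigma_x|+O(N\log N/\ell)$. This is at most $S$ once $\ell\ge\log^2 N$. Finally, we fold the advice into the non-uniform $D$ and choose $t\ge\poly(t')$. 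The result contradicts incompressibility for a non-negligible set of good $x$.

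\textbf{Main obstacle.} The hardest step is the HSG-hitting step. We must show that the ``useful covering sequence'' predicate can be evaluated by a circuit within the HSG's fooling class, while the bundle seeds remain $O(\log N)$ bits each. A single seed for the whole sequence does not suffice, because it would have to hit a density-$N^{-cq}$ event. My first attempt would be to hit each bundle separately, using density $N^{-c}$ per tuple. The covering property would then come from a greedy argument: condition on the already-covered coordinates and require each new hit tuple to contribute fresh coordinates, where the predicate testing this is itself efficiently checkable given the current partial $y$.
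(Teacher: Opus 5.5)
Your construction and outer soundness argument follow the paper (Construction~\ref{con:fromarg}, Theorem~\ref{thm:snarg-pos}, Corollary~\ref{cor:snarg-near-tight}):
\begin{itemize}
\item the root is sent only in the execution phase, together with the SNARG;
\item argument soundness pins the root to the honest Merkle root of $f(x)$;
\item $2$-PLOSC security plus a union bound over the $k$ positions makes accepting runs correct;
\item Lemma~\ref{lem:density} gives an inverse-polynomially dense set of good $x$;
\item an HSG-based decompressor then contradicts approximate incompressibility.
\end{itemize}

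You diverge at the step you call the main obstacle, and that obstacle is not real. Lemma~\ref{lem:hsg} has seed length $\log(1/\varepsilon)+c\log n$ for \emph{every} $\varepsilon$. So hitting a density-$N^{-cq}$ event with one seed costs $cq\log N+O(\log N)=O(q\log N)$ bits. That is exactly the budget in your own soundness paragraph, which already states the correct argument.

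This is what Lemma~\ref{lem:extraction} does. A single HSG call with error $2^{-m}p^*/2$, where $p^*=(\gamma/4)^{O(\alpha)}$, hits the whole set $\great_{x,\vy}$ of tuples $(\sigma,\vh,\vi_1,\ldots,\vi_\alpha,r_1,\ldots,r_\alpha)$. The state, the hash key and all of $\cA_1$'s coins are absorbed into one seed of length $m+O(\alpha\log(1/\gamma)+\log(N+t+\ell_h+m))$. This also settles the choice of $\vh$ (and the CRS), which your description leaves implicit.

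In either version, make two things explicit:
\begin{itemize}
\item The tested circuit must check \emph{correctness} of the opened symbols against a hardwired $\vy$, not merely acceptance. $2$-PLOSC binding says nothing about the specific non-random executions an HSG selects.
\item $\vy$ must be hardwired rather than recomputed from $x$. Otherwise the circuit size, and hence the HSG's evaluation time and the decompression time, would depend on $T$.
\end{itemize}

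Your per-bundle greedy fallback is a genuinely different route and can be made to work. As sketched, though, it lacks its quantitative core. Requiring only that each hit tuple contribute \emph{some} fresh coordinate could take $\Theta(N)$ rounds and $\Theta(N\log N)$ bits.

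What you need is the following. While a $u\ge\beta$ fraction of $[N]$ is uncovered, all but an $e^{-\Omega(ku)}$ fraction of tuples hit $\Omega(ku)$ distinct uncovered coordinates (Chernoff under negative association of bin occupancies). Since $k\beta=\Theta(\log^{3/2}N)$, this exceptional fraction is $N^{-\omega(1)}$, far below the success density. So an inverse-polynomial fraction of (tuple, coins) pairs are correct, accepting, and make this progress. Each round then shrinks the uncovered set by a $(1-\Omega(k/N))$ factor. Hence $O((N/k)\log(1/\beta))$ rounds of $O(\log N)$-bit seeds suffice, matching the $O(N\log\log N/\log N)$ overhead of Lemma~\ref{lem:extraction}.

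The trade-off is this. Your route needs only inverse-polynomial-error HSGs for deterministic circuits, which are obtainable from weaker, deterministic hardness, though the nondeterministic assumption is still needed for $f$. It pays for this with an adaptive greedy analysis. The paper's route needs the optimal $\log(1/\varepsilon)$ dependence at exponentially small $\varepsilon$, which Lemma~\ref{lem:hsg} supplies. In exchange it reduces to one application of Lemma~\ref{lem:covering}.
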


\begin{theorem}[\textit{Informal:} PoS via CRHF]
	Assume $\mathsf{E}$ is hard for exponential-size nondeterministic circuits and collision-resistant hash functions.
	
	Then, there is a constant $c$ such that there is a Proof of Space where the honest prover stores $N$ bits and secure against any cheating prover using $N^{1/c}$ storage.
\end{theorem}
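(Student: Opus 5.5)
The plan is to instantiate the framework sketched above with Kilian's argument system built from collision-resistant hashing. The Verifier sends a random seed $x\gets\zo^k$ with $k=\polylog N$ and a hash key $h$. The Prover computes $y=f(x)\in\zo^N$, where $f$ is the explicit approximately-incompressible function from the first informal theorem. It also writes down the computation of $y$ and its Merkle root $r=h^*(y)$, encodes it as a PCP $\Pi$ of length $\tilde O(T)=\poly(N)$, and stores $y$, $\Pi$ and the Merkle tree of $\Pi$.

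Choosing $N$ to be a large polynomial in the security parameter cannot make the honest storage $\poly(N)$ comparable to $N$. So I would rename parameters. Let the honest storage be $N'=|y|+|\Pi|\le N^{c_0}$, where $c_0$ is the exponent of the PCP and of the running time of $f$. Soundness against $\alpha N$ storage with constant $\alpha<1$ then gives soundness against $(N')^{1/c}$ storage for $c=c_0+1$. This is exactly the weak form the statement asks for.

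In the verification phase the Verifier sends random indices $i_1,\ldots,i_\ell$ with $\ell\gg\log N$, together with Kilian PCP query randomness. The Prover answers with $r$, the Merkle openings of $y_{i_j}$, and the Kilian openings of $\Pi$, all within time $t'$. Since Kilian's protocol is public-coin and the prover's work after preprocessing is just Merkle openings, the response time is $\polylog$, so we can take $t'\ll t$.

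For soundness, assume a cheating $\cA$ outputs a state $\sigma_x$ of size $S'\le N^{1/c}\ll N$ and succeeds with probability $N^{-c_1}$. I would proceed in three steps.

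\emph{Step 1.} By Markov, an inverse-polynomial fraction of the $x$ are ``good''. Collision resistance, through the Kilian and Merkle binding together with PCP soundness, implies that for all but a negligible fraction of good $x$ every accepting transcript opens the true bits of $f(x)$. If this failed, we would obtain a polynomial-time collision finder by rewinding, in the standard Kilian extraction.

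\emph{Step 2.} This is the compression argument from the overview. Fix an explicit HSG $H$ from the same nondeterministic hardness assumption. Using $O(\log N)$ bits per bundle, a seed of $H$ describes a sequence of $q=\Theta(N/\ell)$ index bundles, together with $\cA$'s randomness, on which $\cA$ answers correctly. By the direct-product and covering property these bundles reveal a $(1-\rho)$ fraction of the coordinates of $y$. The HSG hits this event because the event has density far above $2^{-N}$ and is decidable by a nondeterministic or co-nondeterministic procedure of size $\poly(N)$: guess or verify the accepting transcripts. The result is a description of length $S'+O(q\log N)+O(1)\le N-\omega(\rho N\log^2(1/\rho))$ that a time-$t$ machine with nonuniform advice (containing $h$ and the code of $\cA$, fixed via Adleman's trick) expands into a string $\rho$-close to $f(x)$.

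\emph{Step 3.} This contradicts the incompressibility theorem whenever $t\ge q\cdot t'\cdot\poly\log N$. We therefore fix $t$ to be this polynomial first and then pick $f$ accordingly.

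The main obstacle is Step 2. Two things have to hold together. First, the ``useful covering sequence'' event must be recognizable by a circuit class that the HSG fools. Second, the binding guarantee from collision resistance must hold simultaneously for all $q$ queries. That guarantee only holds against efficient adversaries, while the extractor is nonuniform with negligible failure probability. To handle this, I would first try a hybrid argument: define the good set using true openings, and charge every deviation to the collision finder, whose running time is $\poly(N)$ times that of $\cA$.
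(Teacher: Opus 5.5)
\textbf{Overall route.} Your route is the paper's. It uses the Kilian-based construction (Construction~\ref{con:fromKilian}) and a cryptographic step making accepting executions essentially correct. It then compresses an approximation of $y$ via Lemma~\ref{lem:density}, Lemma~\ref{lem:covering} and the HSG of Lemma~\ref{lem:hsg} (i.e.\ Lemma~\ref{lem:extraction}). Finally, it reparametrises by $N_{\mathsf{hon}}=n+T_f^{1+o(1)}$ to obtain soundness against $N_{\mathsf{hon}}^{1/a-o(1)}$ storage (Corollary~\ref{cor:kilian-space}).

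\textbf{The gap: message ordering.} The protocol as you specify it is not sound.
\begin{itemize}
\item Nothing is sent during initialization.
\item The PCP randomness $\rho$ goes out in the same message as the indices.
\item Only afterwards does the prover reveal $r$ and the openings of $\Pi$, and hence the root $c_2$ of $\Pi$'s tree.
\end{itemize}
Kilian soundness requires the instance $u$ (which contains $r$) and the commitment $c_2$ to be fixed \emph{before} $\rho$ is revealed. With your ordering the committed string may depend on $\rho$. PCP soundness is a statement about a fixed string and random $\rho$, so it gives no guarantee here: the prover need only make the few positions in $\cQ(\rho)$ pass the local test $V^{[\cQ,\ans]}(u;\rho)$ for a root of its choosing. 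Your Step~1 rewinding argument also yields no collision, because rewinding $\rho$ rewinds the commitment too.

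The paper fixes this with a two-round execution phase. The verifier sends the index tuple. The prover replies with $c_1$ (your $r$), $c_2$, and the openings of $y$. Only then does the verifier send $\rho$. The Kilian reduction runs the PoS up to that point, reads the instance off $c_1$, forwards $c_2$ as the commitment, and forwards the external challenge. Sending $c_1,c_2$ during initialization would also work.

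\textbf{Your ``main obstacle''.} It disappears once cryptography is confined to a \emph{single} execution.
\begin{itemize}
\item The paper bounds $\Pr[\text{accept}\wedge u\text{ false}]\le\varepsilon_{\mathsf{kil}}$.
\item When $u$ is true, $c_1$ is the honest root. Then $2$-PLOSC security of $\MERKLE_1$, with a union bound over the $k$ openings, shows that \emph{correct and accepting} executions have probability at least $\gamma=\theta_\rho-k\varepsilon_{\mathsf{mt}}$.
\item This bound is applied separately for each fixed $x$, which is legitimate because the definition quantifies over every $\vy$ and every nonuniform adversary.
\end{itemize}
Your claim that, for almost all good $x$, ``every accepting transcript'' is correct is neither what collision resistance gives nor what is needed.

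After this step everything is an existence argument. Consider the pairs $(\sigma,\omega)$ with $\omega=(\vh,\vi_1,\ldots,\vi_\alpha,r_1,\ldots,r_\alpha)$ on which all $\alpha$ runs are correct, accepting, and jointly $(1-\beta)$-covering. This set has density at least $2^{-m}(\gamma/4)^{2\alpha+1}$. It is recognised by a \emph{deterministic} circuit $C_{x,\vy}$ with $x,\vy$ hardwired, which checks correctness against $\vy$ directly. Its size is $\poly(\tau\alpha+N+\ell_h+m)$, independent of $T_f$. So the plain HSG of Lemma~\ref{lem:hsg} suffices. You need no nondeterminism, and no hybrid or collision-finder argument across the $\alpha$ extraction runs, since the extractor is never handed to a cryptographic reduction.

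\textbf{Minor point on time.} The decompressor must evaluate the HSG, so $t$ must be $\poly(\tau\alpha+N+\ell_h+m)$ rather than $q\cdot t'\cdot\polylog N$. This is harmless, because $t$ is fixed before $f$.
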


\paragraph{Going forward.}
We note that Goldwasser, Kalai, and Rothblum's classic proof system~\cite{STOC:GolKalRot08} additionally allows the prover to effectively condense a proving strategy into a reasonably succinct state in a preprocessing phase (indeed this exact feature has been used in the context of recent work on nearly optimal derandomization~\cite{FOCS:CheTel21}). Moreover, this proof system is unconditional and requires no additional assumptions, unlike those presented above. On the other hand, GKR requires interaction proportional to the depth of the computation. This means that if we instantiate our framework with GKR, the function $f$ must be computable by a low-depth circuit. This is somewhat diametrically opposed to the 2nd phase timing constraint, and computational depth. So, while this would yield a PoS where the only ``cryptographic'' assumption was one-way functions, the resulting proof system would only be sound against non-parallel provers (who would be forced to sequentially evaluate the function with computational depth, but low circuit depth). Additionally, it is not clear how to construct such a function from the generic derandomization assumptions described above and one would need a more fine-grained tailor made assumption (similar to those used in optimal derandomization).

Whether other doubly-efficient proof systems can be adapted to this setting is an important question that could potentially immediately reduce the assumption footprint.

A second problem, which we do not address here, is composition. In particular, imagine a single cheating prover is interacting with many verifiers. It is desirable that the prover should be required to allocate storage that scales directly in proportion with number of such interactions, i.e.~the prover should \emph{not} be able to amortize the storage costs. To achieve this using our framework it suffices to construct a function $f$ whose computational depth does not amortize ($k$ instances are $k$-times as incompressible). 

Our approach here immediately yields this for small $k$ simply because incompressibility follows from $f$ being a PRG and the direct product of PRGs remains pseudorandom via a hybrid argument. However, because these are not cryptographic PRGs, we can only tolerate distinguishers of bounded size and hence only so many hybrids. Giving a construction that remains incompressible for any polynomial number of parallel instances would yield a PoS which much more desirable security properties.

Finally, while prior pebbling-based ROM approaches to PoS have tended to additionally yield related objects such as Memory-Hard Functions, it is not immediately clear how to adapt the ideas here to get explicit constructions of such objects from relatively simple complexity-theoretic assumptions. New ideas are needed to address this important open question.

%% file: prelim.tex
\section{Preliminaries}

We use $(\out_P, \out_V)\from \langle P(\inp_P), V(\inp_V)\rangle(\inp)$ to denote an interactive protocol between two parties $P$ and $V$ on a shared input $\inp$, local inputs $\inp_P$ and $\inp_V$, with local outputs $\out_P, \out_V$.

\begin{definition}[Proofs of Space~\cite{C:DFKP15}]
    Let $\params=(\lambda, N)$ be the security and space parameters. A Proof of Space (PoS) is an interactive protocol between two parties $P=(P_1, P_2)$ and $V=(V_1, V_2)$, with the following two phases.
    \begin{itemize}
        \item \textbf{Initialization}: $(S, \Phi) \from \langle P_1, V_1\rangle(\params)$. The prover $P_1$ produces a storage advice $S$ of size $N$, and the verifier $V_1$ may output a special symbol $\Phi=\bot$ to abort when interacting with a cheating prover.
        \item \textbf{Execution}: $(\varnothing, 1/0)\from \langle P_2(y), V_2(\Phi) \rangle(\params)$. The verifier outputs a single bit $1/0$ representing either accept or reject.
    \end{itemize}
    
    We require the following properties for a Proof of space.
    \begin{itemize}
        \item \textbf{Completeness}: For any honest prover $P$, \[
        \Pr\left[ \out=1 : \begin{matrix}
            (S, \Phi)\from \langle P_1, V_1 \rangle(\params),\\
            (\varnothing, \out)\from \langle P_2(S), V_2(\Phi) \rangle(\params)
        \end{matrix} \right]=1
        \]
        \item \textbf{$(s,t,\varepsilon)$-Soundness}: For all adversarial provers $\cA=(\cA_1, \cA_2)$ with the size of the advice $|S|\leq s$, and $\cA_2$ using at most $t$ time, we have 
        \[
        \Pr\left[ \out=1:\begin{matrix}
            (S, \Phi)\from \langle \cA_1, V_1 \rangle(\params),\\
            (\varnothing, \out)\from \langle \cA_2(S), V_2(\Phi) \rangle(\params)
        \end{matrix}\right]\leq \varepsilon.
        \]
        \item \textbf{Efficiency}: The prover $P_2$ and the verifiers $V_1$, $V_2$ should be efficient, more specifically, should run in time $\polylog(N)$ and $\poly(\lambda)$. $P_1$ for the Initialization phase should run in time $\poly(N)$. 
    \end{itemize}
\end{definition}

\subsection{Vector Commitment through Merkle Tree}

One building block we will be using for both our constructions is a Vector Commitment scheme instantiated through a Merkle Tree hash. Specifically, we will follow the syntax by~\cite{STOC:AmiRot23}. We reproduce the definition below, with slight adaptations to present the scheme as a tuple of algorithms $(\Commit, \Open, \Verify)$.

\begin{definition}[Merkle Tree]
    Fix $N, \nin, \nout \in \mathbb{N}$ and let $\Sigma$ be a finite alphabet. Given a vector of inputs $\mathbf{y} = (y_1, \dots, y_N) \in \Sigma^N$ and hash functions $h_1, \dots, h_d: \Sigma^{\nin}\to \Sigma^{\nout}$ with $d = \log_{\nin/\nout}(N)$, the Merkle Tree $\Tau=\Tau(\mathbf{y},h_1,\dots, h_d)$ is defined as follows:
    \begin{itemize}
        \item The tree has $d + 1$ layers, indexed from $1$ (root) to $d+1$ (leaves).
        \item There are $L= \lceil N/\nin \rceil$ leaf node \textbf{blocks}. The $j$-th leaf node block, denoted $c_{d+1,j}$, contains the node values $y_{(j-1)\nin + 1}, \dots, y_{j \cdot \nin}$.
        \item For $i \in [d]$, the $i$-th layer is denoted by $w_i$ and is created by applying $h_i$ to each block of the $(i+1)$-st layer: layer $i$ contains $h_i(c_{i+1, j})$ for $j \in \left[ |w_{i+1}| / \nin \right]$, and is further divided into $\nin$-blocks $c_{i,j}$ for $j \in \left[ |w_{i}|/ \nin \right]$ where $|w_i|= \frac{\nout}{\nin} |w_{i+1}|$.
        \item The root is $c = h_1(c_{2,1}) \in \Sigma^{\nout}$.
    \end{itemize}
\end{definition}

Kindly note that the $c_{i,j}$'s are not the nodes of the Merkle Tree, but rather the blocks of the nodes, each containing $\nin$ nodes. For more details, see~\cite{STOC:AmiRot23}.

\begin{definition}[Valid Path]
    Let $N, \nin, \nout, \Sigma, L, d$ and $h_1, \dots, h_d$ be as in the Merkle Tree definition above. Given a root $c \in \Sigma^{\nout}$ and a leaf block index $j \in [L]$, a path $\mathbf{p} = (p_1, \dots, p_{d+1}, c_2, \dots, c_{d+1})$ is called a \emph{valid path} from leaf block $j$ to root $c$ if it satisfies:
    \begin{itemize}
        \item The path starts at the root: $p_1 = c$ ;
        \item $\forall i \in [d]$: $c_{i+1} \in \Sigma^{\nin}$ and contains the block $p_{i+1}$;
        \item $\forall i \in [d]$: $p_i = h_i(c_{i+1})$.
    \end{itemize}
\end{definition}

\begin{definition}[Vector Commitment]
    \label{def:vc}
    Let $\cH$ be a family of hash functions. A Vector Commitment scheme from Merkle tree is a tuple of deterministic algorithms $(\Commit, \Open, \Verify)$:
    \begin{itemize}
        \item $\Commit(\params, \mathbf{h}=(h_1, \dots, h_d), \mathbf{y}=(y_1, \dots, y_N) )\to (c, \st)$: The committing algorithm takes as input the parameters $\params$, a vector of hash functions $h_1, \dots, h_d\in \cH: \Sigma^{\nin}\to \Sigma^{\nout}$, and a vector of inputs $y_1, \dots, y_N\in \Sigma$. It computes the Merkle Tree $\Tau=\Tau(\mathbf{y},h_1,\dots, h_d)$, and outputs the root $c\in \Sigma^\nout$ along with a state $\st=(\mathbf{y}, h_1, \dots, h_d)$.
        \item $\Open(\params, \st, c, i) \to \mathbf{p}$: The opening algorithm takes as input the parameters $\params$, the state $\st$, the commitment $c$, and an index $i\in [N]$. It outputs the opening $\mathbf{p} = (p_1, \dots, p_{d+1}, c_2, \allowbreak \dots, c_{d+1})$ corresponding to the Merkle path from leaf block $i'=\lceil\frac{i}{\nin}\rceil$ to root $c$.
        \item $\Verify(\params, c, i, y_i, \mathbf{p}) \to 1/0$: The verification algorithm takes as input the parameters $\params$, the commitment $c$, an index $i\in [N]$, the claimed value $y_i\in \Sigma$, and a proof $\mathbf{p}$, and outputs accept if and only if $\mathbf{p}$ is a valid Merkle path from the root $c$ to leaf block $i'=\lceil\frac{i}{\nin}\rceil$ and $y_i$ is the $i$-th symbol of the committed vector.
    \end{itemize}
\end{definition}

Instantiating the hash function family $\cH$ with a family of Collision Resistant Hash Functions (CRHFs) yields a stronger \emph{position binding} property, which we need for our second construction that uses a Kilian-style protocol.

\begin{definition}[Position Binding]
    We say the Vector Commitment scheme is $\varepsilon$-position-binding if for a sufficiently large $\nin$, uniformly sampled hash functions $\mathbf{h}\from \cH^d$, and for all non-uniform PPT adversaries $\cA=(\cA_1, \cA_2)$,
    \[
    \Pr\left[  \Verify(\params, c, i, y^*, \mathbf{p}^*)=1 \ \land\  y^* \neq y_{i} \middle| \begin{matrix}  \mathbf{y}\from \cA_1(\params,\mathbf{h})\\(c, \st) \from \Commit(\params, \mathbf{h}, \mathbf{y}) \\ (y^*, \mathbf{p}^*, i) \from \cA_2(\params, \mathbf{h}, \mathbf{y}, c, \st) \end{matrix} \right] < \varepsilon,
    \]
    where probability is taken over the distribution of the hash functions and the random coins of the adversary.
\end{definition}

\begin{lemma}[\cite{C:Merkle87}]
    Assuming Collision Resistant Hash Functions, there exists a position-binding Vector Commitment scheme.
\end{lemma}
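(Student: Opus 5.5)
The approach is the standard reduction from position binding to collision resistance of the underlying hash family, using the Merkle tree and vector-commitment syntax defined above. Instantiate $\cH$ with a CRHF family mapping $\Sigma^{\nin}\to\Sigma^{\nout}$ with $\nout<\nin$. $\Commit$ builds the tree $\Tau(\mathbf{y},h_1,\dots,h_d)$ layer by layer, and $\Open$ outputs the sibling blocks along the path to leaf block $\lceil i/\nin\rceil$. $\Verify$ recomputes $p_i=h_i(c_{i+1})$ up to the root and checks that $y_i$ sits at its position in $c_{d+1}$. Completeness, namely that honest openings verify, is immediate from the definitions.

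For position binding, suppose a non-uniform PPT $\cA=(\cA_1,\cA_2)$ outputs $(y^*,\mathbf{p}^*,i)$ with $\Verify=1$ and $y^*\neq y_i$ with probability $\varepsilon$. Let $(p^*_1,\dots,p^*_{d+1},c^*_2,\dots,c^*_{d+1})$ be the forged path, and let $(p_1,\dots,p_{d+1},c_2,\dots,c_{d+1})$ be the honest path obtained by running $\Open(\params,\st,c,i)$, which the reduction can compute because it holds $\mathbf{y}$. Both paths share the root, since $p_1=p^*_1=c$. At the leaf layer they differ, because $c^*_{d+1}$ contains $y^*$ in the slot where $c_{d+1}$ contains $y_i$.

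Let $i^\star$ be the smallest layer index with $c^*_{i^\star+1}\neq c_{i^\star+1}$. I will show that $p^*_{i^\star}=p_{i^\star}$. If $i^\star=1$ this holds because both paths start at the root. If $i^\star>1$, then $c^*_{i^\star}=c_{i^\star}$ by minimality, and both $p_{i^\star}$ and $p^*_{i^\star}$ are the block at the same position inside that common block. That position is determined by the leaf index $i$ alone. Since $h_{i^\star}(c_{i^\star+1})=p_{i^\star}=p^*_{i^\star}=h_{i^\star}(c^*_{i^\star+1})$ with $c_{i^\star+1}\neq c^*_{i^\star+1}$, the pair is a collision for $h_{i^\star}$.

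The main point needing care is this positional bookkeeping. The path encoding must fix where each $p_{i+1}$ lies inside $c_{i+1}$ as a deterministic function of $i$, so the verifier checks a specific position and not just mere containment. I will define $\Verify$ that way, following~\cite{STOC:AmiRot23}.

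The reduction to a single-hash CRHF challenge works as follows. Guess $j\gets[d]$ uniformly, embed the challenge key as $h_j$, and sample the other $d-1$ keys honestly. Run $\cA$, locate $i^\star$, and output the collision if $i^\star=j$. This succeeds with probability at least $\varepsilon/d$, and $d=\log_{\nin/\nout}N=O(\log N)$. Hence $\varepsilon$ is negligible whenever the CRHF is secure, which gives a position-binding Vector Commitment scheme.
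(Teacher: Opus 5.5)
Your proof is correct and is the standard argument behind this lemma. The paper gives no proof of its own and simply attributes it to Merkle, whose argument is the same reduction: the first layer where the forged and honest authentication paths diverge yields a collision, and your $1/d$ loss from guessing which $h_j$ to embed is harmless since $d=O(\log N)$. Your two points of care also fit the paper's formalism: the verifier must check positions rather than mere containment, as the phrase ``valid Merkle path \ldots\ to leaf block $i'$'' implies, and your reduction may compute the honest path because the paper's position-binding game has the challenger commit honestly to the adversary's $\mathbf{y}$.
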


Alternatively, we can instantiate the hash family $\cH$ with universal one-way hash functions. This would not give us full position binding, but instead, gives us 2nd pre-image resistance, which is sufficient for our first construction based on SNARGs for $\mathsf{P}$.

\begin{definition}[2-PLOSC~\cite{STOC:AmiRot23}]
    \label{def:2plosc}
    We say the Vector Commitment scheme is an $\varepsilon$-secure Second-Preimage Locally Openable Succinct Commitment (2-PLOSC) if for a sufficiently large $\nin$, uniformly sampled hash functions $\mathbf{h}\from \cH^d$, and for all inputs $\mathbf{y}$, and non-uniform PPT adversaries $\cA$,
    \[
    \Pr\left[  \Verify(\params, c, i, y^*, \mathbf{p}^*)=1 \ \land\  y^* \neq y_{i} \middle| \begin{matrix} (c, \st) \from \Commit(\params, \mathbf{h}, \mathbf{y}) \\ (y^*, \mathbf{p}^*, i) \from \cA(\params, \mathbf{h}, \mathbf{y}, c, \st) \end{matrix} \right] < \varepsilon,
    \]
    where probability is taken over the distribution of the hash functions and the random coins of the adversary.
\end{definition}

\begin{lemma}[\cite{STOC:AmiRot23}]
    \label{lem:2plosc}
    Assuming Universal One-Way Hash Functions (UOWHFs), there exists a 2-PLOSC.
\end{lemma}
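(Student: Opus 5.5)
The plan is to reduce the $2$-PLOSC property to the target-collision resistance of the UOWHF family $\cH$. We use independent hash functions $h_1,\dots,h_d$, one per layer, with $d=\log_{\nin/\nout}(N)$.

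The key structural observation is about dependence. The honest blocks $c_{i+1,j}$ of layer $i+1$ depend only on $\mathbf{y}$ and on $h_{i+1},\dots,h_d$. They do not depend on $h_i$. Since $\mathbf{y}$ is fixed before the hash functions are sampled (the definition quantifies over all inputs $\mathbf{y}$), each honest block at layer $i+1$ is a legitimate UOWHF target for $h_i$: it is chosen before the key $h_i$ is revealed.

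First, a combinatorial step. Suppose $\cA$ outputs $(y^*,\mathbf{p}^*,i)$ with $\Verify(\params,c,i,y^*,\mathbf{p}^*)=1$ and $y^*\neq y_i$, where $\mathbf{p}^*=(p_1,\dots,p_{d+1},c_2,\dots,c_{d+1})$. Compare $\mathbf{p}^*$ with the honest path $\mathbf{p}=\Open(\params,\st,c,i)$ to leaf block $i'=\lceil i/\nin\rceil$.
\begin{itemize}
\item At the top the two paths agree, since $p_1=c$ is the honest root.
\item At the bottom they disagree, since $c_{d+1}$ contains $y^*\neq y_i$ in the position of $y_i$, so it is not the honest leaf block $c_{d+1,i'}$.
\end{itemize}
Hence there is a layer $k\in[d]$ such that $p_k$ equals the honest node value on the path to $i'$, but $c_{k+1}$ differs from the honest block $c^{\mathrm{hon}}_{k+1}$ on that path. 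Validity of $\mathbf{p}^*$ gives $h_k(c_{k+1})=p_k=h_k(c^{\mathrm{hon}}_{k+1})$, which is a collision for $h_k$ with a distinct input.

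Second, the reduction $\cB$ against the UOWHF.
\begin{enumerate}
\item Sample a layer $k\gets[d]$ and a block index $j$ at layer $k+1$ uniformly. There are at most $L\le N$ blocks.
\item Sample $h_{k+1},\dots,h_d\gets\cH$ itself and compute layer $k+1$ of the tree for $\mathbf{y}$.
\item Output the block $c_{k+1,j}$ as the target.
\item Receive the challenge key $h_k$, sample $h_1,\dots,h_{k-1}$, and compute $(c,\st)=\Commit(\params,\mathbf{h},\mathbf{y})$.
\item Run $\cA(\params,\mathbf{h},\mathbf{y},c,\st)$.
\item If the divergence layer is $k$ and the honest block there is $c_{k+1,j}$, output the colliding $c_{k+1}$.
\end{enumerate}
The joint distribution of $(\mathbf{h},c,\st)$ that $\cA$ sees is exactly the real one. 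Moreover, the guesses $(k,j)$ are independent of $\cA$'s view. So if $\cA$ succeeds with probability $\varepsilon$, then $\cB$ wins with probability at least $\varepsilon/(dN)$. The guess of $j$ can even be avoided by letting $j$ be determined by $i'$ and guessing $i$.

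Since $d\cdot N=\poly(\lambda)$ and UOWHF security is negligible against non-uniform PPT adversaries, $\varepsilon$ is negligible. The non-uniformity of $\cA$ and the fixed $\mathbf{y}$ are simply carried as advice into $\cB$.

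The main obstacle is ensuring the target is chosen independently of the key. This is precisely why the construction uses a fresh hash function per layer rather than a single $h$. A single $h$ would make the lower-layer blocks depend on the key and would require full collision resistance. Once that is set up, what remains is checking that the divergence layer $k$ is well defined. That only uses that the honest and malicious paths share the root and differ at the leaf.
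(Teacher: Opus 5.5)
Your proposal is correct and matches the argument behind the cited result of Amit and Rothblum; the paper itself gives no proof beyond the citation. In that argument, independent per-layer keys make every honest block of layer $k+1$ a legitimate target fixed before $h_k$ is sampled, and the reduction guesses the divergence layer and block, losing a factor of at most $dN$. One thing your divergence step uses without saying so: validity of a path to leaf block $i'$ fixes the position of $p_{k+1}$ inside $c_{k+1}$, and that is what lets agreement of the blocks at one layer carry over to agreement of the node values at the next.
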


And importantly, one way functions imply UOWHFs.

\begin{lemma}[\cite{STOC:Rompel90, eprint:KatKoo05}]
    The existence of one-way functions implies the existence of universal one-way hash functions.
\end{lemma}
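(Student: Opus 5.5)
The plan is to follow the now-standard modular route to Rompel's theorem, in the simplified form of Katz--Koo (and in spirit the later inaccessible-entropy treatment). It has three stages:
\begin{enumerate}
    \item build, from an arbitrary one-way function $f:\zo^n\to\zo^n$, a shrinking keyed family that is only \emph{weakly} target-collision-resistant;
    \item amplify this to full (negligible-error) universal one-way security;
    \item extend the domain so that the family compresses arbitrary polynomial-length inputs by arbitrarily many bits.
\end{enumerate}
Throughout, security is in the target-collision sense: the adversary commits to $x$ before seeing the key $k$, then must output $x'\neq x$ with $h_k(x)=h_k(x')$. The reductions below are uniform or non-uniform as needed, which matches the non-uniform PPT adversaries in Definition~\ref{def:2plosc}.

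For the first stage, fix a guess $j\in\{0,\dots,n\}$ for $\log|f^{-1}(f(x))|$. Define
\[
h_{k}(x) = g_{k}(f(x)),
\]
where $g_k$ is drawn from a pairwise-independent family mapping $n$ bits to $n-j+O(\log n)$ bits, with a small slack so that the map still shrinks by at least one bit. Colliding pairs fall into two kinds:
\begin{itemize}
    \item \emph{Type-1 collisions} satisfy $f(x')=f(x)$. A target-collision finder of this kind yields a different preimage of $f(x)$ for a random $x$. I would turn this into an inverter for $f$ via the standard trick: plant the challenge $y=f(x)$ and use the fact that whenever the preimage set is large, a random preimage differs from the committed one with noticeable probability.
    \item \emph{Type-2 collisions} satisfy $g_k(f(x'))=g_k(f(x))$ with $f(x')\neq f(x)$. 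Pairwise independence bounds the number of images colliding with $f(x)$. When $j$ is close to the true preimage size, these images have total preimage mass small enough that hitting one amounts to inverting $f$ on a suitably sampled point. Here I would use a "hashing-to-an-image" argument, with keys chosen after the target.
\end{itemize}
The output of this stage is a family, one per guess $j$, such that for the correct $j$ (which holds for a $1/\poly(n)$ fraction of $x$) collisions are hard. That is, the family is a UOWHF with only inverse-polynomial security.

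For the second stage, I would combine over guesses and amplify. First, concatenate the hashes $h^{(j)}_{k_j}(x_j)$ for all $j$ on independent inputs, so that some coordinate is always "correct" on a noticeable fraction of inputs. Then apply parallel repetition together with composition: first concatenation to expand the input and output, then a Merkle--Damg\aa rd/Naor--Yung style tree with a fresh key at each level to restore shrinkage. Target collision resistance composes under sequential application with independent keys: a collision in the composition yields a collision at some level, on a target that the reduction can compute itself before the key of that level is revealed. Security amplifies by a direct-product argument: finding a collision in the repeated function requires a collision in every "hard" coordinate, and I would prove this via a uniform hardcore/Yao-style amplification for collision-finding, applied as a search problem. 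The same tree composition then gives the final domain extension.

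The main obstacle is the first stage, and specifically the type-2 analysis under an unknown and non-regular preimage structure of $f$. The slack in the output length of $g_k$ must simultaneously keep the map shrinking and make type-2 collisions rare. The averaging over $j$ then only gives weak, non-uniform-in-$x$ hardness. I would first try to handle this by reducing to the case of an "almost-regular" one-way function via the standard Goldreich--Krawczyk--Luby-style regularization with random guesses of $j$, accepting a polynomial loss, and only then invoke the amplification of the second stage.
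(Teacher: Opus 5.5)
The paper does not prove this lemma; it simply cites Rompel and Katz--Koo (the latter is a complete, corrected write-up of Rompel's proof rather than a simplification). Your outline has a genuine gap, and it is at the crux, Stage~1.

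\emph{Every} family of the form $h_k=g_k\circ f$, whatever the output length of $g_k$, is broken outright for some one-way functions. Take any one-way $f'$ on $n-1$ bits and set $f(b,z)=(0,f'(z))$ with $b\in\{0,1\}$. Then $f$ is one-way, yet for every target $(b,z)$, every key $k$ and every guess $j$, the point $(1-b,z)$ is a type-1 collision.

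The reduction you sketch for type-1 collisions cannot exist. In the target-collision game the target is the adversary's own point, or, after a random shift, a point whose preimage the reduction already holds. A sibling of such a point inverts nothing, and you cannot ``plant'' a challenge $y^*$ as $f(x)$ without already knowing a preimage of $y^*$. The fact that a random preimage differs from the committed one with noticeable probability proves the converse implication: an inverter yields a sibling-finder.

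What is missing is a different source of hardness. One cannot hope that siblings are hard to find, only that they cannot be found with nearly full entropy. For instance, Haitner, Holenstein, Reingold, Vadhan and Wee show that for $F(x,i)=f(x)_{1..i}$, a collision-finder outputting near-uniform siblings could extend a prefix match with $y$ one output bit at a time and thereby invert $f$. This yields only a gap between real and accessible entropy. Roughly speaking, that gap is then amplified and exploited by hashing the input to a length between the two entropies, so that the result both shrinks and has no efficiently findable siblings. In particular the final function depends on $x$ beyond $f(x)$, which is what removes collisions such as $(1-b,z)$.

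Stage~2 also runs in the wrong direction. A collision for $(h^{(0)}(x_0),\dots,h^{(n)}(x_n))$ on independent inputs is any tuple that differs from the target in some coordinate and has the same image in every coordinate. The adversary therefore leaves all coordinates but its favourite one unchanged, so the concatenation is as weak as its weakest coordinate (for example, one with a wrong guess $j$), not as strong as its strongest. The same objection defeats the claimed direct-product amplification: a collision is not required in every hard coordinate.

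What does amplify collision-type security is evaluating all copies on the \emph{same} input under independent keys, but that multiplies the output length. Recovering a shrinking function then needs a base that already compresses by more than the repetition factor. The usual composition argument cannot build that compression from weak components, because its union bound adds per-layer adversary success probabilities that may each be as large as $1-1/\poly(n)$. Independent repetition is useful in the entropy route precisely because entropy gaps are additive, unlike collision resistance.

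So as written the outline establishes neither the weak object nor its amplification. For the purposes of this paper, citing the theorem as the authors do is the right move.
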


\subsection{Non-Interactive (Semi-)Universal Arguments}
For our first construction of Proof of Space, we will use non-interactive (semi-)universal arguments.

\begin{definition}[Universal Language]
    Let $\cL_\cU$ be the language of all tuples $(M, \cf, \cf', T)$ where $M$ is a deterministic Turing machine starting from configuration $\cf$ that ends up in configuration $\cf'$ in $T$ steps. Here, the configuration includes the machine's states and its entire memory. 
\end{definition}

\begin{definition}[Non-Interactive Universal Arguments~\cite{C:BPSS23}]
    An non-interactive argument for $\cL_\cU$ is a tuple of algorithms $\Pi=(\Gen, \Prove, \Verify)$ with the following syntax:
    \begin{itemize}
        \item $\Gen(1^\lambda)\to (\pk, \vk)$: The key generation algorithm takes as input the security parameter $\lambda$ and outputs a prover key $\pk$, and a verifier key $\vk$.
        \item $\Prove(\pk, y)\to \pi$: The prover algorithm takes as input the prover key $\pk$ and an instance $y\in \cL_\cU$, and outputs a proof $\pi$.
        \item $\Verify(\vk, y, \pi)\to 1/0$: The verifier algorithm takes as input the verifier key $\vk$, an instance $y$, and a proof $\pi$, and outputs a single bit representing accept or reject. 
    \end{itemize}
    We require the following properties:
    \begin{itemize}
        \item \textbf{Completeness:} For all $\lambda \in \bbN$ and $y=(M,\cf, \cf', T)\in \cL_\cU$ such that $|y|, T\leq 2^\lambda$, we have \[
        \Pr\left[\Verify(\vk, y, \pi)=1:\begin{matrix}
            (\pk,\vk)\from \Gen(1^\lambda),\\
            \pi\from \Prove(\pk, y)
        \end{matrix}\right]=1.        
        \] 
        \item \textbf{$\varepsilon$-Universal Soundness:} Let $\barT(\lambda)=2^\lambda$. For all $\poly(\lambda)$-sized adversaries $\cA$, we require that for all $\lambda\in \bbN$:
        \[
        \Pr\left[\Verify(\vk, y, \pi)=1 \land y\not \in \cL_M \land T\leq \barT(\lambda):\begin{matrix}
            (\pk,\vk)\from \Gen(1^\lambda),\\
            (y=(M, \cf, \cf', T), \pi)\from\cA(\pk, \vk)
        \end{matrix}\right] \leq \varepsilon.
        \]
        \item \textbf{Efficiency:} We require $\Gen$ to run in time $\poly(\lambda)$, $\Prove$ in time $\poly(\lambda, |y|, T)$, and $\Verify$ in time $\poly(\lambda, |y|)$, and the proof $\pi$ has length $\poly(\lambda)$.
    \end{itemize}
\end{definition}

\begin{remark}
    We can consider non-interactive universal arguments in the Common Reference String (CRS) model, where the prover key $\pk$ and the verifier key $\vk$ can both be computed from the $\crs$. In this setting, we will overload the syntax by omitting the $\Gen$ procedure, and having $\Prove$ and $\Verify$ take $\crs$ as input instead, i.e. $\Prove(\crs, y)\to \pi$ and $\Verify(\crs, y, \pi)\to 1/0$. The security notions remain the same, as in the Soundness experiment, the adversary $\cA$ is allowed access to both $\pk$ and $\vk$, which can be derived from the $\crs$. 
\end{remark}

\begin{definition}[Non-Interactive Semi-Universal Arguments]
    We can also consider non-interactive \emph{semi-universal} arguments, where we replace $\varepsilon$-Universal Soundness with $\varepsilon$-Semi-Universal Soundness, as defined below:

    \begin{itemize}
        \item \textbf{$\varepsilon$-Semi-Universal Soundness:} For every polynomial $\barT=\barT(\lambda)$ and $\poly(\lambda)$-sized adversary $\cA$, we require that for all $\lambda\in \bbN$:
        \[
        \Pr\left[\Verify(\vk, y, \pi)=1 \land y\not \in \cL_M \land T\leq \barT(\lambda):\begin{matrix}
            (\pk,\vk)\from \Gen(1^\lambda),\\
            (y=(M, \cf, \cf', T), \pi)\from\cA(\pk, \vk)
        \end{matrix}\right] \leq \varepsilon.
        \]
    \end{itemize}
\end{definition}

\begin{remark}
    We can further relax the Soundness properties by giving the adversary $\cA$ only the $\pk$, not the $\vk$. In this case, we achieve a weaker notion of soundness, and we say the scheme is \emph{privately verifiable}.
\end{remark}

\subsection{Kilian's Protocol}
\label{sec:kilian-prelim}
For our second construction of Proof of Space, we will use interactive arguments, e.g.\ Kilian's protocol.

Kilian's protocol is based on the idea of using Probabilistically Checkable Proofs (PCPs) to argue that a committed string encodes a valid transcript for a NP (or more general) relation.

\begin{definition}[Probabilistically Checkable Proof (PCP)]
    \label{def:pcp}
    Let $R \subseteq \{0,1\}^* \times \{0,1\}^*$ be a relation. A \emph{probabilistically checkable proof} (PCP) system for $R$ is a pair of PPT algorithms $\PCP=(P,V)$ with the following syntax:
    \begin{itemize}
        \item $P(x,w) \to \Pi$: On common input $x$ and prover input $w$ (a witness for $x$), output a PCP string $\Pi=(\Pi_1,\dots,\Pi_\ell) \in \Sigma^\ell$.
        \item $V(x;\rho)$: On input $x$ and randomness $\rho \in \zo^r$, the verifier makes $q$ oracle queries to $\Pi$ at indices in a set $\cQ(\rho)=\{j_1,\dots,j_q\}\subseteq[\ell]$ (possibly with repetitions). We write $V^\Pi(x;\rho)$ for the verifier that receives oracle answers $\Pi[j]$ for each $j\in \cQ(\rho)$.
    \end{itemize}
    For a query set $\cQ \subseteq [\ell]$ and answers $\ans \in \Sigma^{|\cQ|}$, we write $V^{[\cQ,\ans]}(x;\rho)$ for the verifier run on randomness $\rho$ when each query $j\in \cQ(\rho)$ is answered with $\ans[j]$; if $\cQ(\rho)\not\subseteq \cQ$, then $V^{[\cQ,\ans]}(x;\rho)$ outputs $0$.
    We require:
    \begin{itemize}
        \item \textbf{Completeness:} If $(x,w)\in R$, then
        \[
        \Pr\left[V^{\Pi}(x;\rho)=1 : \rho\gets \zo^r,\ \Pi\from P(x,w)\right]=1.
        \]
        \item \textbf{$\varepsilon_{\mathsf{pcp}}$-Soundness:} For all $x$ with no $w$ such that $(x,w)\in R$, and all $\tilde{\Pi}\in \Sigma^\ell$,
        \[
        \Pr\left[V^{\tilde{\Pi}}(x;\rho)=1 : \rho\gets \zo^r\right] \leq \varepsilon_{\mathsf{pcp}}.
        \]
        \item \textbf{Efficiency:} $P$ runs in time $\poly(|x|,|w|)$; $V$ runs in time $\poly(|x|,r)$ and makes $q$ queries.
    \end{itemize}
    We call $\ell$ the \emph{proof length}, $q$ the \emph{query complexity}, and $r$ the \emph{verifier randomness length}.
\end{definition}

Kilian's argument~\cite{STOC:Kilian92} combines a PCP with a vector commitment to the PCP string: the verifier checks PCP soundness on the queried symbols, and the commitment binds those symbols to a single proof $\Pi$.

\begin{definition}[Kilian commit-and-prove protocol]
    \label{def:kilian}
    Let $R$ be a relation and let $\PCP=(P,V)$ be a PCP for $R$ with proof length $\ell$.
    Let $\MERKLE$ be a position-binding vector commitment scheme instantiated with Collision Resistant Hash Functions.
    The \emph{Kilian commit-and-prove protocol} for $R$ is the following two-phase protocol on input $x$:
    \begin{itemize}
        \item \textbf{Commitment (preprocessing):} The prover with witness $w$ computes $\Pi\from P(x,w)$ and $(c,\st)\from \MERKLE.\Commit(\params,\Pi)$, and sends $c$ to the verifier.
        \item \textbf{Argument (one round):} The verifier sends $\rho\gets \zo^r$; the prover computes $\cQ$ from $V(x;\rho)$, $\pf\from \MERKLE.\Open(\params,\st,c,\cQ)$, sets $\ans=\Pi[\cQ]$, and sends $(\cQ,\ans,\pf)$; the verifier accepts iff $\MERKLE.\Verify(\params,c,\cQ,\ans,\pf)=1$ and $V^{[\cQ,\ans]}(x;\rho)=1$.
    \end{itemize}
\end{definition}

\begin{definition}[Computational soundness of Kilian]
    \label{def:kilian-sound}
    The protocol in Definition~\ref{def:kilian} has \emph{computational soundness error} $\varepsilon_{\mathsf{kil}}$ if for all PPT provers $\cP^*$ and all $x$ with no witness $w$ such that $(x,w)\in R$,
    \[
    \Pr\left[\text{verifier accepts on input }x\right] \leq \varepsilon_{\mathsf{kil}},
    \]
    where probability is over the verifier's randomness $\rho$ and the coins of $\cP^*$.
    If $\MERKLE$ is $\varepsilon$-secure and $\PCP$ has soundness $\varepsilon_{\mathsf{pcp}}$, then $\varepsilon_{\mathsf{kil}} \leq \varepsilon_{\mathsf{pcp}} + q\cdot \varepsilon$ for query complexity $q=|\cQ|$.
    With $\MERKLE$ a position-binding vector commitment scheme and negligible $\varepsilon_{\mathsf{pcp}}$, one obtains $\varepsilon_{\mathsf{kil}}=\negl(\lambda)$~\cite{STOC:Kilian92}.
\end{definition}

\subsection{Error Correcting Codes}
\begin{definition}[Error Correcting Codes]
    An Error Correcting Code (ECC) for message space $\{0,1\}^n$ and encoding space $\{0,1\}^{p(n)}$ is a tuple of polynomial time algorithms $\ECC=(\En, \De)$ with the following syntax:
    \begin{itemize}
        \item $\En(m)\to c$: The encoding procedure takes as input a message $m\in \{0,1\}^n$ and outputs its corresponding codeword $c\in\{0,1\}^{p(n)}$.
        \item $\De(c)\to m$: The decoding procedure takes a codeword $c\in\{0,1\}^{p(n)}$ and outputs the corresponding message $m\in \{0,1\}^n$.
    \end{itemize}

    Correctness requires that for all $m\in \{0,1\}^n$, $\De(\En(m))=m$.

    An $\ECC$ has relative distance $\alpha>0$ if for all message lengths $n\in \bbN$ and any $m_0\neq m_1\in\{0,1\}^n$, we have 
    \[
    \Delta(\En(m_0),\En(m_1))\geq \alpha p,
    \]
    where $\Delta$ denotes the Hamming distance.
    
    A relative distance of $\alpha$ means the $\ECC$ is capable of correcting up to $\frac{1}{2}\alpha p$ errors, i.e., for all $m\in \{0,1\}^n$ and $c'\in \{0,1\}^{p(n)}$, if $\Delta(\En(m), c')< \frac{1}{2}\alpha p$, then $\De(c')=m$.
    
    \ignore{The main property we will utilize is \emph{Error Correction}.
    \begin{itemize}
        \item \textbf{$d$-Error Correction}: For all $r$ s.t. $\Delta(r, \En(m))\leq d$ for some $m$, we have $\De(r)=m$, where $\Delta$ denotes the Hamming distance.
    \end{itemize}}
\end{definition}

It is known binary error correcting codes with relative distance $\approx \frac{1}{2}$ exists with a linear blowup in encoding length.

\begin{lemma}[\cite{STOC:GurSud00}]
    For all $\epsilon>0$, there exists an error correcting code $\ECC$ with relative distance $\frac{1}{2}-\varepsilon$ and with $p=p(n)=O_{\varepsilon}(n)$.
\end{lemma}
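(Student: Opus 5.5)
The plan is to construct the code by classical concatenation: a Reed--Solomon outer code composed with a short binary inner code that meets the Gilbert--Varshamov bound. For unique decoding up to half the designed distance, I would use Forney's generalized minimum distance (GMD) decoding, or alternatively the Guruswami--Sudan list decoder followed by selecting the closest codeword.

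Fix $\varepsilon>0$ and choose $\delta_{\mathrm{out}},\delta_{\mathrm{in}}>0$ small enough that $(1-\delta_{\mathrm{out}})(\tfrac12-\delta_{\mathrm{in}})\ge \tfrac12-\varepsilon$.

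\emph{Outer code.} Write the message $m\in\{0,1\}^n$ as $K=\lceil n/\log q\rceil$ symbols of $\mathbb{F}_q$, where $q=2^{k_{\mathrm{in}}}$ is the smallest power of two with $q\ge K/\delta_{\mathrm{out}}$. Encode these symbols with the Reed--Solomon code of length $M=\lceil K/\delta_{\mathrm{out}}\rceil\le q$. This code has relative distance at least $1-K/M\ge 1-\delta_{\mathrm{out}}$ and rate $\delta_{\mathrm{out}}$, a constant. Note that $k_{\mathrm{in}}=O(\log n)$.

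\emph{Inner code.} Use a binary linear code $C_{\mathrm{in}}:\{0,1\}^{k_{\mathrm{in}}}\to\{0,1\}^{n_{\mathrm{in}}}$ with $n_{\mathrm{in}}=O_{\varepsilon}(k_{\mathrm{in}})$ and relative distance at least $\tfrac12-\delta_{\mathrm{in}}$; such codes exist by the Gilbert--Varshamov bound, since the rate $1-H(\tfrac12-\delta_{\mathrm{in}})$ is a positive constant. Because $k_{\mathrm{in}}=O(\log n)$, the code can be found deterministically in $\poly(n)$ time. One option is Varshamov's greedy column-by-column construction of a parity-check matrix; another is the method of conditional expectations over a random generator matrix, where each minimum-weight check runs over only $2^{k_{\mathrm{in}}}=\poly(n)$ messages. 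Hard-coding a code found by brute force for each length is also acceptable if only existence and polynomial-time encoding and decoding are needed.

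\emph{Distance and length.} The concatenated code has length $p=M\,n_{\mathrm{in}}=O_{\varepsilon}(K\log q)=O_{\varepsilon}(n)$. Its relative distance is at least the product $(1-\delta_{\mathrm{out}})(\tfrac12-\delta_{\mathrm{in}})\ge\tfrac12-\varepsilon$, by the standard argument: two distinct outer codewords differ in at least $(1-\delta_{\mathrm{out}})M$ coordinates, and each such coordinate contributes at least $(\tfrac12-\delta_{\mathrm{in}})n_{\mathrm{in}}$ differing bits after inner encoding. Encoding clearly runs in polynomial time.

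\emph{Decoding.} The step I expect to need the most care is efficient decoding up to $\tfrac12\alpha p$ errors, where $\alpha$ is the product distance, rather than only up to a quarter of the distance. I would use GMD decoding:
\begin{enumerate}[(i)]
\item Decode each inner block by brute force over its $2^{k_{\mathrm{in}}}=\poly(n)$ candidates.
\item For each block, record a confidence weight $\min(1,\,2w_i/d_{\mathrm{in}})$, where $w_i$ is the distance from the received block to the decoded inner codeword and $d_{\mathrm{in}}$ is the inner distance.
\item Run Reed--Solomon errors-and-erasures decoding (Berlekamp--Welch) for every threshold obtained by erasing the least-confident blocks. There are only $M+1$ such thresholds.
\item Output any candidate whose re-encoding lies within $\tfrac12\alpha p$ of the received word.
\end{enumerate}
Forney's averaging argument shows that whenever the number of bit errors is below $\tfrac12 d_{\mathrm{out}}d_{\mathrm{in}}$, some threshold satisfies $2e+s<d_{\mathrm{out}}$, where $e$ counts outer-symbol errors and $s$ counts erasures, so the correct message is recovered. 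Correctness $\De(\En(m))=m$ is the zero-error special case. As a fallback, the same conclusion follows from the cited Guruswami--Sudan list decoding of the outer code combined with inner list decoding, followed by choosing the unique candidate within $\tfrac12\alpha p$.
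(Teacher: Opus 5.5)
The paper does not prove this lemma; it only cites Guruswami--Sudan (STOC 2000). Your argument is correct, self-contained, and takes a more elementary route than the cited source. You build the classical Reed--Solomon $\circ$ Gilbert--Varshamov concatenated code. Its relative distance is $(1-\delta_{\mathrm{out}})(\tfrac12-\delta_{\mathrm{in}})\ge\tfrac12-\varepsilon$, at rate $\Theta(\delta_{\mathrm{out}}\delta_{\mathrm{in}}^2)=\Omega_\varepsilon(1)$, so $p=O_\varepsilon(n)$. You then decode up to half the product distance with Forney's GMD algorithm, derandomized over the $M+1$ erasure thresholds. The cited work uses a different decoder: weighted (soft-decision) list decoding of concatenated codes with an algebraic outer code. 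That gives more than the lemma states, namely list decoding from nearly a $\tfrac12$ fraction of errors, but it is heavier machinery. Your route delivers exactly what the paper's ECC definition requires: polynomial-time encoding, and unique decoding of every word within distance $<\tfrac12\alpha p$ of a codeword. It is also the natural proof of the statement as written, which is a Forney/Zyablov-type fact.

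Three small points:
\begin{enumerate}
\item $K$ and $q$ are defined in terms of each other. This is still well defined, because the condition $2^{k}\ge\lceil n/k\rceil/\delta_{\mathrm{out}}$ is monotone in $k$. It is simpler, though, to take $q$ to be the least power of two with $q\ge n/\delta_{\mathrm{out}}$.
\item The ``hard-coding by brute force'' option is non-uniform, so it does not give the polynomial-time algorithms the definition asks for. Rely on the greedy or conditional-expectations inner code instead.
\item The one-line list-decoding fallback is too vague to verify as stated. The cited result relies on weighted decoding of the outer code, not plain hard-decision inner list decoding. You do not need the fallback, since GMD already completes the proof.
\end{enumerate}
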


\ignore{
\begin{definition}[Error Correcting Code]
    Let $\Sigma$ be a finite alphabet, and let $C$ be a subset of $\Sigma^n$ (that is, a set of length $n$ words in the alphabet). Then, we say $C$ is a code over $\Sigma$, and the minimum distance of $C$ is defined as the minimum Hamming distance between elements of $C$.

    Further, if we let $\mathcal{M}$ be some message space, then we call $\mathbf{A}:\cM\rightarrow C$ an encoding of $\mathcal{M}$ for code $C$. When appropriate in context, we may also call $\mathbf{A}$ a code.
\end{definition}}

\subsection{Hitting Set Generator (HSG)}
\begin{definition}[Hitting Set Generator (HSG)]
    Let $\cC$ be a class of boolean functions $C : \{0, 1\}^n \to \{0, 1\}$.
    A function $G : \zo^r \to \zo^n$ is an $\varepsilon$-HSG for $\cC$ if for every $C \in \cC$ such that $\Pr[C(U_n) = 1] > \varepsilon$, there exists $x \in \zo^r$ such that $C(G(x)) = 1$.
\end{definition}

\begin{lemma}[Explicit HSG~\cite{CCC:AIKS16}]
    \label{lem:hsg}
    If $\mathsf{E}=\mathsf{DTIME}(2^{O(n)})$  is hard for exponential size nondeterministic circuits,  then for every constant $b > 1$, there exists a constant $c > 1$ such that for every sufficiently large $n$, there is a function $G : \zo^r\to \zo^n$ that is an $\varepsilon$-HSG for size $n^b$ circuits, with $r = \log (1/\varepsilon) + c \log n$. Furthermore, $G$ is computable in time $\poly(n^b)$.
\end{lemma}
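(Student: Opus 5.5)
The plan is to reduce the small-$\varepsilon$ hitting problem to a \emph{constant-error} derandomization problem against nondeterministic circuits, and then use the hardness assumption only for that constant-error step. Concretely, I would build $G$ as a composition: $G(s,u)=g_{H(s)}(u)$. Here $\{g_d:\zo^k\to\zo^n\}_d$ is a pairwise-independent family (say affine maps $u\mapsto Au+v$ over $\mathbb{F}_2$) with $k=\log(1/\varepsilon)+2$. The description $d=(A,v)$ has $O(nk)=\poly(n)$ bits, so it must be derandomized. I would take $H:\zo^{O(\log n)}\to\zo^{|d|}$ to be a PRG that fools nondeterministic circuits of size $\poly(n^b)$ with constant error, say $1/4$. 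Such PRGs with logarithmic seed, computable in $\poly(n^b)$ time, follow from the assumption that $\mathsf{E}$ is hard for exponential-size nondeterministic circuits, by the Miltersen--Vinodchandran / Shaltiel--Umans line of work also cited in the introduction. The total seed length is then $k+O(\log n)=\log(1/\varepsilon)+c\log n$, and computing $G$ costs $\poly(n^b)$.

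\textbf{Truly random $d$.} Fix a size-$n^b$ circuit $C$ with $\mu:=\Pr[C(U_n)=1]>\varepsilon$. For a uniformly random $d$, let $Z=\sum_{u\in\zo^k}C(g_d(u))$. Each $g_d(u)$ is uniform, so $\mathbb{E}[Z]=2^k\mu$. Pairwise independence gives $\mathrm{Var}[Z]\le 2^k\mu$. By Chebyshev,
\[
\Pr_d[Z=0]\le \frac{1}{2^k\mu}<\frac{1}{2^k\varepsilon}=\frac14 .
\]
So a random $d$ gives a map whose image hits $\{C=1\}$ with probability greater than $3/4$.

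\textbf{Pseudorandom $d$.} Consider the test $D_C(d)=1$ iff $\exists u\in\zo^k$ with $C(g_d(u))=1$. This is a nondeterministic circuit: guess $u$, compute $g_d(u)$ in $O(nk)$ size, and evaluate $C$. Its total size is $n^b+\poly(n)\cdot\log(1/\varepsilon)$, which is polynomial in $n$ for any $\varepsilon\ge 2^{-n}$ (the only meaningful range). Since $H$ fools such circuits up to error $1/4$,
\[
\Pr_s[D_C(H(s))=1]>3/4-1/4>0 .
\]
Hence there is a seed $s$ and some $u$ with $C(G(s,u))=1$, which is exactly the HSG property.

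\textbf{Main obstacle.} The combinatorial part above is routine; the delicate step is invoking the nondeterministic PRG with the right parameters. The exponent of the PRG's size bound must be chosen after $b$, with slack for the $\poly(n)$ overhead of evaluating $g_d$ and for the length $|d|=\poly(n)$. Its seed must stay $O(\log n)$ with a constant depending only on $b$, which yields the claimed constant $c$. I would make sure the cited construction is stated in the nondeterministic-circuit setting, so that the existential guess over $u$ is absorbed. If only co-nondeterministic fooling were available, I would apply it to $\neg D_C$, which is equivalent for two-sided error.
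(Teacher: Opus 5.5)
Your proposal is correct. The paper gives no proof of its own and simply imports the lemma from \cite{CCC:AIKS16}, and your argument is essentially the composition used there: a constant-error PRG for nondeterministic circuits selects a member of a pairwise-independent (disperser-like) family, and the analysis fools the nondeterministic test $\exists u\; C(g_d(u))=1$. This is the same pairwise-independence-plus-Chebyshev core as the paper's Monte Carlo HSG (Lemma~\ref{lem:mc-hsg}), with the random choice of maps derandomized.

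The one point to state precisely is where the two-sided PRG against nondeterministic circuits comes from. It follows from Shaltiel--Umans' boosting of nondeterministic hardness to hardness against circuits with nonadaptive NP queries, combined with the relativized Impagliazzo--Wigderson generator. A Miltersen--Vinodchandran-style hitting set for co-nondeterministic circuits would not suffice, since it gives the wrong one-sided guarantee for hitting $D_C$.
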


\begin{lemma}[Monte Carlo HSG]\label{lem:mc-hsg}
    There is a family $\cG=\{G:[r]\times(\zo^d\setminus\{0^n\})\to\zo^n\}$ such that any $G\gets \cG$ can be sampled with $r(2n+d-1)$ random bits and computed in time $O(rnd)$ and for any set $S$ with density $\epsilon$,
    \[
        \Pr_{G\gets\cG}[\forall x\in[r]\times(\zo^d\setminus\{0^n\}), G(x)\notin S]\le \left(\frac{1}{(2^d-1)\epsilon}\right)^r.
    \]
    Hence, for any $\delta\in(0,1)$ and $\cS=\{S_1,\ldots,S_M\}$ where each $S_i\in\cS$ has density $\ge \epsilon$, if we take $d\ge \log(1/\epsilon)+2$ and $r=\log(M)+\log(1/\delta)$ then
    \[
        \Pr_{G\gets \cG}[\exists S\in \cS, \forall x, G(x)\notin S]\le \delta.
    \]
\end{lemma}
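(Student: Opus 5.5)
The plan is to take $G$ to be $r$ independent copies of a pairwise-independent hash family from $\zo^d$ to $\zo^n$, and then use Chebyshev on each copy. Concretely, for each $i\in[r]$ I would sample an $n\times d$ Toeplitz matrix $A_i$ over $\mathbb{F}_2$ (described by $n+d-1$ bits) and a shift $b_i\in\zo^n$ ($n$ bits), and set $G(i,x)=A_ix+b_i$. The seed length is then exactly $r(2n+d-1)$. Evaluating one output is a matrix--vector product costing $O(nd)$, so all $r$ copies together cost $O(rnd)$. I read the domain $\zo^d\setminus\{0^n\}$ as $\zo^d\setminus\{0^d\}$, which is clearly what is intended.

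Next I use the standard fact that $x\mapsto A x+b$, with $A$ a uniform Toeplitz matrix and $b$ uniform, is a pairwise-independent family. So for a fixed $i$, the values $\{G(i,x)\}_{x\ne 0}$ are each uniform on $\zo^n$ and pairwise independent. Fix $S$ of density $\epsilon$ and set $K=2^d-1$. Let $Y_i=\sum_{x\neq 0}\mathbf{1}[G(i,x)\in S]$. Then $\mathbb{E}[Y_i]=K\epsilon$, and by pairwise independence $\mathrm{Var}[Y_i]=K\epsilon(1-\epsilon)\le K\epsilon$. Chebyshev gives
\[
\Pr[Y_i=0]\le \Pr\bigl[|Y_i-K\epsilon|\ge K\epsilon\bigr]\le \frac{K\epsilon}{(K\epsilon)^2}=\frac{1}{(2^d-1)\epsilon}.
\]
The $r$ copies use independent seeds, so the probability that every $G(i,x)$ misses $S$ is $\prod_i\Pr[Y_i=0]\le (1/((2^d-1)\epsilon))^r$. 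This is the first claim.

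For the ``hence'' part, take $d\ge\log(1/\epsilon)+2$, so $2^d\epsilon\ge 4$. Then $(2^d-1)\epsilon\ge 4-\epsilon\ge 2$, and each fixed $S_j$ is missed with probability at most $2^{-r}$. With $r=\log M+\log(1/\delta)$ this is $\delta/M$, and a union bound over the $M$ sets gives failure probability at most $\delta$.

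The only step needing real care is the pairwise independence of the Toeplitz-plus-shift family. That is the one place the specific seed length $2n+d-1$ per copy is justified. I would verify it directly: for distinct $x,x'$, the pair $(Ax+b,\,A(x-x'))$ is uniform, because $b$ masks the first coordinate and $A(x-x')$ is uniform for any fixed nonzero vector under a random Toeplitz $A$. The latter holds by looking at the lowest nonzero coordinate of $x-x'$ and the fresh diagonal entries it multiplies. Everything else is routine.
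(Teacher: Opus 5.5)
Your proposal is correct and follows the same route as the paper's sketch. The paper also takes a union of $r$ independently sampled Toeplitz-based maps, applies Chebyshev via pairwise independence to a single map, uses independence across the $r$ copies, and finishes with a union bound over $\cS$ using $2^d\epsilon\ge 4$. Your explicit shift $b_i$ usefully sharpens the paper's wording of ``linear maps'': a purely linear Toeplitz family has only $n+d-1<2n$ random bits (for $d\le n$), so it cannot make pairs of outputs uniform, and the shift is exactly what accounts for the stated $2n+d-1$ bits per copy.
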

\begin{proof}[Sketch]
    Take $G$ to be the union of $r$ linear maps defined by Toeplitz matrices. The first coordinate in $[r]$ defines which linear map to apply the second input to. The Toeplitz family is pairwise independent (provided we exclude $0^n$), hence Chebyshev bound will give that a single random map fails to hit any particular $S$ with probability at most $p=1/(2^d-1)\epsilon$. The probability of this event under $r$ independent trials is at most $p^r$. A specific choice of parameters and union bound over $\cS$ yields the last line.
\end{proof}

%% file: ktcomp.tex
\section{Hard Functions with High Time-Bounded Kolmogorov Complexity}

Our constructions will rely on a central tool, a function $f:\zo^{\lin} \to \zo^{\lout}$ that can be computed in time $T$, but has a high conditional time-bounded Kolmogorov complexity $\Kt(f(x) \mid x)$ for time $t\ll T$. Specifically, we will consider three different variants of time-bounded Kolmogorov complexity.

We first start with the standard definition of time-bounded Kolmogorov complexity. Towards the definition, we will use the universal Turing machine $U$ that can emulate any Turing machine with polynomial overhead. The universal Turing machine $U$ takes as input (the description of) a program $\Pi=(M, w)\in \{0,1\}^*$, where $M$ is a Turing machine and $w\in \{0,1\}^*$ is an input to $M$. We use $U(M, w, 1^t)$ to denote the output of $M(w)$ in $t$ steps when emulated on $U$.

\begin{definition}[Time-Bounded Kolmogorov Complexity]
    Let $U$ be a universal Turing machine. The \emph{$t$-time-bounded Kolmogorov complexity} of s string $y$ conditioned on $x$, denoted $\Kt(y \mid x)$, is defined as
    \[
    \Kt(y \mid x) = \min_{M \in \{0,1\}^*} \left\{ |M| : U(M, x, 1^t) = y \right\}.
    \]
    That is, $M$ is a program such that emulating $M$ on input $x$ outputs $y$ in $t$ steps.
    
    We say $f:\zo^{\lin} \to \zo^{\lout}$ has \emph{$t$-time Kolmogorov complexity at least $s$} if
    \[
    \Pr_{x\from \cU_{\lin}}\left[\Kt(f(x) \mid x) > s\right] > 1-\varepsilon,
    \]
    for a negligible $\varepsilon =\negl(\lout)> 0$, where $\cU_{\lin}$ denotes the uniform distribution over $\lin$-bit strings\footnote{Here we assume the input distribution for $f$ is the uniform distribution. One could also define the notion analogously for arbitrary input distributions.}.
\end{definition}

Next, we strengthen the notion to account for probabilistic Turing machines. We will let $U$ be a universal Turing machine that emulates \emph{probabilistic} Turing machines with polynomial overhead. The universal Turing machine $U$ takes as input (the description of) a program $\Pi=(M, w)\in \{0,1\}^*$, where $M$ is a Turing machine and $w\in \{0,1\}^*$ is an input to $M$, as well as a randomness $r\in \{0,1\}^*$. We use $U(M, w, r, 1^t)$ to denote the output of $M(w;r)$ in $t$ steps when emulated on $U$, where $r$ is fed to $M$ as a random tape.

\begin{definition}[Time-Bounded Probablistic Kolmogorov Complexity~\cite{CCC:GKLO22}]
    Let $U$ be a universal Turing machine that emulates probabilistic Turing machines. The \emph{$t$-time-bounded probabilistic Kolmogorov complexity} of a string $y$ conditioned on $x$, denoted $\pKt(y \mid x)$, is defined as

    \[
    \pKt(y \mid x) = \min\left\{ k\in \mathbb{N}\  \middle| \  \Pr_{r\from \cU_t}\left[\exists M \in \{0,1\}^k: U(M, x, r, 1^t)=y\right]\geq \frac{2}{3}\right\},
    \]
    where $\cU_t$ denotes the uniform distribution over $t$-bit strings.
    
    We say $f:\zo^{\lin} \to \zo^{\lout}$ has \emph{$t$-time probabilistic Kolmogorov complexity at least $s$} if
    \[
    \Pr_{x\from \cU_{\lin}}\left[\pKt(f(x) \mid x) > s\right] > 1-\varepsilon,
    \]
    for a negligible $\varepsilon =\negl(\lout)> 0$, where $\cU_{\lin}$ denotes the uniform distribution over $\lin$-bit strings.
\end{definition}

One way to view the definition above is that the randomness $r$ is provided to the Turing machine $M$ as uniform advice. Naturally, one can consider the stronger notion where \emph{an arbitrary non-uniform advice} is provided to the Turing machine $M$. However, such a notion no longer makes sense to measure specific strings (one can simply write the string in the advice). Though it does make sense as a measure of the incompressibility of a distribution over strings.

\begin{definition}[Kolmogorov Complexity with non-uniform Advice]
    \label{def:aKt}
    Let $U$ be a universal Turing machine. The \emph{$t$-time-bounded Kolmogorov complexity with non-uniform advice $z$} of a string $y$ conditioned on $x$, denoted $\aKt_z(y \mid x)$, is defined as
    \[
      \aKt_z(y \mid x) = \min_{M \in \{0,1\}^*} \left\{ |M| : U(M, (x, z), 1^t) = y \right\}.
    \]

    That is, $M$ is a program that outputs $y$ given input $x$ and a non-uniform advice $z$ in time $t$. Notice that trivially, $\aKt_y(y \mid x) = O(1)$, when the non-uniform advice happens to be the output string $y$ itself (or more generally, when the non-uniform advice can depend on the output string $y$).
\end{definition}

\begin{definition}[$t$-time Incompressibility]
    We say a function $f:\zo^{\lin} \to \zo^{\lout}$ is \emph{$(t, s)$-incompressible} if for any non-uniform advice $z \in \{0,1\}^*$, we have
    \[
    \Pr_{x\from \cU_{\lin}}\left[\aKt_z(f(x) \mid x) > s\right] > 1-\varepsilon,
    \]
    for a negligible $\varepsilon =\negl(\lout)> 0$, where $\cU_{\lin}$ denotes the uniform distribution over $\lin$-bit strings.
\end{definition}

Notice the order of the quantifiers in the definition above. The non-uniform advice $z$ cannot depend on the input $x$, otherwise the definition is trivial as pointed out in Definition~\ref{def:aKt}. Instead, a ``useful'' advice $z$ needs to work for all inputs $x$.

\subsection{Approximate Incompressibility}

To utilize these hard functions with high $\Kt$ complexity, we compose a function of certain $\Kt$ complexity with an \emph{Error Correcting Code} to obtain a function with high \emph{Approximate} $\Kt$ complexity.

\begin{definition}[Approximate Kolmogorov Complexity]
    Let $f:\zo^{\lin} \to \zo^{\lout}$ be a function. We say $f$ has $\delta$-approximate $t$-time Kolmogorov complexity $s$ if
    
    \[\Pr_{x\from \cU_{\lin}}\left[\exists y': \Delta(y', f(x)) < \delta \cdot |f(x)| \ \wedge \  \Kt(y' \mid x) \leq s\right] < \varepsilon,\]
    for a negligible $\varepsilon =\negl(\lout)> 0$, where $\cU_{\lin}$ denotes the uniform distribution over $\lin$-bit strings.

    That is, with overwhelming probability over choice of $x$, we have $\Kt(y' \mid x) > s$ for all $y'$ such that the Hamming distance $\Delta(y', f(x)) < \delta \cdot |f(x)|$. We denote this as \[\Kt_{\approx\delta}(f(x)\mid x) > s.
    \]
\end{definition}

Analogously, we can define approximate $\pKt$ and $(t,s)$-incompressibility.

\begin{definition}[Approximate Probabilistic Kolmogorov Complexity]
    Let $f:\zo^{\lin} \to \zo^{\lout}$ be a function. We say $f$ has $\delta$-approximate $t$-time probabilistic Kolmogorov complexity $s$ if
    
    \[\Pr_{x\from \cU_{\lin}}\left[\exists y': \Delta(y', f(x)) < \delta \cdot |f(x)| \ \wedge\  \pKt(y' \mid x) \leq s\right] < \varepsilon,\]
    for a negligible $\varepsilon =\negl(\lout)> 0$, where $\cU_{\lin}$ denotes the uniform distribution over $\lin$-bit strings.

    That is, with overwhelming probability over choice of $x$, we have $\pKt(y' \mid x) > s$ for all $y'$ such that the Hamming distance $\Delta(y', f(x)) < \delta \cdot |f(x)|$. We denote this as \[\pKt_{\approx\delta}(f(x)\mid x) > s.
    \]
\end{definition}

\begin{definition}[Approximate $t$-time Incompressibility]
    Let $f:\zo^{\lin} \to \zo^{\lout}$ be a function. We say $f$ is \emph{$\delta$-approximate $(t,s)$-incompressible} if for any non-uniform advice $z\in\{0,1\}^*$, we have
     \[\Pr_{x\from \cU_{\lin}}\left[\exists y': \Delta(y', f(x)) < \delta \cdot |f(x)| \ \wedge\  \aKt_z(y' \mid x) \leq s\right] < \varepsilon,\]
    for a negligible $\varepsilon =\negl(\lout)> 0$, where $\cU_{\lin}$ denotes the uniform distribution over $\lin$-bit strings.

    That is, for any non-uniform advice $z$, with overwhelming probability over choice of $x$, we have \[\aKt_z(y' \mid x) > s\] for all $y'$ such that the Hamming distance $\Delta(y', f(x)) < \delta \cdot |f(x)|$.
\end{definition}

\subsection{Approximate Incompressibility via List-Decoding}
We observe that approximate incompressibility follows generically by composing with an efficient list-decodable code. We actually do not rely on this result and instead use a direct instantiation that is more efficient.

\begin{theorem}[\cite{DBLP:conf/approx/GuruswamiR07}]
    For every field $\bbF_q$, reals $\delta\in(0,1),r\in(0,1-H_q(\delta)],\epsilon>0$ and integer $s\ge 1$, there exists linear codes $C$ over $\bbF_q$ of block length $n$ that can be list-decoded from $\delta-\epsilon$-relative errors with list size $L(n)=(n/\epsilon^2)^{O(s\epsilon^{-3}\delta/(H_q^{-1}(1-r)-\delta))}$ and rate $R=r-\frac{r}{s}\sum_{i=0}^{s-1}\frac{\delta}{H^{-1}_q(1-r+ri/s)}$. Moreover, $C$ can be constructed in time $(n/\epsilon^2)^{O(s/(\epsilon^6r\delta))}$ and list-decoded in time polynomial in $n$.
\end{theorem}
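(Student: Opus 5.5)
The plan is to follow the Guruswami--Rudra approach: \emph{multilevel concatenation}, with folded Reed--Solomon codes as the outer codes and brute-force-selected linear codes as the inner codes. The rate expression $R=r-\frac{r}{s}\sum_{i=0}^{s-1}\frac{\delta}{H^{-1}_q(1-r+ri/s)}$ is the Blokh--Zyablov bound for $s$ levels, discretized over the inner rate $r$. This suggests taking $s$ outer codes $C_{\mathrm{out}}^{(0)},\dots,C_{\mathrm{out}}^{(s-1)}$ of rates $R_0\le\dots\le R_{s-1}$. Each is concatenated against a nested chain of inner linear codes
\[
C_{\mathrm{in}}^{(s-1)}\subseteq\cdots\subseteq C_{\mathrm{in}}^{(0)}\subseteq\bbF_q^{m}.
\]
The code $C_{\mathrm{in}}^{(i)}$ has rate $r(1-i/s)$, and $C_{\mathrm{in}}$ is generated by the rows $G_0,\dots,G_{s-1}$ of a single $rm\times m$ generator matrix.

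\textbf{Step 1 (inner codes).} I would show that a random $rm\times m$ matrix over $\bbF_q$ has the following property with positive probability: for every $i$, the code generated by the last $(s-i)$ blocks of rows is $(H_q^{-1}(1-r+ri/s)-\epsilon',\,q^{O(1/\epsilon')})$-list-decodable. This is the standard linear-code list-decoding-capacity argument (Zyablov--Pinsker/Guruswami--H{\aa}stad--Kopparty) plus a union bound over the $s$ levels. Taking $m=O(\log n/\epsilon^{2})$ keeps the space of matrices of size $(n/\epsilon^2)^{O(\cdot)}$. We can therefore find a good matrix by exhaustive search and verify it by brute-force list-decoding checks. This gives the claimed construction time $(n/\epsilon^2)^{O(s/(\epsilon^6 r\delta))}$.

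\textbf{Step 2 (outer codes).} I would take each $C_{\mathrm{out}}^{(i)}$ to be a folded Reed--Solomon code over $\bbF_{q^{r m/s}}$ of rate $R_i$. Such a code is $(\zeta,\ell,L)$-list-\emph{recoverable} for $\zeta\approx 1-R_i-\epsilon$, with input lists of size $\ell$ and output lists polynomial in $n$; the folding parameter and the multiplicity are chosen depending on $\epsilon,\ell$. This is the Guruswami--Rudra folded RS list-recovery theorem, and it is where the exponent $O(s\epsilon^{-3}\delta/(H_q^{-1}(1-r)-\delta))$ in $L(n)$ comes from: the output list size is $n^{O(\log \ell/\epsilon)}$-type, with $\ell=q^{O(1/\epsilon)}$.

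\textbf{Step 3 (decoding).} I would decode level by level, as in Blokh--Zyablov, but with lists.
\begin{enumerate}
\item At level $i$, we already hold candidate messages for levels $<i$. Subtract their contribution from each inner block.
\item List-decode each inner block in $C_{\mathrm{in}}^{(i)}$ up to radius $H_q^{-1}(1-r+ri/s)-\epsilon'$, obtaining small lists.
\item Feed these lists to the list-recovery algorithm of $C_{\mathrm{out}}^{(i)}$, keeping every branch.
\end{enumerate}

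For the error accounting, a block is ``bad'' at level $i$ if its error fraction exceeds the inner radius. By averaging, the fraction of bad blocks is at most $\delta/H_q^{-1}(1-r+ri/s)$. Setting $R_i=1-\delta/H_q^{-1}(1-r+ri/s)-\epsilon$ lets list recovery succeed at every level. The overall rate is then $\frac{r}{s}\sum_i R_i$, which is the stated $R$. The total list size is the product of the per-level list sizes, namely $L(n)$, and the running time is polynomial for fixed parameters.

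\textbf{Main obstacle.} I expect Step 3 to be the hardest part. The averaging argument must hold \emph{simultaneously} for all $s$ levels against a single error pattern. In addition, branching over lists must not blow up beyond the stated $L(n)$. I would handle this by fixing the error pattern and noting that the bad-block bound depends only on the inner radius, so the same error pattern is analyzed at every level. I would then bound the branching by the product of the $s$ outer list-recovery list sizes.
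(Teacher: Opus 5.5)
Your architecture is the Guruswami--Rudra one that the paper cites; the paper gives no proof of its own. It consists of multilevel concatenation with folded Reed--Solomon outer codes, one inner generator matrix whose nested subcodes are all list-decodable, and level-by-level decoding. Your resolution of the ``main obstacle'' is correct. On the branch where levels $<i$ were recovered correctly, subtracting their contribution leaves a codeword of $C_{\mathrm{in}}^{(i)}$ plus the \emph{same} error vector, so one Markov bound against the fixed error pattern handles every level.

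The genuine gap is the construction-time claim in Step 1. The space of $rm\times m$ generator matrices has size $q^{rm^2}=(q^m)^{rm}$, not $(n/\epsilon^2)^{O(1)}$. You also cannot take $m$ below order $\log n$. Each level-$i$ outer symbol lives in $\bbF_q^{rm/s}$, and a folded RS code with $N$ blocks needs an alphabet of size greater than $N$, so $m\ge (s/r)\log_q N=\Omega(\log n)$. Exhaustive search therefore takes $n^{\Omega(\log n)}$ time. That is quasi-polynomial and does not give the stated bound $(n/\epsilon^2)^{O(s/(\epsilon^6 r\delta))}$. Checking one candidate in $q^{O(m)}$ time is fine; the number of candidates is the problem.

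The missing ingredient is the derandomization Guruswami and Rudra actually use, the method of conditional expectations. By the Zyablov--Pinsker argument, the level-$i$ subcode fails to be $(\rho_i,q^{\ell}-1)$-list-decodable, for a suitable constant $\ell$ depending on $\epsilon'$, only in one situation. Some center $y\in\bbF_q^m$ and some $\ell$ linearly independent level-$i$ messages must all be mapped into the Hamming ball of radius $\rho_i m$ around $y$. For a uniformly random matrix, the expected number of such bad configurations, summed over the $s$ levels, is less than $1$. There are only $q^{O(m\ell)}$ configurations. The probability of each, conditioned on a partial fixing of the matrix, can be computed in comparable time. Fixing entries one at a time so that this conditional expectation never increases yields a good matrix in time $q^{O(m\ell)}$, which is polynomial in $n$ for fixed parameters.

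A smaller point concerns the rate. You lose an extra $\epsilon$ in each $R_i$, but the stated rate $R$ has no such loss. To match it, set $R_i=1-\delta/H_q^{-1}(1-r+ri/s)$ and take the slack from the decoding radius instead. Suppose there are only $(\delta-\epsilon)n$ errors and the inner radii are $H_q^{-1}(1-r+ri/s)-\epsilon/2$. Using $H_q^{-1}(1-r)\ge\delta$, the fraction of bad blocks is then at most $1-R_i-\epsilon/2$, which is what list recovery needs.
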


As noted by Guruswami and Rudra~\cite{DBLP:conf/approx/GuruswamiR07}, this meets the Blokh--Zyablov bound, $R_{BZ}(\rho)$ where $\rho$ is the relative distance of the code, as $s\to \infty$:

\begin{corollary}[\cite{DBLP:conf/approx/GuruswamiR07}]\label{cor:GR-BZ}
    For every \(0<\rho<\tfrac12\) and every \(0<\eta<\rho\), there exists an explicit family of binary linear codes that is list-decodable up to $\rho-\eta$ relative distance with list size $n^{O_{\rho,\eta}(1)}$ and rate at least $R_{BZ}(\rho)-\eta$ where $n$ is the block length and
    \[
        R_{BZ}(\rho) := 1-H_2(\rho)-\rho\int_0^{1-H_2(\rho)}\frac{dx}{H_2^{-1}(1-x)}.
    \]
    Moreover encoding and list-decoding run in $n^{O_{\rho,\eta}(1)}$ time.
\end{corollary}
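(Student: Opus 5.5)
The plan is to instantiate the Guruswami--Rudra theorem stated just above with $q=2$ and $\delta=\rho$, choosing the remaining parameters $r,\epsilon,s$ as constants depending only on $(\rho,\eta)$. Then check three things in turn: the decoding radius, the list size and running times, and the rate. The only substantive step is the last one, which amounts to recognizing the rate expression as a Riemann sum for the Blokh--Zyablov integral.

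\textbf{Parameter choice, radius and complexity.} Set $\epsilon=\eta$, so the decoding radius is $\delta-\epsilon=\rho-\eta$, as required. Set $r=1-H_2(\rho)-\gamma$ for a small constant $\gamma=\gamma(\rho,\eta)>0$ fixed below. Then $r\in(0,1-H_2(\delta)]$ is a legal choice. Since $H_2^{-1}$ is strictly increasing on $[0,1]$ (with values in $[0,1/2]$), we have $H_2^{-1}(1-r)=H_2^{-1}(H_2(\rho)+\gamma)>\rho$, so the denominator $H_2^{-1}(1-r)-\delta$ in the list-size exponent is a positive constant depending on $(\rho,\gamma)$; this is the only reason $\gamma>0$ is needed. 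With $s$ also a constant depending on $(\rho,\eta)$ (fixed below), the list size $(n/\epsilon^2)^{O(s\epsilon^{-3}\delta/(H_2^{-1}(1-r)-\delta))}$ is $n^{O_{\rho,\eta}(1)}$. Likewise the construction time $(n/\epsilon^2)^{O(s/(\epsilon^6 r\delta))}$ is $n^{O_{\rho,\eta}(1)}$, which gives explicitness. List-decoding is polynomial by the theorem, and encoding a linear code from its generator matrix is polynomial.

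\textbf{Rate (the main step).} Let $g(x)=1/H_2^{-1}(1-x)$ on $[0,r]$. It is increasing in $x$, and it is bounded by $1/\rho$, because $H_2^{-1}(1-x)\ge H_2^{-1}(1-r)>\rho$ on this range. Substituting $x_i=r-ri/s$ rewrites the sum as
\[
\frac{r}{s}\sum_{i=0}^{s-1}\frac{\delta}{H_2^{-1}(1-r+ri/s)}=\rho\cdot\frac{r}{s}\sum_{i=0}^{s-1}g(x_i).
\]
Here the points $x_i$ run over the right endpoints of the $s$ intervals of width $r/s$ partitioning $[0,r]$. Because $g$ is monotone, this right-endpoint sum exceeds $\int_0^r g$ by at most $\frac{r}{s}(g(r)-g(0))\le \frac{1}{s\rho}$. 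Hence
\[
R\ \ge\ r-\rho\int_0^r g(x)\,dx-\frac{1}{s}.
\]
The function $\Phi(r)=r-\rho\int_0^r g$ is continuous on $[0,1-H_2(\rho)]$; indeed it is $2$-Lipschitz there, since $\rho g\le 1$ on that range. Its value at the endpoint $r=1-H_2(\rho)$ is exactly $R_{BZ}(\rho)$. Therefore choosing $\gamma=\eta/4$ gives $\Phi(r)\ge R_{BZ}(\rho)-\eta/2$, and choosing $s\ge 2/\eta$ gives $R\ge R_{BZ}(\rho)-\eta$.

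The main obstacle I anticipate is keeping the integrand under control near the endpoint $x\to 1-H_2(\rho)$, while also keeping the list-size exponent finite. Both are handled by the observations above: $H_2^{-1}(1-x)\ge\rho$ on the whole range, so the integrand stays bounded by $1/\rho$; and the slack $\gamma$ costs only $O(\gamma)$ in rate, by the Lipschitz bound on $\Phi$.
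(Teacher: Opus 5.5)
Your proposal is correct and follows the route the paper indicates: specialize the Guruswami--Rudra theorem to $q=2$, $\delta=\rho$, and let the Riemann sum in the rate converge to the Blokh--Zyablov integral as $s\to\infty$, with the useful extra care of keeping $r$ strictly below $1-H_2(\rho)$ so the list-size exponent stays finite. The one slip is that $\gamma=\eta/4$ can exceed $1-H_2(\rho)$ when $\rho$ is near $\tfrac12$, which makes $r\le 0$ and the choice illegal; taking $\gamma=\min\{\eta/4,(1-H_2(\rho))/2\}$ fixes this and leaves the rest of your argument unchanged.
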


We rephrase\footnote{
    \[
    1-R_{BZ}(\rho)=H_2(\rho)+\rho\int_0^{1-H_2(\rho)}\frac{dx}{H_2^{-1}(1-x)}
    \]
    
    If we define $u=H_2^{-1}(1-x)$ for the purposes of substitution, then $x=1-H_2(u)$ and $$dx=-H'_2(u)du=-\log_2\frac{1-u}{u} du.$$ Additionally, note that as $x$ goes from 0 to $1-H_2(\rho)$, $u$ goes from $\tfrac12$ to $\rho$.
Hence,
\[
    \int_0^{1-H_2(\rho)}\frac{dx}{H_2^{-1}(1-x)} = \int_{\rho}^{1/2}\frac{\log((1-u)/u)}{u}du.
\]
For $0<u\le1/2$, $\log\frac{1-u}{u} \le \log\frac{1}{u}$. So for $\rho\in(0,1/2)$,
\[
\int_\rho^{1/2} \frac{\log((1-u)/u)}{u}du \le \int_{\rho}^{1/2}\frac{\log(1/u)}{u}du = 1/2(\log^2(1/\rho)-\log^2(2))=\log^2(1/\rho)-1/2.
\]
Additionally, $H_2(\rho) = \rho\log(1/\rho)+(1-\rho)\log(1/(1-\rho))\le \rho\log(1/\rho) + (1-\rho)\frac{\rho}{1-\rho} \le \rho(\log(1/\rho)+1)$. And for $\rho<1/4$, $\log(1/\rho)\le \frac{1}{2}\log^2(1/\rho)$.
So,
\[R_{BZ}(\rho)\ge 1-\rho(\log^2(1/\rho)+1/2).\]} this in the following convenient form (explicit here means that both encoding and list-decoding run in time polynomial in the block length):
\begin{corollary}[Simple High-rate Formulation of~\cite{DBLP:conf/approx/GuruswamiR07}]\label{cor:ECC}
    There exists an absolute constant $c>0$ such that for every sufficiently small $0<\rho<\tfrac14$ and every $0<\eta<\rho$, there is an explicit family of linear binary codes with list size $n^{O_{\rho,\eta}(1)}$ and rate at least
    \[
        1-\rho(\log^2(\frac{1}{\rho})+1/2)-\eta
    \]
    that is explicitly list-decodable up to relative distance $\rho-\eta$.
\end{corollary}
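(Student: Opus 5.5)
The plan is to derive Corollary~\ref{cor:ECC} directly from Corollary~\ref{cor:GR-BZ}. The only task is to lower-bound the Blokh--Zyablov rate $R_{BZ}(\rho)$ by the elementary expression $1-\rho(\log^2(1/\rho)+1/2)$ whenever $\rho<1/4$. Fix $\rho\in(0,1/4)$ and $\eta\in(0,\rho)$, and invoke Corollary~\ref{cor:GR-BZ} with these parameters. It gives an explicit family of binary linear codes with the following properties: list size $n^{O_{\rho,\eta}(1)}$, list-decodable up to relative distance $\rho-\eta$, and rate at least $R_{BZ}(\rho)-\eta$. Encoding and list-decoding run in time polynomial in the block length, which is what ``explicit'' means here. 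It therefore suffices to prove $R_{BZ}(\rho)\ge 1-\rho(\log^2(1/\rho)+1/2)$; subtracting $\eta$ then gives the claimed rate.

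For the inequality I will bound the two terms of $1-R_{BZ}(\rho)=H_2(\rho)+\rho\int_0^{1-H_2(\rho)}\frac{dx}{H_2^{-1}(1-x)}$ separately.

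\emph{The integral.} Substitute $u=H_2^{-1}(1-x)$, so $dx=-\log_2\frac{1-u}{u}\,du$, and the limits $x:0\to 1-H_2(\rho)$ become $u:\tfrac12\to\rho$. This turns the integral into $\int_\rho^{1/2}\frac{\log((1-u)/u)}{u}\,du$. On $(0,1/2]$ we have $\log\frac{1-u}{u}\le\log\frac1u$, and $\int_\rho^{1/2}\frac{\log(1/u)}{u}\,du$ has the closed form $\tfrac12(\log^2(1/\rho)-\log^2 2)$ up to the constant factor from the base of the logarithm. I will track that constant carefully. In any case the integral is at most $c'\log^2(1/\rho)$ for an absolute $c'$, and with base-2 logs and the natural-log conversion it should come out to at most $\log^2(1/\rho)-\tfrac12$ as in the footnote.

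\emph{The entropy term.} Use $\log\frac{1}{1-\rho}\le\frac{\rho}{1-\rho}$ (up to the $\log_2 e$ factor) to get $H_2(\rho)\le\rho(\log(1/\rho)+O(1))$. Then, since $\rho<1/4$ gives $\log(1/\rho)\ge 2$, we have $\log(1/\rho)\le\tfrac12\log^2(1/\rho)$, so the entropy term is absorbed into a multiple of $\rho\log^2(1/\rho)$.

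Summing, $1-R_{BZ}(\rho)\le\rho\log^2(1/\rho)+\rho/2$ (or $c\rho\log^2(1/\rho)$ in general; the absolute constant $c$ in the statement allows this slack).

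The main obstacle is bookkeeping the constants from mixing $\ln$ and $\log_2$ in the antiderivative and the bound on $\log\frac{1}{1-\rho}$, so that the exact form $1/2$ comes out. If the exact constants fail, I would settle for $1-c\rho\log^2(1/\rho)-\eta$ with an absolute $c$, which is consistent with the ``absolute constant $c>0$'' in the statement. Everything else is a direct appeal to Corollary~\ref{cor:GR-BZ}.
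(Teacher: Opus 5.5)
Your route is the paper's own (the footnote to Corollary~\ref{cor:ECC}): invoke Corollary~\ref{cor:GR-BZ}, substitute $u=H_2^{-1}(1-x)$, use $\log\frac{1-u}{u}\le\log\frac1u$, and bound $H_2(\rho)$ separately. The gap is the step ``Summing''. The integral bound you adopt, $\log^2(1/\rho)-\frac12$, already has coefficient $1$ on $\rho\log^2(1/\rho)$. Since $H_2(\rho)\ge\rho\log(1/\rho)$, adding any bound on the entropy term to it overshoots the target $\rho(\log^2(1/\rho)+\frac12)$ for every $\rho<\frac14$. Absorbing via $\log(1/\rho)\le\frac12\log^2(1/\rho)$ yields only $\frac32\rho\log^2(1/\rho)+O(\rho)$.

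Your fallback does not rescue this. The constant $c$ in the statement never enters the displayed rate, whose leading coefficient is exactly $1$. So $1-c\rho\log^2(1/\rho)-\eta$ with $c>1$ is strictly weaker than the claim for small $\rho$. The value $\log^2(1/\rho)-\frac12$ is itself a slip in the footnote: $\frac12(\log^2(1/\rho)-\log^2 2)$ does not equal it. The footnote's entropy bound also drops a factor $\log_2 e$.

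What you need is the exact constant in the integral. With $L=\log_2(1/\rho)$ and $w=\ln(1/u)$,
\[
\int_\rho^{1/2}\frac{\log_2(1/u)}{u}\,du=\frac{1}{\ln 2}\int_{\ln 2}^{\ln(1/\rho)}w\,dw=\frac{\ln 2}{2}\,(L^2-1).
\]
So the leading coefficient is $\frac{\ln 2}{2}\approx 0.35$, and this slack is what absorbs the entropy term. Combine it with $H_2(\rho)\le\rho L+\rho\log_2 e$, which follows from $\ln\frac{1}{1-\rho}\le\frac{\rho}{1-\rho}$:
\[
1-R_{BZ}(\rho)\le\rho\Bigl(\tfrac{\ln 2}{2}L^2+L+\log_2 e-\tfrac{\ln 2}{2}\Bigr).
\]
This is at most $\rho(L^2+\frac12)$ iff $(1-\frac{\ln 2}{2})L^2-L-(\log_2 e-\frac{\ln 2}{2}-\frac12)\ge 0$. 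The left side is about $0.018$ at $L=2$ and increasing for $L\ge 2$, so the claim holds for all $\rho\le\frac14$. Even the coarser bound $\frac12(L^2-1)$ on the integral works once $\rho$ is somewhat smaller, which ``sufficiently small'' permits. With this computation replacing the quoted bound, the rest of your reduction to Corollary~\ref{cor:GR-BZ} goes through as written.
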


\begin{theorem}
    \label{thm:kt-ecc}
    Let $\params=(\lambda, N, f)$ where $\lambda$ is the security parameter and $N\gg \lambda$ is the space parameter. Let $T(\lambda,N), t(\lambda, N)=\poly(\lambda, N)$, and $\lin(\lambda, N)=\poly(\lambda, \log N)$, and $\lout(\lambda, N)=N/\poly(\lambda)$.

    Let $f:\{0,1\}^{\lin} \to \{0,1\}^{\lout}$ be a function computable in time $T$ such that one of the following holds:
    \begin{enumerate}[(a)]
        \item for $(1-\varepsilon)$ fraction of $x$, $\Kt(f(x)|x)>s(\lambda, N)$;
        \item for $(1-\varepsilon)$ fraction of $x$, $\pKt(f(x)|x)>s(\lambda, N)$;
        \item $f$ is $(t, s(\lambda, N))$-incompressible (w.r.t. an overwhelming $(1-\varepsilon)$ fraction of $x$).
    \end{enumerate} Let $\ECC=(\En,\De)$ be an error correcting code list-decodable from relative distance $\rho-\eta$ with block length $n=O(N)$, list size $\poly(n)=\poly(N)$ and decoding time $t_{\De}=\poly(n)=\poly(N)$. Define $f'(x)$ to be the following function \[
    f'(x)=\ECC.\En(f(x)).
    \]
    Then for $\delta=\rho-\eta$, we have, respectively,
    \begin{enumerate}[(a)]
        \item $\mathsf{K}_{\approx\delta}^{t-t_{\De}}(f'(x)\mid x) > s(\lambda, N)-O(\log N)$;
        \item $\mathsf{pK}_{\approx\delta}^{t-t_{\De}}(f'(x)\mid x) > s(\lambda, N)-O(\log N)$;
        \item $f'$ is $\delta$-approximate $(t-t_{\De}, s(\lambda, N)-O(\log N))$-incompressible.
    \end{enumerate}
\end{theorem}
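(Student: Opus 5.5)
We argue by contrapositive, handling all three cases with one compression argument. Suppose that for some $x$ there is a $y'$ with $\Delta(y', f'(x)) < \delta\cdot|f'(x)|$ and a short description $M$ (of length at most $s-c\log N$, with $c$ to be fixed) such that $U(M,x,1^{t-t_{\De}})=y'$. In case (b) the output is produced with randomness $r$; in case (c) with advice $z$. We turn $M$ into a description $M'$ of $f(x)$ itself, of length at most $|M|+O(\log N)$, that runs within time $t$.

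The program $M'$ will carry three pieces of information: $M$, an index $i\in[L]$ into the list, and a constant-size wrapper. Here $L=\poly(N)$ is the list size of $\ECC$, so $\log L=O(\log N)$. The program $M'$ will proceed as follows.
\begin{enumerate}
\item Run $M$ on $x$ (and on the same random tape $r$, or the same advice $z$) to obtain $y'$.
\item Run the list decoder on $y'$. Since $y'$ is within relative distance $\rho-\eta$ of the codeword $\En(f(x))$, the list contains $f(x)$.
\item Output the $i$-th list element, choosing $i$ to be the position of $f(x)$ in the list.
\end{enumerate}
The list decoder is deterministic, so the list and the position $i$ are determined by $y'$. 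Hence $i$ can be hardwired even in the probabilistic case: for every $r$ on which $M$ outputs $y'$, $M'$ outputs $f(x)$ when given the same $r$.

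The time of $M'$ is $(t-t_{\De}) + t_{\De}$ plus indexing overhead, which gives $t$, up to the universal-machine simulation overhead. This overhead is the main subtlety. I will absorb it into the $O(\cdot)$ conventions, taking $t_{\De}$ to already include the cost of simulating the decoder and selecting from the list, and treating emulation overhead as polynomial and folded into the time parameter as the paper does. The length is $|M'|\le |M|+\log L+O(1)\le s$ once $c$ is large enough.

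Case by case:
\begin{itemize}
\item[(a)] $\Kt(f(x)\mid x)\le s$.
\item[(b)] The event ``some $k$-bit $M$ outputs $y'$'' implies ``some $(k+O(\log N))$-bit $M'$ outputs $f(x)$'' for the same $r$. So the success probability of at least $2/3$ transfers, giving $\pKt(f(x)\mid x)\le s$.
\item[(c)] The same advice $z$ is used throughout, and $z$ does not depend on $x$. So $\aKt_z(f(x)\mid x)\le s$.
\end{itemize}

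The set of bad $x$ for $f'$ (those admitting a close, compressible $y'$) is therefore contained in the set of bad $x$ for $f$. Its measure is at most $\varepsilon$, which is negligible, and this gives each of the three conclusions.
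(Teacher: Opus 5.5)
Your proof is correct and follows the paper's argument: you wrap the short description of the nearby string with a hardwired $O(\log N)$-bit list index and run the deterministic list decoder, so every bad $x$ for $f'$ is also a bad $x$ for $f$. The paper leaves cases (b) and (c) as ``analogous''; you spell them out, observing that the index depends only on $y'$ and so is independent of the random tape, and that the same $x$-independent advice $z$ carries over, and both observations are correct.
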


\begin{proof}
    We prove the theorem for case (a), cases (b) and (c) follow analogously.

    We prove case (a) via a reduction to the property of $f$. We show that if there exists a string $y$ with $K^{t-t_{\De}}(y \mid x) \leq s(\lambda, N) - O(\log N)$ and $\Delta(y, f'(x)) < \delta \cdot |f'(x)|$, then we can use $y$ to efficiently compute $f(x)$, contradicting the property of $f$.

    Let $c$ be a constant such that the list size of $\ECC$ is at most $N^{c}$ (such $c$ exists since the list size is $\poly(n)$ and $n=O(N)$). Notice that since $K^{t-t_{\De}}(y \mid x) \leq s(\lambda, N)-c\log N-O(1)$, this means there exists a small state $\st$ with $|\st| \leq s(\lambda, N)-c\log N-O(1)$ and an algorithm $\cA$ such that $y = \cA(x, \st)$ and $\cA$ runs in time $t-t_{\De}$.

    We can construct an algorithm $\cB$ that uses $\cA$ as a subroutine to compute $f(x)$ as follows:

    \begin{mdframed}
        $\cB(x, \st, i):$
        \begin{enumerate}
            \item Compute $y = \cA(x, \st)$.
            \item Compute $f(x) = \ECC.\De(y,i)$ (Output the $i$th item in the list).
        \end{enumerate}
    \end{mdframed}

    It should be clear to see that $\cB$ runs in time $t$ and takes as input a state of size at most $|\st| + |i| + O(1) = s(\lambda,N)$.

    Correctness follows from correctness of list-decoding under the list-decoding radius.
\end{proof}

\section{Computational Depth from Hardness against Nondeterminism}

\begin{definition}
    We say a function $G:\zo^k\to\zo^n$ is an $(\varepsilon,\delta)$-multiplicative PRG for a computational class $\mathcal{C}$ if for any family of circuits $C\in\mathcal{C}$,
    \[\Pr[C(G(\cU_k))=1] \le e^\varepsilon\Pr[C(\cU_n)=1]+\delta\]
    where $\cU_k$ and $\cU_n$ denote random variables uniformly distributed over $k$-bit and $n$-bit strings, respectively.

\end{definition}

Very recently, Dermer and Shaltiel~\cite{DBLP:journals/eccc/DermerS26}, showed that the classic Shaltiel Umans construction~\cite{FOCS:ShaUma01} in conjunction with techniques developed in the context of randomness extraction~\cite{FOCS:TreVad00,FOCS:BGDM23,CCC:Shaltiel25} can be adapted to give such a PRG against the class of size $s$ nondeterministic circuits, with essentially optimal parameters.
\begin{theorem}[\cite{DBLP:journals/eccc/DermerS26}]\label{thm:PRG}
    If $\mathsf{E}$ is hard for exponential-size nondeterministic circuits (i.e.~$\exists L\in\mathsf{E}$ and constant $\beta>0$ such that any family of nondeterministic circuits of size $2^{\beta n}$ fails to decide $L$ on all but a finite set of input lengths), then for every sufficiently large $s$ and every $\delta \in [2^{-s},1/s]$ there is a $(1/s,\delta)$-multiplicative PRG,
    \[ G:\zo^{r=O(\log 1/\delta)}\to \zo^s\]
    for size $s$ nondeterministic circuits that is computable in time $\poly(s)$.
\end{theorem}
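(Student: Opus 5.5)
The statement is Theorem~\ref{thm:PRG}, cited from Dermer and Shaltiel, and I would not reprove its internals here. Instead I would sketch how such a multiplicative PRG arises from the hardness assumption. The approach is the standard hardness-versus-randomness paradigm, relativized to nondeterminism, with the usual additive-error analysis replaced by a ``multiplicative'' extractor-style analysis.

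\textbf{Step 1: a hard truth table.} Start from $L\in\mathsf{E}$ hard for nondeterministic circuits of size $2^{\beta n}$. Take its truth table on $m=O(\log s)$-bit inputs, of length $2^m=\poly(s)$. It is computable in time $\poly(s)$ and cannot be computed by nondeterministic circuits of size $s^{c}$ for a large constant $c$. If needed, first encode it with a locally list-decodable code, so that worst-case hardness becomes the average-case or list-decodability form used by the reconstruction argument.

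\textbf{Step 2: the PRG and its reconstruction.} Plug the truth table into the Shaltiel--Umans generator~\cite{FOCS:ShaUma01}, viewed as a reconstructive extractor in the sense of Trevisan--Vadhan~\cite{FOCS:TreVad00}. The key property to establish is the following. Let $D$ be a size-$s$ nondeterministic circuit with
\[ \Pr[D(G(\cU_r))=1] > e^{1/s}\Pr[D(\cU_s)=1]+\delta. \]
Then there is a short advice string with which a nondeterministic circuit, of size polynomial in $s$ and using $D$ as an oracle, computes the hard function. Since $D$ is nondeterministic and the reconstruction only uses $D$ positively (through next-bit predictors or list-decoding), the reconstructed circuit stays nondeterministic. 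This contradicts Step~1.

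\textbf{Step 3: the multiplicative error.} This is the main obstacle. A standard hybrid argument loses a factor of $s$ in the distinguishing advantage, and its additive analysis only gives error about $1/\poly(s)$, never $\delta\approx 2^{-r}$. Here I would follow the extractor approach of~\cite{CCC:Shaltiel25,FOCS:BGDM23}. View the seeds $x$ on which $D(G(x))=1$ as a set of density at least $\delta$, and apply a ``sampler-style'' argument so that only the \emph{ratio} of acceptance probabilities matters. A set of seeds too large relative to $\Pr[D(\cU_s)=1]$ then yields a compression of the truth table. Counting seeds, of which there are $2^r=\poly(1/\delta)$, against descriptions bounds the excess by $\delta$, while the multiplicative slack $e^{1/s}$ absorbs the hybrid loss. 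Choosing the seed length $r=O(\log 1/\delta)$ and keeping the $\poly(s)$ running time from the explicit truth-table computation completes the argument.
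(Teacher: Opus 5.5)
The paper gives no proof of this statement. It is imported from Dermer--Shaltiel~\cite{DBLP:journals/eccc/DermerS26}, with only a one-line description: the Shaltiel--Umans generator combined with extractor techniques from~\cite{FOCS:TreVad00,FOCS:BGDM23,CCC:Shaltiel25}. Deferring to that citation is exactly what the paper does, and your list of ingredients matches its description. The sketch you add, however, does not work as a proof, and two of its steps fail as stated.

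In Step~2, the standard reconstruction does \emph{not} use $D$ only positively. Yao's next-bit predictor outputs $b$ if $D(y_{<i},b,r_{>i})=1$ and $1-b$ otherwise. The decoding that turns many predictions into the hard function (majority votes, list-decoding or interpolation) must certify rejections of $D$ as well as acceptances. A nondeterministic circuit cannot certify rejection. What the reconstruction naturally gives is therefore a circuit using $\mathsf{NP}$-oracle gates (or at best single-valued nondeterminism), not a plain nondeterministic circuit. Contradicting the stated hardness assumption requires an additional argument.

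More seriously, Step~3 does not contain the idea that makes the theorem true. From a distinguisher whose additive advantage is only about $\delta$, the usual reconstruction produces a circuit of size $\poly(s/\delta)$. To contradict $2^{\beta m}$-hardness you would then need a truth table of length $\poly(1/\delta)$. So $G$ would run in time $\poly(1/\delta)$, which is superpolynomial in $s$ as soon as $\delta=s^{-\omega(1)}$. This is closely related to the black-box barrier for negligible additive error that the paper mentions~\cite{CCC:AASY15}. Since $G$ must run in time $\poly(s)$, what is needed is a reconstruction of size $\poly(s)$ \emph{independent of} $\delta$, with $\delta$ entering only through the seed length $O(\log 1/\delta)$. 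Your text names the papers where such an argument lives but does not supply it.

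Two further claims in Step~3 are wrong. The factor $e^{1/s}\approx 1+1/s$ is not slack that can absorb a factor-$s$ hybrid loss. It is part of what must be proved, and it leaves the reconstruction only the gap $(e^{1/s}-1)\Pr[D(\cU_s)=1]+\delta$ to work with. Likewise, ``counting seeds against descriptions'' is the argument showing that a \emph{random} function is a good generator, not an analysis of the explicit $G$. Either keep only the citation, as the paper does, or label Steps~2--3 explicitly as heuristic.
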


A distinguisher, $D$, is considered to be a $\gamma$-limited-nondeterministic $T$-time (non-uniform) computation if there exists a $T$-time (non-uniform) distinguisher $V$ such that $D(x)=1 \iff \exists w\in\zo^\gamma, V(x,w)=1$. In other words, such a distinguisher can use nondeterministic witness of length up to $\gamma$. Any limited nondeterministic computation is obviously a nondeterministic computation.

\begin{lemma}
    Let $S(t)$ denote the time to simulate a machine for $t$ steps. Let $\delta>0$.
    Let $G:\zo^k\to\zo^n$ be an $(\ln(2),\delta)$-multiplicative PRG for time-$(S(t)+O(n))$ $(n-\log(1/\delta))$-nondeterministic (non-uniform) computations. Then, for any advice $z\in\zo^*$
    \[ \Pr_{x\gets \cU_k}[\aKt_z(G(x)\mid x)\le n-\log(1/\delta)-k] \le 3\delta \]
\end{lemma}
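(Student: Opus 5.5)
Fix an advice string $z$ and set $s = n-\log(1/\delta)-k$. The plan is to turn the hypothetical compressibility of $G(x)$ into a limited-nondeterministic distinguisher. By the multiplicative PRG property, that distinguisher accepts $G(\cU_k)$ with probability at most twice its acceptance probability on uniform strings, plus $\delta$. A counting argument then bounds the acceptance probability on uniform strings.

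\textbf{The distinguisher.} With $z$ hardwired as non-uniform advice, define $D_z(y)=1$ iff there exist a program $M$ with $|M|\le s$ and a seed $x\in\zo^k$ such that $U(M,(x,z),1^t)=y$. The witness is the pair $(M,x)$. Encoding $M$ with a length prefix (at most $O(\log n)$ bits, which can be absorbed by treating $s$ as having slack or by padding conventions), the witness has length at most $s+k\le n-\log(1/\delta)$. So $D_z$ is an $(n-\log(1/\delta))$-nondeterministic computation. The verifier simulates $U$ for $t$ steps, taking time $S(t)$, and then compares the $n$-bit output with $y$ in time $O(n)$. Hence $D_z$ lies in the class that $G$ fools. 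Including $x$ in the witness is what removes the conditioning on $x$: the distinguisher only sees $y$.

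\textbf{Bounding the uniform case.} By counting, at most $2^{s+k+1}$ strings $y$ are accepted by $D_z$. Therefore
\[
\Pr_{y\gets\cU_n}[D_z(y)=1]\le 2^{s+k+1-n}=2\delta .
\]
This factor of $2$ is where care is needed. To get the stated $3\delta$, I would define the witness as exactly $s$ program bits plus $k$ seed bits, with programs padded to length exactly $s$ (for instance, a prefix-free or self-delimiting padding convention on $U$). Then there are at most $2^{s+k}$ accepted strings, and $\Pr_y[D_z(y)=1]\le\delta$.

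\textbf{Applying the PRG.} For every $x$ with $\aKt_z(G(x)\mid x)\le s$, we have $D_z(G(x))=1$, witnessed by $x$ together with the short program. The $(\ln 2,\delta)$-multiplicative property, with $e^{\ln 2}=2$, then gives
\[
\Pr_x[\aKt_z(G(x)\mid x)\le s]\le \Pr_x[D_z(G(x))=1]\le 2\Pr_y[D_z(y)=1]+\delta\le 3\delta .
\]

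\textbf{Main obstacle.} The main obstacle is the bookkeeping. The witness length and the count of accepted strings must both fit the budget $n-\log(1/\delta)$ exactly, which comes down to handling variable-length programs of length at most $s$. If an extra $O(1)$ or $\log$ term appears, I would absorb it into the simulation overhead $S(t)$ or into the universal machine's encoding conventions.
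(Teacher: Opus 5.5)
Your proposal is correct and follows the paper's proof. Both use a distinguisher $D_z$ that nondeterministically guesses a short program together with the seed $x$, bound $\Pr_{y\gets\cU_n}[D_z(y)=1]\le\delta$ by counting, and conclude via $\Pr[D_z(G(\cU_k))=1]\le 2\Pr[D_z(\cU_n)=1]+\delta\le 3\delta$. Your bookkeeping is in fact cleaner than the paper's. The paper writes the witness as $(\sigma,x)$ with $|\sigma|=n-\log(1/\delta)$, which overshoots the nondeterminism budget by $k$, and it silently drops the factor of $2$ coming from programs of length at most $s$; your self-delimiting (prefix-free) convention on $U$ is exactly what makes both the witness length $s+k$ and the count $2^{s+k}$ come out right.
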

We note that if one starts with a \emph{seed-extending} multiplicative PRG for nondeterminsitic circuits, then one will not suffer the loss of $k$ in the compression bound above. A seed-extending PRG remains pseudorandom even when given the seed (i.e.~$(X,G(X))\approx \cU$ for uniform seed $X$). In particular, applying Nisan Wigderson~\cite{DBLP:conf/focs/NisanW88} to Shaltiel's multiplicative (low-stretch) PRG~\cite{CCC:Shaltiel25} indeed yields such a PRG, albeit with worse seed length than the recent construction of Dermer and Shaltiel~\cite{DBLP:journals/eccc/DermerS26}. This latter construction~\cite{DBLP:journals/eccc/DermerS26} can perhaps be made seed extending (as was the case for Shaltiel and Uman's PRG~\cite{FOCS:ShaUma01} which it is based on), however we have not yet verified this.

\begin{proof}
    Consider the nondeterministic distinguisher $D_z$ that on input $y\in\zo^{n}$ accepts if and only if there exists a witness $(\sigma,x)\in\zo^{(n-\log(1/\delta))\times k}$ such that the output of simulating $\sigma$ with advice $z$ for $t$ steps on input $x$ yields $y$.

    By our assumption about the model, simulating $\sigma$ for $t$ steps takes time $S(t)$ so the distinguisher runs in time $S(t)$.

    First, note that by a counting argument it is the case that for any $m\in \mathbb{N}$ and $z\in\zo^*$, $\Pr[\exists x\in\zo^k, \aKt_z(\cU_n|x)\le m] \le 2^{m+k-n}$. So, $\Pr[\exists x\in\zo^k, \aKt_z(\cU_n|x)\le n-\log(1/\delta)-k]\le \delta$ and hence $\Pr[D_z(\cU_{n})=1]\le 3\delta$.

    On the other hand, assume towards contradiction that there exists $z$ such that $\Pr_x[\aKt_z(G(x)\mid x)\le n-\log(1/\delta)]>3\delta$. Then, $\Pr[D(G(\cU_k))=1]>3\delta$. But because $G$ is multiplicative PRG, we reach the following contradiction
    \[\Pr[D_z(G(\cU_k))=1]\le  2 \Pr[D_z(\cU_{n})=1]+\delta \le  3\delta.\]
\end{proof}

We observe that the conclusion can actually be strengthened significantly by considering the nondeterministic distinguisher that accepts nearby strings.
\begin{lemma}
	Let $S(t)$ denote the time to simulate a machine for $t$ steps. Let $\rho\in(0,1/2)$ and $\delta>0$.  Let $G:\zo^k\to\zo^n$ be an $(\ln(2),\delta)$-multiplicative PRG for time-$(S(t)+O(n))$ nondeterministic (non-uniform) computations. Then, for any advice $z\in\zo^*$
    \[ \Pr_{x\gets \cU_k}[\exists \hat{y}\in\zo^n : \Delta(\hat{y},G(x)\le \rho n\wedge \aKt_z(\hat{y}\mid x)\le n-\log(1/\delta)-k-O(1)] \le 3\delta. \]

\end{lemma}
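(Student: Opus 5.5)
The plan mirrors the proof of the preceding lemma: build one nondeterministic distinguisher that accepts every string lying near a short-description string, bound its acceptance on uniform inputs by counting, and transfer that bound to $G(\cU_k)$ using the multiplicative guarantee. The one genuinely new ingredient is charging for the Hamming ball, and I flag below a problem with doing that at the threshold as stated.

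\textbf{Step 1 (distinguisher).} Fix an advice string $z$ and set $m:=n-\log(1/\delta)-k-c_0$, with $c_0$ a constant to be chosen. Define $D_z$ on input $y\in\zo^n$ to accept iff there is a witness $(\sigma,x,e)$ with:
\begin{itemize}
\item $\sigma\in\zo^{\le m}$ and $x\in\zo^k$;
\item $e\in\zo^n$ of Hamming weight at most $\rho n$;
\item simulating $\sigma$ on input $(x,z)$ for $t$ steps outputs $y\oplus e$.
\end{itemize}
Verification costs $S(t)$ for the simulation plus $O(n)$ for the XOR and the weight check. So $D_z$ is a time-$(S(t)+O(n))$ nondeterministic computation with $z$ hardwired as non-uniform advice, which is exactly the class $G$ is assumed to fool. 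This is why the hypothesis here drops the witness-length restriction of the previous lemma: the witness now includes $e$, so it is no longer shorter than $n$. Whenever $x$ lies in the bad event of the statement, $D_z(G(x))=1$: take $\sigma$ to be the short program for $\hat y$ and $e=\hat y\oplus G(x)$.

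\textbf{Step 2 (uniform acceptance).} Counting over witnesses, the accepted set has size at most
\[
2^{m+1}\cdot 2^k\cdot |B(\rho n)|,
\]
where $|B(\rho n)|\le 2^{H_2(\rho)n}$ is the volume of a Hamming ball of radius $\rho n$ and $\rho<1/2$. Hence
\[
\Pr[D_z(\cU_n)=1]\le 2^{m+k+1+H_2(\rho)n-n}.
\]
I then want this to be at most $\delta$. With that in hand, the multiplicative PRG property with $e^{\varepsilon}=2$ gives
\[
\Pr_x[\text{bad event}]\le\Pr[D_z(G(\cU_k))=1]\le 2\Pr[D_z(\cU_n)=1]+\delta\le 3\delta,
\]
and the bound holds for every $z$, which is the claim.

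\textbf{Main obstacle: the ball volume.} Making $\Pr[D_z(\cU_n)=1]\le\delta$ requires absorbing the factor $2^{H_2(\rho)n}$. At the stated threshold $n-\log(1/\delta)-k-O(1)$, with the $O(1)$ a genuine constant independent of $n$, Step~2 only yields $2^{H_2(\rho)n+O(1)}\delta$. The loss is not an artifact of a loose count. The region accepted by $D_z$ genuinely has uniform measure about $2^{H_2(\rho)n}$ times the measure of the set of compressible strings. So no choice of distinguisher, and no tightening of the PRG's additive error, closes this gap.

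What I would try first is the accounting, not the distinguisher:
\begin{itemize}
\item Invoke the PRG at the smaller additive error $\delta'=\delta 2^{-H_2(\rho)n}$, which Theorem~\ref{thm:PRG} permits whenever $\delta'\ge 2^{-n}$.
\item Check whether the statement's $\log(1/\delta)$ is meant to refer to this PRG error parameter. If so, the stated threshold follows verbatim from Steps~1--2.
\end{itemize}
If that reading is not intended, the argument above establishes the statement only with the $O(1)$ term replaced by $H_2(\rho)n+O(1)$. I would then check this against the way the lemma is used later, where an $O(\rho\log(1/\rho)n)$ loss is consistent with the informal incompressibility theorem's $N-O(\rho N\log^2(1/\rho))$ bound.
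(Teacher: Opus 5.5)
Your Steps 1--2 follow the paper's argument: the same nondeterministic ``close to a short-description string'' distinguisher, the same Hamming-ball count, and the same transfer through the factor $e^{\ln 2}=2$. (Guessing the error vector $e$ is unnecessary. The paper's $D_z$ guesses only $(\sigma,x)$ and checks $\Delta(\hat y,y)\le\rho n$ deterministically in $O(n)$ time. It is harmless, since the hypothesis allows unrestricted nondeterminism.)

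Your diagnosis of the obstacle is correct, and the situation is worse than ``this method cannot reach the threshold'': the statement as written is false. Given any $x$, let $\hat y$ be the first $n-\log(1/\delta)-k-O(\log n)$ bits of $G(x)$, padded with zeros. A program printing $\hat y$ fits under the threshold $n-\log(1/\delta)-k-O(1)$; the $O(\log n)$ slack pays for encoding $n$ and the printing loop. It runs in $O(n)$ steps, and $\Delta(\hat y,G(x))\le\log(1/\delta)+k+O(\log n)$. Whenever this is at most $\rho n$ (e.g.\ $\delta=2^{-\log^2 n}$ with $k=O(\log(1/\delta))$, as in Theorem~\ref{thm:PRG}), the event holds for \emph{every} $x$, every $G$ and every $z$. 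So its probability is $1>3\delta$. Putting the generator matrix of a good linear covering code into $z$ even saves about $H_2(\rho)n$ bits for polynomial $t$, so your $H_2(\rho)n$ loss is essentially tight.

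The paper's proof does not escape this. Its displayed count $2^{k+H_2(\rho)n}$ is missing the factor $2^{|\sigma|-n}=\delta$. Restoring it gives $\delta\,2^{k+H_2(\rho)n}$, so ``the remainder proceeds similarly'' only yields the threshold $n-\log(1/\delta)-k-H_2(\rho)n-O(1)$. That is exactly the form $s_\rho(m,\mu)=m-H_2(\rho)m-r-\log(1/\mu)-O(1)$ used in Corollary~\ref{cor:approx-incompressible}. So your corrected statement is the one the paper actually needs, and your Steps 1--2 prove it.

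Discard your first fallback. Shrinking the PRG's additive error cannot help: as you note just before, the obstruction sits in the multiplicative term $2\Pr[D_z(\cU_n)=1]$, not in the additive one. Suppose the statement's $\delta$ is read as the PRG error $\delta'=\delta 2^{-H_2(\rho)n}$ in both places it occurs. Then Step~2 still gives only $\Pr[D_z(\cU_n)=1]\le 2^{H_2(\rho)n+O(1)}\delta'$, so the threshold does not follow ``verbatim.'' The only reading under which your Steps 1--2 produce the displayed threshold has the $\delta$ in the threshold equal to $2^{-H_2(\rho)n}$ times the $\delta$ in the bound $3\delta$. That is just your corrected lemma with the loss hidden in notation, and under it the PRG's additive error can stay at the larger value anyway. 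State and prove the lemma with the $H_2(\rho)n$ term explicit.
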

\begin{proof}
	Consider the nondeterministic distinguisher $D_z$ that on input $y\in\zo^{n}$ accepts if and only if there exists a witness $(\sigma,x)\in\zo^{(n-\log(1/\delta))\times k}$ such that the output of simulating $\sigma$ with advice $z$ for $t$ steps on input $x$ yields $\hat{y}$ with $\Delta(\hat{y},y)\le \rho n$.
	Next observe that any hamming ball of radius $\rho n$ has size at most $2^{H_2(\rho)n}$. So by a counting argument we have that for any $z,m$,
	\[
		\Pr_{y\gets\cU_{n}}[D_z(y)=1]\le \sum_x\Pr_y[D_z(x,y)=1]\le 2^{k+H_2(\rho)n}.
	\]
	 The remainder of the proof proceeds similarly to the previous.
\end{proof}

By simply truncating the PRG from Theorem~\ref{thm:PRG} and applying the lemma above, we get an explicit function with essentially maximal computational depth.
\begin{corollary}
    There are constants $\alpha,\beta,\gamma$ such that the following holds.
    If $\mathsf{E}$ is hard for exponential size nondeterministic circuits, then for any $t$ such that $t=\poly(n)$, and any $\delta \in (2^{-\alpha t},1/\beta t)$ there is
    a function $f:\zo^{r=\gamma\log 1/\delta} \to \zo^n$ such that
    \begin{enumerate}
        \item $f$ is computable in time $\poly(t)$
        \item for any fixed $z$, $\Pr_{x\gets \cU_r}[\aK^t_z(f(x))\mid x)\le n-\log(1/\delta)-r]\le 3\delta$.
    \end{enumerate}

    e.g.~if we take $t=n^c$ and $\delta=2^{-\sqrt{n}}$, then the resulting $f$ is $(t,n-O(\sqrt{n})$-incompressible with error $\delta=2^{-\sqrt{n}}$. Else if we take $\delta=2^{-\poly\log(n)}=\negl(n)$, then the resulting $f$ is $(t,n-O(\polylog))$ incompressible with $\negl(n)$ error.
\end{corollary}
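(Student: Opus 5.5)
The plan is to instantiate the preceding (non-approximate) lemma with the Dermer--Shaltiel PRG of Theorem~\ref{thm:PRG}, truncated to $n$ output bits. Fix $t=\poly(n)$. Let $s'$ be a polynomial bound, say $s'=\beta\cdot(S(t)+O(n))^2$, on the size of a nondeterministic circuit that simulates the time-$(S(t)+O(n))$ distinguisher $D_z$ from that lemma. The advice $z$ is hardwired, which is fine since nondeterministic circuits are non-uniform, and the witness $(\sigma,x)$ is taken as nondeterministic input. Choose $\beta$ so that $s'\ge \beta t \ge n$; this also guarantees $1/s'\le \ln 2$. Apply Theorem~\ref{thm:PRG} with size parameter $s'$ and error $\delta$. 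This needs $\delta\in[2^{-s'},1/s']$, which holds because $\delta\in(2^{-\alpha t},1/\beta t)$ once $\alpha,\beta$ are suitably chosen relative to the polynomial $s'(t)$. The result is a $(1/s',\delta)$-multiplicative PRG $G':\zo^{r}\to\zo^{s'}$ with $r=O(\log 1/\delta)$, which fixes $\gamma$, computable in time $\poly(s')=\poly(t)$. Since $e^{1/s'}\le 2$, it is in particular an $(\ln 2,\delta)$-multiplicative PRG.

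Next define $f(x)$ as the first $n$ bits of $G'(x)$. Truncation preserves the multiplicative guarantee. Any size-$s'$ nondeterministic circuit $C$ on $n$ bits lifts to $C'(w)=C(w_{1..n})$ on $s'$ bits, with the same size. Then $\Pr[C(f(\cU_r))=1]=\Pr[C'(G'(\cU_r))=1]$, and $\Pr[C'(\cU_{s'})=1]=\Pr[C(\cU_n)=1]$. So $f$ is an $(\ln 2,\delta)$-multiplicative PRG against all time-$(S(t)+O(n))$ nondeterministic non-uniform computations on $n$ bits. It is computable in $\poly(t)$ time, which gives item~1.

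Item~2 then follows by applying the first lemma of this section with $k=r$. That lemma directly gives $\Pr_x[\aK^t_z(f(x)\mid x)\le n-\log(1/\delta)-r]\le 3\delta$ for every fixed $z$. The witness-length restriction $n-\log(1/\delta)$ causes no trouble, because Theorem~\ref{thm:PRG} fools general nondeterministic circuits. For the examples, take $\delta=2^{-\sqrt n}$, which lies in the allowed range for $t=n^c$. Then $r=O(\sqrt n)$ and the bound becomes $n-O(\sqrt n)$. Taking $\delta=2^{-\polylog n}$ gives $r=\polylog n$ and negligible error.

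The main obstacle is bookkeeping the size conversion. A time-$(S(t)+O(n))$ nondeterministic computation with advice $z$ must become a nondeterministic circuit of size $s'=\poly(t)$, and $z$ must have polynomial length, or else be irrelevant beyond what the simulation reads in $t$ steps. This is what forces the choice of $s'$ and hence the constants $\alpha,\beta$. I would handle it by noting that only the first $t$ bits of $z$ can be read, so $z$ can be truncated to length $t$, and then applying the standard time-$T$-to-size-$O(T^2)$ (or $O(T\log T)$) circuit simulation.
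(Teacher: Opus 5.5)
Your route is the paper's, whose proof is the single sentence ``truncate the PRG of Theorem~\ref{thm:PRG} and apply the lemma above.'' You also get several details right that the paper leaves implicit: the truncation to $n$ bits, the witness-length restriction being moot because Theorem~\ref{thm:PRG} fools general nondeterministic circuits, and cutting $z$ down to the bits readable in $t$ steps.

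\textbf{The gap: the invocation of Theorem~\ref{thm:PRG} fails as written.} That theorem requires $\delta\le 1/s'$, while the corollary allows any $\delta<1/(\beta t)$. So the inclusion $(2^{-\alpha t},1/(\beta t))\subseteq[2^{-s'},1/s']$ that you assert needs $s'\le\beta t$, but you wrote the opposite inequality $s'\ge\beta t$. No constant $\beta$ can fix this. $s'$ must bound the size of a nondeterministic circuit simulating a time-$(S(t)+O(n))$ computation, and you even set $s'=\beta(S(t)+O(n))^2$, so $s'$ is superlinear in $t$. For $\delta\in(1/s',1/(\beta t))$ the theorem therefore cannot be invoked with error $\delta$. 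The paper's one-line proof glosses over the same point.

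\textbf{The fix is monotonicity in the error.}
\begin{itemize}
\item Invoke Theorem~\ref{thm:PRG} with error $\delta'=\min\{\delta,1/s'\}$. An $(\ln 2,\delta')$-multiplicative PRG is also $(\ln 2,\delta)$-multiplicative.
\item Apply the lemma with the original $\delta$, not $\delta'$. This keeps the threshold $n-\log(1/\delta)-r$ and the bound $3\delta$ exactly as stated.
\item The seed length is $O(\log(1/\delta'))=O(\log s'+\log(1/\delta))=O(\log(1/\delta))$. This holds because $s'=\poly(t)$ with an exponent fixed by the simulation overhead (using $t\ge n$), and because $1/\delta>\beta t$.
\item Pad the seed with ignored bits to length exactly $\gamma\log(1/\delta)$. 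The lemma still applies with $k=r$ equal to the padded length.
\item The lower endpoint is harmless: $s'\ge t$ gives $2^{-s'}\le 2^{-\alpha t}$ for any $\alpha\le 1$.
\end{itemize}
With this repair the rest of your argument, including the two examples, goes through.
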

Combining the above with Corollary~\ref{cor:ECC} and Theorem~\ref{thm:kt-ecc} we get the following corollary that says effectively that for any arbitrarily small constant $\epsilon$ there is a constant $\delta$ such that one can get a polynomial time expanding function which cannot be $\delta$-approximately compressed to size $(1-\epsilon)n$ (relative to a polynomial decompression time).
\begin{corollary}
    \label{cor:approx-incompressible}
    Assume $\mathsf{E}$ is hard for exponential size nondeterministic circuits.

    Then there is a polynomial $q$ such that for any polynomial decompression time $t$, every $\rho \in(0,1/2)$, and any $\mu \in (2^{-q(t+m)},1/q(t+m))$, there is an explicit function
\[
    F:\zo^r\to\zo^m,\qquad r=O(\log(1/\mu)),
\]
computable in time $\poly(t+m)$ such that for any fixed advice $z$
\[
    \Pr_{x\gets\cU_r}[\exists \hat{y}\in\zo^m: \Delta(\hat{y},F(x))<\rho m \wedge \aK^t_z(\hat{y}\mid x)\le s_\rho(m,\mu)]\le 3\mu,
\]
where
\[
s_\rho(m,\mu) = m-H_2(\rho)m-r-\log(1/\mu)-O(1).
\]
In other words, $F$ is $\rho$-approximate $(t,s_\rho(m,\mu)$-incompressible with error $3\mu$.

In particular if we take $\rho=o(1)$, then $H_2(\rho)m=O(\rho m \log(1/\rho)$.
\end{corollary}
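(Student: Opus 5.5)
The plan is to take $F$ to be a truncation of the Dermer--Shaltiel generator of Theorem~\ref{thm:PRG} and to apply the approximate-incompressibility lemma above (the one with the Hamming-ball distinguisher) directly. We do not detour through Corollary~\ref{cor:ECC} and Theorem~\ref{thm:kt-ecc}. The direct route gives exactly the loss $H_2(\rho)m$ in $s_\rho$, whereas list decoding would instead cost the rate deficit $\rho\log^2(1/\rho)$.

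\textbf{Step 1: instantiate the generator.} Let $s$ be the nondeterministic circuit size needed to implement the distinguisher $D_z$ from that lemma. The distinguisher guesses $(\sigma,x)$, simulates $U$ on $(\sigma,(x,z))$ for $t$ steps to get $\hat y$, and checks $\Delta(\hat y,y)\le\rho m$. This costs $S(t)+O(m)$ time, and non-uniformly a circuit of size $s=\poly(t+m)$ with $z$ hardwired. Choose the polynomial $q$ so that $s\le q(t+m)$, and enlarge $s$ to $\max(s,m)$ if needed. We need $\mu\in[2^{-s},1/s]$ for Theorem~\ref{thm:PRG} to apply; this follows from $\mu\in(2^{-q(t+m)},1/q(t+m))$ after adjusting $q$, taking $s=q(t+m)$ exactly. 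Theorem~\ref{thm:PRG} then gives a $(1/s,\mu)$-multiplicative PRG $G:\zo^{r}\to\zo^{s}$ with $r=O(\log 1/\mu)$, computable in time $\poly(s)=\poly(t+m)$. Since $1/s\le\ln 2$, it is in particular an $(\ln 2,\mu)$-multiplicative PRG. Define $F(x)$ to be the first $m$ bits of $G(x)$. Truncation preserves the multiplicative PRG property against size-$(s-O(1))$ circuits: any distinguisher on $m$ bits extends to $s$ bits by ignoring the suffix, and a uniform $s$-bit string truncates to a uniform $m$-bit string.

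\textbf{Step 2: apply the lemma and do the counting.} Apply the Hamming-ball lemma to $F$ with $n=m$, $k=r$, $\delta=\mu$. For each fixed advice $z$, the distinguisher $D_z$ is a nondeterministic circuit within the budget, so the multiplicative guarantee gives $\Pr[D_z(F(\cU_r))=1]\le 2\Pr[D_z(\cU_m)=1]+\mu$.

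The counting in the lemma's sketch needs to be redone carefully. Set the witness length to $s_\rho+1$. For each of the $2^{r}$ choices of $x$ and each description $\sigma$ of length at most $s_\rho$, there is at most one $\hat y$, and it covers a ball of at most $2^{H_2(\rho)m}$ strings $y$. Hence
\[\Pr_{y\gets\cU_m}[D_z(y)=1]\le 2^{r+s_\rho+1+H_2(\rho)m-m}.\]
With $s_\rho=m-H_2(\rho)m-r-\log(1/\mu)-O(1)$, this is at most $\mu$. Combining with the multiplicative bound gives $\Pr_x[D_z(F(x))=1]\le 3\mu$. Finally, the event in the corollary implies $D_z(F(x))=1$, using the witness $(\sigma,x)$ where $\sigma$ is the shortest program for $\hat y$.

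\textbf{Step 3: the last remark.} For $\rho=o(1)$, the bound $H_2(\rho)\le\rho(\log(1/\rho)+1)$, already derived in the footnote to Corollary~\ref{cor:ECC}, gives $H_2(\rho)m=O(\rho m\log(1/\rho))$.

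The main obstacle is parameter bookkeeping in Step 1: ensuring that the distinguisher's size, including simulation overhead, the hardwired $z$, and the distance check, fits within a single polynomial $q$ that is independent of $t$, and that $\mu$ lies in the admissible range of Theorem~\ref{thm:PRG}. The advice $z$ may be arbitrarily long, so it cannot be hardwired wholesale. I would handle this by noting that only the first $t$ bits of $z$ are readable within $t$ steps, so $z$ can be truncated to length $t$ first. Everything else is direct.
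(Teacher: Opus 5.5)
Your proof is correct, but it takes a different route from the one the paper cites. The paper justifies the corollary by composing the exactly incompressible truncated PRG from the preceding corollary with the list-decodable code of Corollary~\ref{cor:ECC} and invoking Theorem~\ref{thm:kt-ecc}. You instead apply the Hamming-ball distinguisher lemma directly to the truncated generator; this is the ``direct instantiation'' the paper alludes to in its list-decoding subsection.

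Your route is the one that actually yields the displayed parameters: radius exactly $\rho$ for every $\rho\in(0,1/2)$, loss $H_2(\rho)m$, additive slack $O(1)$, decompression time $t$, and error $3\mu$. The coding route instead gives:
\begin{itemize}
\item radius $\rho-\eta$;
\item a rate loss of order $\rho\log^2(1/\rho)$ (even the Blokh--Zyablov codes of Corollary~\ref{cor:GR-BZ} have rate strictly below $1-H_2(\rho)$);
\item an extra $O(\log m)$ for the list index;
\item time $t-t_{\De}$.
\end{itemize}
What the coding route buys is modularity: it upgrades \emph{any} exactly incompressible $f$, not only one coming from a multiplicative PRG against nondeterministic circuits.

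You were right to redo the counting. The paper's Hamming-ball lemma displays a bound with no $H_2(\rho)n$ term, and its sketch's estimate $2^{k+H_2(\rho)n}$ omits the $2^{|\sigma|-n}$ factor. Your estimate $2^{r+s_\rho+1+H_2(\rho)m-m}\le\mu$ is what produces $s_\rho$.

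Two small remarks:
\begin{itemize}
\item The size of $D_z$ grows with the guessed seed length $r$, which can be large when $\mu$ is near $2^{-q(t+m)}$. However, if $r+\log(1/\mu)\ge m$ then $s_\rho<0$ and the claim is vacuous. So you may assume $r<m$, and your single polynomial $q$ is well defined.
\item The footnote inequality $H_2(\rho)\le\rho(\log(1/\rho)+1)$ that you cite is false for small $\rho$; the constant should be $\log_2 e$. This does not affect the conclusion $H_2(\rho)m=O(\rho m\log(1/\rho))$.
\end{itemize}
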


In general, if a $G$ is secure against $\gamma$-limited nondeterministic distinguishers for $\gamma<n-\log(1/\delta)$, then one can lower bound the conditional $T$-time-bounded Kolmogorov complexity of $G$'s output (conditioned on the input) by $\gamma$.

We observe that limited nondeterministic hardness is in fact \emph{necessary} for high-on-average $\Kt$ complexity (and hence for high incompressibility). Nondeterministic computation is traditionally studied in the context of decision problems and these definitions are not always amenable to computing many valued functions. In the next proposition, we consider a multivalued nondeterministic computation to be one that outputs a \emph{set} of outputs. An \emph{multivalued $\gamma$-limited nondeterministic computation in time $t$} $N$ corresponds is defined by a deterministic time $t$ computation $M$ that takes an input and a witness of length $\gamma$: $N(x) := \{y: \exists w\in\zo^\gamma, M(x,w)=y\}$. We say such a computation fails to compute function if the output of the function is not contained in its output set. (We note that this is not the only way to define nondeterministic computation for multibit output, merely a simple and strong one that suffices in this context.)
\begin{prop}
    Let $f:\zo^k\to\zo^n$ such that $\Pr[\Kt(f(\cU_k))\ge \ell]>1-\varepsilon$.
    Then, for any multivalued $(\ell-k-\omega(1))$-limited nondeterministic time $t$ computation, $N$:
    \[\Pr_{x\sim\cU_k}[f(x) \notin N(x)]>1- \varepsilon\]
\end{prop}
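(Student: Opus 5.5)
The plan is a direct compression argument by contrapositive. Fix a multivalued $(\ell-k-\omega(1))$-limited nondeterministic time-$t$ computation $N$, given by a deterministic time-$t$ machine $M$ with witnesses of length $\gamma=\ell-k-\omega(1)$. Whenever $f(x)\in N(x)$, we will exhibit a short, fast description of $f(x)$. Concretely, if $f(x)\in N(x)$, there is a witness $w\in\zo^\gamma$ with $M(x,w)=f(x)$. Consider the fixed program $P_M$ which, on description $(x,w)$, runs $M(x,w)$ and outputs the result. Then $f(x)$ is produced from the string $(x,w)$ of length $k+\gamma$, plus the constant-size code of $M$, plus the overhead of a self-delimiting encoding of the split point between $x$ and $w$.

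The cost of that split point is $O(\log k)$, or $O(1)$ if $k$ and $\gamma$ are hard-wired into the constant-size program, since $N$ is fixed. The run time is $t$ up to the universal machine's simulation overhead. We therefore obtain
\[
\mathsf{K}^{t'}(f(x)) \le k+\gamma+O(1) = \ell-\omega(1) < \ell
\]
for sufficiently large parameters, where $t'$ is the simulation time of $M$.

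Hence $\{x : f(x)\in N(x)\}\subseteq\{x : \Kt(f(x))<\ell\}$. The latter set has probability less than $\varepsilon$ by hypothesis, which gives $\Pr_x[f(x)\notin N(x)]>1-\varepsilon$.

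The main subtlety is accounting for time and constants. The Kolmogorov bound is measured relative to $t$ on the universal machine, so I would interpret ``time $t$ computation'' in the proposition as time measured after simulation, matching the $\Kt$ time bound. Equivalently, I would absorb the polynomial simulation overhead into the meaning of $t$, as the paper does with $S(t)$ elsewhere. The $\omega(1)$ slack in the witness length is exactly what pays for the $O(1)$ description of $M$ and for the encoding of the boundary between $x$ and $w$. If the boundary is not hard-wired, an $O(\log k)$ term appears, and I would note that this is dominated under the intended parameter regime, or else hard-wire $k$ into $M$'s description.
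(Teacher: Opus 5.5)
Your proposal is correct and is essentially the paper's argument: on every $x$ with $f(x)\in N(x)$, you describe $f(x)$ by the code of $M$, the input $x$, and the witness $w$. This gives $\Kt(f(x))\le k+(\ell-k-\omega(1))+O(1)<\ell$, which by hypothesis can hold on less than an $\varepsilon$ fraction of inputs. You phrase it as a set inclusion rather than a contradiction, and you are more careful than the paper's sketch about the strict inequality, the simulation overhead, and delimiting $x$ from $w$, but the underlying idea is the same.
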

Viewed this way, multivalued nondeterminism shows we can interpret this proposition as saying that any function with high $\Kt$ complexity outputs must remain hard for time $t$ even when such computation is given arbitrary bounded leakage on the output.
\begin{proof}[Sketch]
    For the sake of contradiction, suppose $N$ is such a limited nondeterministic machine such that
    \[ \Pr_x[\exists w, |w|\le \ell-k \wedge N(x,w)=f(x)]\ge \varepsilon.\]
    Then for any $x$ such that the above event holds, $y=f(x)$ can be described via $\langle N \rangle$, $x$ and $w$. Thus, $\Kt(y) \le O(1)+|x|+|w| = O(1) + k + \ell-k-\omega(1) \le \ell$.
\end{proof}

In summary, one gets a function whose outputs have large computational depth (even when given the input) from fairly standard derandomization circuit lower bounds against nondeterministic circuits. And moreover, a weaker form of lower bounds against such circuits are in fact necessary. We leave constructing PRGs against limited nondeterministic circuits from minimal circuit lower bounds to follow up work. 

Finally we remark that such absolute incompressibility is not necessary in the context of PoS, only computational soundess is required. More precisely, in this setting, we only require that it is hard to find a compression in arbitrary polynomial time (not that no such compression exists).

%% file: extraction.tex
\section{Extraction Lemma}

In this section, we present our core technical lemma, the extraction lemma, which will be a central ingredient in our constructions of Proof of Space.

We first go through some useful lemmas that will be needed for the final extraction lemma.

\subsection{Useful Lemmas}

To prove the extraction lemma, we will need the following two lemmas.

\begin{lemma}
    \label{lem:density}
    Let $A$ be a randomized algorithm and let $\mathbb{X}$ denote a random variable. If $\Pr_{\mathbb{X}}[A(\mathbb{X}) = 1] > \varepsilon$, then
    $$\Pr_{\mathbb{X}}\left[\mathbb{X} \in \good\right] > \frac{\varepsilon}{2},$$
    where $\good = \{x : \Pr[A(x) = 1] > \varepsilon/2\}$ is the set of inputs on which $A$ accepts with probability greater than $\varepsilon/2$.
\end{lemma}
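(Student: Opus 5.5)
This is a standard averaging (reverse Markov) argument. Write $p(x) = \Pr[A(x)=1]$, where the probability is over the internal coins of $A$ only. Then $p(x)\in[0,1]$, and by the law of total probability
\[
\Pr_{\mathbb{X}}[A(\mathbb{X})=1] = \mathbb{E}_{\mathbb{X}}[p(\mathbb{X})].
\]
The plan is to split this expectation according to whether $\mathbb{X}\in\good$ and bound each part separately.

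On $\good$ we use the trivial bound $p(x)\le 1$. Off $\good$ we have, by the definition of $\good$, $p(x)\le \varepsilon/2$. Combining the two,
\[
\varepsilon < \mathbb{E}[p(\mathbb{X})] \le \Pr[\mathbb{X}\in\good]\cdot 1 + \Pr[\mathbb{X}\notin\good]\cdot \frac{\varepsilon}{2} \le \Pr[\mathbb{X}\in\good] + \frac{\varepsilon}{2}.
\]
Rearranging gives $\Pr[\mathbb{X}\in\good] > \varepsilon/2$, which is the claim.

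The only point needing care is that the coins of $A$ must be independent of $\mathbb{X}$, so that conditioning on $\mathbb{X}=x$ leaves exactly the probability $p(x)$ used in defining $\good$. This holds by the setup: $A$ is a randomized algorithm applied to a sample of $\mathbb{X}$. There is no real obstacle beyond this, since the argument is a single inequality chain. If $\mathbb{X}$ is not discrete, the same computation goes through with the expectation taken as an integral.
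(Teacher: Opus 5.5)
Your proof is correct and is essentially the paper's argument: the paper splits $\Pr[A(\mathbb{X})=1]$ by conditioning on $\mathbb{X}\in\good$ versus $\mathbb{X}\notin\good$, bounds the two conditional acceptance probabilities by $1$ and $\varepsilon/2$, and rearranges, which is what you do with $\mathbb{E}[p(\mathbb{X})]$. Your expectation-based phrasing is slightly cleaner, since it avoids conditioning on a possibly null event.
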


\begin{proof}
    We have
    \begin{align*}
    \varepsilon < \Pr_{\mathbb{X}}[A(\mathbb{X}) = 1] &= \Pr_{\mathbb{X}}[A(\mathbb{X}) = 1 \mid \mathbb{X} \in \good] \cdot \Pr_{\mathbb{X}}[\mathbb{X} \in \good] \\
    &\quad + \Pr_{\mathbb{X}}[A(\mathbb{X}) = 1 \mid \mathbb{X} \notin \good] \cdot \Pr_{\mathbb{X}}[\mathbb{X} \notin \good].
    \end{align*}
    The first conditional probability is at most $1$, and the second is at most $\varepsilon/2$ by definition of $\good$. Thus,
    $$\varepsilon < \Pr_{\mathbb{X}}[\mathbb{X} \in \good] + \frac{\varepsilon}{2}.$$
    Rearranging gives $\Pr_{\mathbb{X}}[\mathbb{X} \in \good] > \varepsilon/2$.
\end{proof}

\begin{lemma}[Index Covering]
    \label{lem:covering}
    Let $\beta=\beta(n)\in(0,1)$ and $\varepsilon\in(0,1)$.

    Let $\cG \subseteq [n]^k$ be a collection of ``good'' $k$-tuples such that $|\cG| \geq \varepsilon \cdot n^k$.
    Define
    \[
    	\Lambda_{\beta,\varepsilon,k} := k\log\frac{1}{1-\beta}-\log\frac{1}{\varepsilon}
    \]
    
    If $\Lambda_{\beta,\varepsilon,k}>0$, then for any $\lambda\ge 1$, $\alpha$ independent samples from the uniform distribution over $\cG$ cover at least $(1-\beta)n$ coordinates except with probability $2^{-\lambda}$, provided
    \[
    	\alpha\ge \frac{H_2(\beta)n+\lambda+2}{\Lambda_{\beta,\varepsilon,k}}
    \]
    In particular, let $\varepsilon \ge n^{-O(1)}$ and $k=\Theta(\log^2(n))$, then taking $\beta=1/\sqrt{\log n}$ gives $\alpha=O(n\log\log n /\log^2 n)$.
\end{lemma}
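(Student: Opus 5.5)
The plan is a union bound over all ``large uncovered sets'' together with the density of $\cG$. A union of $\alpha$ sampled tuples fails to cover at least $(1-\beta)n$ coordinates exactly when some set $B\subseteq[n]$ with $|B|=\lceil\beta n\rceil$ (more than $\beta n$ coordinates, up to rounding) is avoided by every sample. For such a $B$, a tuple avoids $B$ iff all $k$ of its coordinates lie in $[n]\setminus B$. Hence the number of tuples in $[n]^k$ avoiding $B$ is at most $((1-\beta)n)^k$.

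The probability that a uniform sample from $\cG$ avoids $B$ is therefore at most
\[
\frac{((1-\beta)n)^k}{|\cG|}\le \frac{(1-\beta)^k}{\varepsilon}=2^{-\Lambda_{\beta,\varepsilon,k}}.
\]
This bound is nontrivial precisely when $\Lambda_{\beta,\varepsilon,k}>0$, which is the stated hypothesis. By independence of the $\alpha$ samples, all of them avoid $B$ with probability at most $2^{-\alpha\Lambda_{\beta,\varepsilon,k}}$.

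Next I would union bound over the choices of $B$. The number of subsets of size about $\beta n$ is at most $\binom{n}{\lceil\beta n\rceil}\le 2^{H_2(\beta)n}$, up to rounding, which I would absorb into the additive $+2$ slack (e.g.\ by using sets of size $\lfloor \beta n\rfloor+1$, or bounding $\sum_{j\le\beta n}\binom nj\le 2^{H_2(\beta)n}$ for $\beta\le 1/2$). The failure probability is then at most $2^{H_2(\beta)n+O(1)-\alpha\Lambda}$. This is at most $2^{-\lambda}$ once $\alpha\ge (H_2(\beta)n+\lambda+2)/\Lambda$, as claimed.

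The ``in particular'' part is a calculation. Take $k=\Theta(\log^2 n)$ and $\beta=1/\sqrt{\log n}$. Then $k\log\frac1{1-\beta}\ge k\beta\log e=\Theta(\log^{3/2}n)$, which dominates $\log(1/\varepsilon)=O(\log n)$, so $\Lambda=\Theta(\log^{3/2}n)$. Meanwhile $H_2(\beta)n=O(\beta\log(1/\beta)n)=O(n\log\log n/\sqrt{\log n})$. The quotient is $O(n\log\log n/\log^2 n)$, taking $\lambda\le n$ say.

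The only delicate step is the rounding in ``cover at least $(1-\beta)n$'' versus set size. The main care point is getting the binomial estimate $\binom{n}{\beta n}\le 2^{H_2(\beta)n}$ to fit within the constant slack; otherwise the argument is routine.
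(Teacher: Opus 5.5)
Your proposal is correct and is essentially the paper's proof: fix $B$ of size $\lceil\beta n\rceil$, bound the per-sample avoidance probability by $(1-\beta)^k/\varepsilon=2^{-\Lambda_{\beta,\varepsilon,k}}$, use independence across the $\alpha$ samples, and union bound using $\binom{n}{\lceil\beta n\rceil}\le 2^{H_2(\beta)n+2}$. The paper simply asserts that last estimate; for $\beta=1/\sqrt{\log n}$ it follows from the sharp Stirling bound, whose roughly $1/\sqrt{\beta n}$ factor absorbs the rounding loss of at most $\log((1-\beta)/\beta)$. One slip in your final calculation: taking $\lambda\le n$ is not enough, because $n/\Lambda_{\beta,\varepsilon,k}=\Theta(n/\log^{3/2}n)$ already exceeds the claimed $O(n\log\log n/\log^2 n)$; you need $\lambda=O(H_2(\beta)n)$, for instance the constant $\lambda=1$ used in the extraction lemma.
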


\begin{proof}
    Fix an arbitrary set $B \subseteq [n]$ of size $|B| = \lceil\beta n\rceil$. Let $\cS_B \subseteq [n]^k$ be the set of $k$-tuples that avoid $B$ entirely, i.e., tuples with all coordinates in $[n] \setminus B$. Then $|\cS_B| \le ((1-\beta)n)^k$.

    Let $S$ be a uniformly random $k$-tuple from $\cG$. The probability that $S$ avoids $B$ is
    \[
    \Pr[S \in \cS_B] = \frac{|\cS_B \cap \cG|}{|\cG|} \leq \frac{|\cS_B|}{|\cG|} \leq \frac{((1-\beta)n)^k}{\varepsilon \cdot n^k} = \frac{(1-\beta)^k}{\varepsilon} = 2^{-\Lambda_{\beta,\varepsilon,k}}.
    \]

    Therefore if we sample $\alpha$ independent $k$-tuples from $\cG$, the probability that all of them avoid $B$ is at most $\left(\frac{(1-\beta)^k}{\varepsilon}\right)^\alpha$. Notice that $$\frac{(1-\beta)^k}{\varepsilon}= 2^{\log(1/\varepsilon)-k\log(1/(1-\beta))}=2^{-\Omega(k)}.$$

    Now, the entropy bound gives (where $H_2$ is the binary entropy function):
    \begin{align*}
        \binom{n}{\lceil \beta n\rceil} \le 2^{H_2(\beta)n+2}
    \end{align*}
    
    Thus, by a union bound over all $\binom{n}{\lceil\beta n\rceil}$ choices of $B$, the probability that there exists some set of $\lceil \beta n\rceil$ indices not covered by any sample is at most
    \[
        2^{H_2(\beta)n+2}\cdot \left(\frac{(1-\beta)^k}{\varepsilon}\right)^\alpha \le 2^{H_2(\beta)n+2-\alpha\Lambda_{\beta,\varepsilon,k}}
    \]
	By the assumed lower bound on $\alpha$, this is at most $2^{-\lambda}$.

\end{proof}

\subsection{Extraction Lemma}

Finally, we present the extraction lemma. This roughly says that if there is a two-phase (polytime initialization that generates a small state to pass to a $t$-time second phase) algorithm that succeeds in violating the Proof of Space guarantees when the verifier has access to the \emph{correct} commitment to the incompressible string $\vy$, then $\vy$ is approximately incompressible (relative to time decompression time $t$). Note that this decompression time is crucially independent of the time to compute $f$, $T$. 

We frame our argument as using advice, however under the derandomization assumption if the adversarial prover $\cA$ is uniform then this advice is dominated by the length of $\params$. (We remark that at the expense of longer advice the derandomization assumption is unnecessary.)

Note that this critically does not say anything about how the verifier got the correct commitment. In Section~\ref{sec:pos-constructions} we show how to use certain succinct proof systems to do just this.

\begin{lemma}[Extraction Lemma]
    \label{lem:extraction}
    Assume $\mathsf{E}=\mathsf{DTIME}(2^{O(n)})$  is hard for exponential size nondeterministic circuits.
    
    Let $\MERKLE=(\Commit, \Open, \Verify)$ be a 2-PLOSC scheme that is $\varepsilon_1$-secure.
    Fix approximation parameter $\beta=\beta(N)\in(0,1/2]$, a query length $k<N$, and $\varepsilon_2\ge N^{-O(1)}$. Let $\gamma:=\varepsilon_2-k\varepsilon_1$ and define
    \[
    	\Lambda:= k\log\frac{1}{1-\beta}-\log\frac{4}{\gamma}.
    \]
    Assume $\gamma\ge N^{-O(1)}$ and $\Lambda>0$ and set
    \[
    	\alpha \ge \lceil \frac{NH_2(\beta)+3}{\Lambda}\rceil.
    \]

    Let $\ell_h$ be an upper bound on the number of bits used to sample $\vh$ in the 2-PLOSC.

    Fix $\cA=(\cA_0, \cA_1)$ where $\cA_0$ is a non-uniform PPT initialization algorithm and $\cA_1$ is a (randomized) non-uniform algorithm running in time $t$. Then, there is a fixed non-uniform advice string $z$ (that is comprised of $\params$, 
    $\cA_1$, 
    and a HSG description) such that:

    If $\vy\in\Sigma^N$ and $x$ such that $f(x)=\vy$,
    \[ 
    \Pr\left[\forall j \in [k],\Verify(\params,c,i_j,y'_j,\vp_j)=1;
    \begin{array}{c}
        \vh\gets \cH_{\text{2-PLOSC}},\\
        (c,\st)\gets \Commit(\params,\vh,\vy),\\
        \sigma\gets \cA_0(\params,\vh,\vy,x,c):\sigma\in\zo^m,\\
        \vi = (i_1,\ldots,i_k)\gets [N]^k,\\ (y_1',\ldots,y_k',\vp_1,\ldots,\vp_k)\gets \cA_1(\params,\vh,\sigma,\vi)
    \end{array}\right]\ge \varepsilon_2,
    \]
    
    Then
    there is a string $\hat{\vy}$ satisfying $\Delta(\vy,\hat{\vy})\le \beta N$ such that 

    \[
        \aK^{t'}_z(\hat{\vy}\mid x) \le m + O\left(\alpha\log\frac{1}{\gamma}+\log (N+t+\ell_h+m)\right)
    \]
    where $t'=\poly(t\alpha+N+\ell_{\vh}+m)$.
    
    In particular, for $\varepsilon_1=\negl(N)$ and $k=\Theta(\log^2 N)$, if we take $\beta=1/\sqrt{\log N}$, then we have $\alpha=O(N\log\log N/\log^2 N)$. Thus,
    \[
    	\Delta(\vy,\hat{\vy})\le O(N/\sqrt{\log N}) \qquad \mbox{and} \qquad \aK^{t'}_z(\hat{\vy}\mid x) \le m + O\left(\frac{N\log\log N}{\log N}+\log(t+\ell_h+m)\right)
    \]

\end{lemma}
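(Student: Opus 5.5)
The plan is to build a short, fast description of an approximation $\hat{\vy}$ from $\sigma$ plus a small number of extra bits. Derandomized sampling will select "useful" query tuples on which $\cA_1$ answers correctly, and the openings from these tuples will cover most coordinates of $\vy$.

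\textbf{Step 1: good $(\vh,\sigma)$ and useful tuples.} By the hypothesis and Lemma~\ref{lem:density}, applied to the randomness of $(\vh,\text{coins of }\cA_0)$, with probability greater than $\varepsilon_2/2$ the pair $(\vh,\sigma)$ is such that a random $\vi$ (together with $\cA_1$'s coins) is accepted with probability greater than $\varepsilon_2/2$. Next I use 2-PLOSC security. Consider the adversary that runs $\cA_0,\cA_1$ and outputs a random one of the $k$ openings that is accepted but has $y'_j\neq y_{i_j}$. Its success probability is at most $\varepsilon_1$, so the probability that some accepted opening is wrong is at most $k\varepsilon_1$. This is where $\vy$ is fixed before $\vh$, which is exactly what 2-PLOSC needs; note that $\cA_0$ receives $\vy$ and $x$. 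Hence the event "all $k$ openings verify and all are correct" has probability at least $\gamma=\varepsilon_2-k\varepsilon_1$. Applying Lemma~\ref{lem:density} again, there is a fixed $(\vh,\sigma)$, and I take the extra advice to be $\vh$ or its seed, contributing the $\ell_h$ term. For this pair, the set $\cG$ of pairs (tuple, coins) on which $\cA_1$ returns all-correct verified answers has density at least $\gamma/2$. Marginalizing, the $\vi$ with conditional success at least $\gamma/4$ have density at least $\gamma/4$, which is the source of the $\log(4/\gamma)$ in $\Lambda$.

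\textbf{Step 2: covering.} Apply Lemma~\ref{lem:covering} with $\varepsilon=\gamma/4$ and $\lambda=1$. Then $\alpha$ independent uniform samples from $\cG$ cover at least $(1-\beta)N$ coordinates with probability at least $1/2$. It follows that the set $W\subseteq(\{0,1\}^{*})^{\alpha}$ of $\alpha$-sequences of (tuple, coins) that are all useful and jointly cover $(1-\beta)N$ coordinates has density at least $\tfrac12(\gamma/4)^{\alpha}$. Membership in $W$ is decidable by a circuit of size $\poly(t\alpha+N+\ell_h+m)$ that has $\vy$ hardwired: it runs $\cA_1(\params,\vh,\sigma,\cdot)$ $\alpha$ times, verifies, checks correctness against $\vy$, and counts covered coordinates.

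\textbf{Step 3: hitting and decoding.} By Lemma~\ref{lem:hsg}, whose seed length is $\log(1/\mathrm{density})+c\log(\text{size})=O(\alpha\log(1/\gamma)+\log(N+t+\ell_h+m))$, some seed $u$ has $G(u)\in W$. The generator $G$ is fixed in advance, since its description depends only on the size bound, and it is part of $z$. The description of $\hat\vy$ given $x$ is $(\sigma,\vh\text{ or its seed},u)$, and $z$ contains $\params,\cA_1,G$. The decompressor computes $G(u)$, runs $\cA_1$ on the $\alpha$ tuples, checks each opening with $\Verify$, writes the verified symbols into their positions, and fills the rest arbitrarily. Verified symbols are correct because $G(u)\in W$, so $\Delta(\hat\vy,\vy)\le\beta N$. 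The running time is $t'=\poly(t\alpha+N+\ell_h+m)$.

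\textbf{Main obstacle.} The main obstacle is handling the $\ell_h$ bits. The statement charges only $\log\ell_h$, so I cannot write $\vh$ explicitly. I would fold it into the hitting step instead: hit jointly over $(\vh,\text{coins},\text{tuples})$, using the density-$\gamma/2$ event from Step 1 for fixed $\sigma$. This requires that $\sigma$ be fixed independently of $\vh$, which I would obtain by an averaging/Adleman-style fixing of $\cA_0$'s coins (also folded into $z$, or handled by treating $\sigma$ as existentially chosen per $\vh$ inside the hit circuit). The remaining bookkeeping is to check that the resulting loss keeps the density at least $2^{-O(\alpha\log(1/\gamma))}$. The final "in particular" clause then follows by plugging $k=\Theta(\log^2N)$ and $\beta=1/\sqrt{\log N}$ into Lemma~\ref{lem:covering}'s parameter calculation.
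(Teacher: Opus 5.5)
There is a genuine gap, and it sits exactly in the step you defer as the ``main obstacle.'' As written, Steps~1--3 describe $\hat\vy$ by $(\sigma,\vh,u)$. That gives length $m+\ell_h+O(\alpha\log(1/\gamma)+\log(N+t+\ell_h+m))$, not the claimed bound, and none of your proposed repairs closes it:
\begin{itemize}
\item Fixing $\cA_0$'s coins does not make $\sigma$ independent of $\vh$, because $\sigma=\cA_0(\params,\vh,\vy,x,c)$ takes $\vh$ and $c=c_{\vh}$ as inputs. Such coins also could not be placed in $z$, which is fixed before $x$.
\item If you keep one explicitly written $\sigma^*$, averaging only guarantees $\sigma_{\vh}=\sigma^*$ on a $2^{-m}\cdot\gamma/2$ fraction of $\vh$. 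Hitting over $\vh$ then costs $m$ further seed bits, so the description has length about $2m$. That is useless in the application, where $m\approx s\approx N$.
\item Quantifying $\sigma$ existentially inside the circuit makes the circuit nondeterministic, so Lemma~\ref{lem:hsg} (stated for deterministic circuits) no longer applies. Even if it did, the decompressor has no $\vy$ and could not recover $\sigma$ from $\vh$ within time $t'$.
\end{itemize}

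The missing idea is to let the hitting set generate $\sigma$ as well. For each good $\vh$, fix some $\sigma_{\vh}$; it is allowed to depend on $\vh$. Let the circuit $C_{x,\vy}$ take $(\sigma,\vh,\vi_1,\ldots,\vi_\alpha,r_1,\ldots,r_\alpha)$ as input. It recomputes $c=\Commit(\params,\vh,\vy)$ from the hardwired $\vy$, runs $\cA_1$ with state $\sigma$ on the $\alpha$ tuples, and checks three things: every opening verifies, every symbol agrees with $\vy$, and the tuples cover a $(1-\beta)$ fraction. Its accepting set has density at least $2^{-m}\cdot\frac{\gamma}{2}\cdot\frac12(\gamma/4)^{2\alpha}$. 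Here $2^{-m}$ is the chance that a uniform $\sigma\in\zo^m$ equals $\sigma_{\vh}$, and the coins supply one of the two factors $(\gamma/4)^{\alpha}$. The seed length in Lemma~\ref{lem:hsg} is $\log(1/\varepsilon)$ plus a logarithmic term in the circuit size, with coefficient exactly one on $\log(1/\varepsilon)$. So the factor $2^{-m}$ costs exactly $m$ seed bits, and the whole description of $\hat\vy$ is a single seed of length $m+O(\alpha\log(1/\gamma)+\log(N+t+\ell_h+m))$. The parameter $\ell_h$ enters only through the circuit size. This is how the paper's proof proceeds.

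A smaller point: your decompressor cannot run $\Verify$, since that needs $c$ and hence $\vy$. Just drop that check; membership of $G(u)$ in the hit set already certifies every returned symbol.
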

Before proving we make some remarks about this lemma, whose proof is more general than the lemma suggests.
\begin{remark}
    The derandomization assumption in this extraction lemma is almost completely superfluous, but we believe using it makes the resulting statement/proof cleaner (and we are already relying on it for incompressibility). This assumption is used solely to instantiate the HSG explicitly. In particular, because the advice string is comprised solely of $\params$ and descriptions of algorithms $\cA$ and the HSG, this actually gives a bound on $\mathrm{K}^t(\hat{y}\mid x)$, with a (additive) $|\params|+O(1)$ loss over what is in the lemma. 
    
    On the other hand, at the expense of longer advice $z$, one can instead instantiate the HSG unconditionally using lemma~\ref{lem:mc-hsg}. Because contradicting incompressibility only requires compression of a non-negligible fraction, one need not compress the witness set of any circuit $C_{x,\vy}$ but only a significant fraction of them (see proof below for description of these circuits). Hence, one will get that for typical $x$ on which $\cA$ is successful, even better parameters than above:
    \[\aK^{t'}_z(\hat{\vy}|x)\le m+O\left(\alpha\log(1/\gamma)+\log N\right).\]
\end{remark}
\begin{proof}
    Call an execution of $\cA$ with commitment (root) $c$ when given a tuple $\vi=(i_1,\ldots,i_k)$ \emph{accepting} if all $k$ openings verify against $c$ and \emph{correct} if $y'_j=y_{i_j}$ for every $j\in[k]$. 
    
    Intuitively, the binding of the 2-PLOSC tells us that correct and accepting executions must occur nearly as frequently as accepting executions. In particular, if an execution accepts but is not correct then for some coordinate $j$ it gives an accepting opening to $c$ whose value is not $y_j$, which violates soundness. Hence by a union bound over the $k$ coordinates and the $\varepsilon_1$-security of the 2-PLOSC, such an execution happens with probability at most $k\varepsilon_1$. Hence, the probability of a correct \emph{and} accepting execution is at least $\gamma=\varepsilon_2-k\varepsilon_1$.

    Now, for each $\vh$, let $(c_{\vh},\st_{\vh})= \Commit(\params,\vh,\vy)$ and define
    \[
        p_{\vh}:= \Pr_{\sigma\gets \cA_0(\params,\vh,\vy,x,c_{\vh}),\vi,\cA_1} \left[\cA_1(\params,\vh,\sigma,\vi) \mbox{ is correct and accepting with root $c_{\vh}$}\right].
    \]
    By Lemma~\ref{lem:density}, because $\bbE_{\vh}[p_{\vh}]\ge \gamma$,\footnote{In other words, $\Pr_{\vh}[\cA^*(\vh)=1]\ge \gamma$ where $\cA^*$ is the randomized procedure on the RHS above such that $\Pr[\cA^*(\vh)=1]=p_{\vh}$.}
    it follows that if
    \[ 
        \good_{x,\vy}:=\{\vh\mid p_{\vh}\ge \gamma/2\},
    \]
    then,
    \[
    \Pr_{\vh}[\vh\in\good]\ge \gamma/2.
    \]
    For any $\vh\in\good$, by an averaging argument over $\cA_0$ there must be some $\sigma_{\vh}$ of length $m$ such that
    \[
        \Pr_{\vi}[\cA_1(\params,\vh,\sigma_{\vh},\vi) \mbox{ is correct and accepting with root }c_{\vh}]\ge \gamma/2.
    \]
    Fix some such $\sigma_{\vh}$ for each $\vh$. Now given this we can define 
    \[
        \cG_{\vh}:= \left\{\vi \mid \Pr[\cA_1(\params,\vh,\sigma_{\vh},\vi)\mbox{ is correct and accepting with root }c_{\vh}]\ge \gamma/4 \right\}.
    \]
    Again invoking Lemma~\ref{lem:density}, we have $$\Pr[\vi\in \cG_{\vh}]\ge \gamma/4.$$
    
    Apply Lemma~\ref{lem:covering} with $\varepsilon=\gamma/4$ and $\lambda=1$. By the bound on $\alpha$, if independent $\vi_1,\ldots,\vi_\alpha$ are sampled from $\cG_{\vh}$ then they will cover $(1-\beta)$-fraction of $[N]$ with probability at least $1/2$. Hence if the $\vi_1,\ldots,\vi_\alpha$ are sampled uniformly from the entire space, $[N]^k$, then the probability all hit $\cG_{\vh}$ and cover $(1-\beta)$-fraction of $[N]$ is at least
    \[
    	\frac{1}{2}(\gamma/4)^\alpha
    \]
    Now consider the set of strings comprised of the hash, tuples, and coins for $\cA_1$ for each invocation,
    \[
        \omega=(\vh,\vi_1,\ldots,\vi_\alpha,r_1,\ldots,r_\alpha).
    \]
    Let $\great_{x,\vy}$ denote the set of pairs $(\sigma,\omega)$ such that $\vi_1,\ldots,\vi_\alpha$ in $\cG_{\vh}$ and cover a $(1-\beta)$ fraction, and all invocations of $\cA_1$ on the respective coins and state $\sigma$ are correct and accepting. Then by the discussion above,
    \[
        \Pr_{(\sigma,\omega)}[(\sigma,\omega)\in\great]\ge 2^{-m}\cdot \underbrace{\gamma/4 (\gamma/4)^{2\alpha}}_{p^*:=\Pr_\omega[\exists \sigma,(\omega,\sigma)\in\great]}.
    \]
    Now consider the circuit $C_{x,\vy}$ that has $x$ and $\vy$ hardwired and recognizes $\great_{x,\vy}$. It does the following on input $(\sigma,\omega)$:
    \begin{enumerate}
        \item Parse $\omega$ into $\vh, \vi_1,\ldots, \vi_\alpha,r_1,\ldots,r_\alpha$.
        \item Compute $(c,\st)=\Commit(\params,\vh,\vy)$.
        \item Simulate $\alpha$ runs of $\cA_1$ using $x$, state $\sigma$, indices $\vi_j$ and randomness $r_j$ (for $j=1,\ldots,\alpha$), to get $\hat{\vy}_{\vi_1},\ldots,\hat{\vy}_{\vi_\alpha},\hat{\pi}_1,\ldots,\hat{\pi}_\alpha$.
        \item Check that all proof, output pairs $(\hat{\vy}_{\vi_j},\hat{\pi}_j)$ verify.
        \item Check that all symbols in $\hat{\vy}_{\vi_1},\ldots,\hat{\vy}_{\vi_\alpha}$ match $\vy$.
        \item Check that $\vi_1,\ldots,\vi_\alpha$ cover $(1-\beta)$ fraction of $[N]$.
        \item Accept iff all checks pass.
    \end{enumerate}
    Note that $C_{x,\vy}$ is $\poly(t\alpha+N+\ell_h+m)$-size so we can hit a witness with a good Hitting Set Generator (HSG). Critically, this is independent of $T$, the time to compute $f$. Applying Lemma~\ref{lem:hsg} to $C_{x,\vy}$ with error $2^{-m}p^*/2$ yields a $\poly(t\alpha+N+\ell_h+m)$-time function $G$ and seed $s^*$ of length
    \[
        m+O\left(\log(1/p^*)+\log\left(\alpha(k\log N +t) + N + \ell_{\vh}+m\right)\right) = m+O\left(\alpha \log(1/\gamma)+\log(N+t+\ell_h+m)\right)
    \]
    such that if $G(s^*)=(\sigma^*,\omega^*)$ then $C_{x,\vy}(\sigma^*,\omega^*)=1$. Hence we can use $\sigma^*,\omega^*$ and $\cA_1$ to recover all but $\beta$-fraction of $\vy$.
    
    Thus, we can define a decompressor $D_{z}$ with advice $z$ that does just that. The advice $z$ is comprised of $\params$, description of $\cA_1$, the HSG $G$'s description. On input $(x,s^*)$, $D_z$ does the following:
    \begin{enumerate}
        \item Expand $G(s^*)=(\sigma^*,\omega^*)$ and parse $\omega^*=(\vh^*,\vi_1^*,\ldots, \vi_\alpha^*,r_1^*,\ldots,r_\alpha^*)$.
        \item For each $j\in [\alpha]$, compute $\cA_1(\params,\vh^*,\sigma^*,\vi_j^*;r_j^*)=\hat{\vy}_{\vi^*_j},\pi_j$.
        \item Set $\hat{\vy}$ to be consistent with $\hat{\vy}_{\vi_1^*},\ldots,\hat{\vy}_{\vi_\alpha^*}$ and 0 otherwise.
        \item Return $\hat{\vy}$.
    \end{enumerate}
    Because $(\sigma^*,\omega^*)\in\great_{x,\vy}$ (and hence $\vi_j$'s cover all but $\beta$-fraction and all $\hat{\vy}_{\vi_{j}}$'s returned by $\cA_1$ match $\vy$), we know that the $\hat{\vy}$ returned is such that $\Delta(\vy,\hat{\vy})\le \beta N$.

    Moreover by inspection,
    \[
        \aK^{t'}_z(\hat{\vy}|x)\le m+O\left(\alpha\log(1/\gamma) + \log(N+t+\ell_{\vh}+m)\right)
    \]
    for $t'=\poly(t\alpha+N+\ell_h+m)$.
\end{proof}

%% file: pos-2.tex
\section{Proof of Space Constructions}\label{sec:pos-constructions}

This section instantiates the framework in two ways, building on prior sections.

Throughout, let $n=\lout$ be the output length of $f$.  In this section we use
one concrete extraction regime:
\[
    k=\Theta(\log^2 n),
    \qquad
    \beta(n)=\frac1{\sqrt{\log n}} .
\]
The hidden constant in $k$ is chosen large enough for every fixed
inverse-polynomial residual success density.  We assume that $f$ is
$2\beta$-approximately $(t_{\mathsf K},s)$-incompressible with error
$\varepsilon_{\mathsf{inc}}$.  Thus a description that outputs any string
within distance $\beta n$ of $f(x)$ for an $\varepsilon_{\mathsf{inc}}$-dense
set of inputs already contradicts the assumed incompressibility of $f$.

We use the following root-checking computation in both instantiations.  The
machine $M_f$ takes $(\params,x,\mathbf h,c)$ as configuration data, computes
$y=f(x)$, computes the Merkle root
\[
    (\hat c,\bot)=\MERKLE.\Commit(\params,h_1,\ldots,h_d,y),
\]
and accepts if and only if $\hat c=c$.  Its running time is denoted by
$T_f$.

\subsection{Construction from Non-Interactive Arguments}

Our first construction uses a semi-universal non-interactive argument for the
root-checking computation above.  The timing of the commitment is important:
the prover does not send the root in the initialization phase.  The verifier
first sends the random positions to be opened, and the prover then responds with
the root, the openings, and the argument that the root is the honest root of
$f(x)$.

\begin{construction}[PoS from semi-universal arguments]\label{con:fromarg}
Let $\MERKLE$ be an $\varepsilon_{\mathsf{mt}}$-secure $2$-PLOSC scheme and let
$\ARG=(\Prove,\Verify)$ be a semi-universal non-interactive argument with
soundness error $\varepsilon_{\mathsf{arg}}$ for computations of time $T_f$.
The protocol is as follows.

\begin{itemize}
    \item \textbf{Initialization phase.}
    \begin{mdframed}
    Verifier $V_1(\params)$:
    \begin{enumerate}
        \item Sample $x\from\zo^{\lin}$ and hash functions
        $\mathbf h=(h_1,\ldots,h_d)$ for $\MERKLE$.
        \item Send $(x,\mathbf h)$ to $P_1$.
        \item Output verifier state $\Phi=(x,\mathbf h)$.
    \end{enumerate}
    \end{mdframed}

    \begin{mdframed}
    Prover $P_1(\params)$:
    \begin{enumerate}
        \item Receive $(x,\mathbf h)$.
        \item Compute $y=f(x)$ and
        $(c,\st)=\MERKLE.\Commit(\params,h_1,\ldots,h_d,y)$.
        \item Compute
        \[
        \pi\from
        \ARG.\Prove\!\left(\crs,
        (M_f,\cf=(\params,x,\mathbf h,c),\cf'=\accept,T_f)\right).
        \]
        \item Store $S=(c,\st,\pi)$.
    \end{enumerate}
    \end{mdframed}

    \item \textbf{Execution phase.}
    \begin{mdframed}
    Verifier $V_2(\params,\Phi)$:
    \begin{enumerate}
        \item Parse $\Phi=(x,\mathbf h)$.  Sample and send
        $\nu=(i_1,\ldots,i_k)\from[n]^k$.
        \item Receive $c$, $\pi$, and openings
        $\{(y_{i_j},\mathbf p_j)\}_{j\in[k]}$.
        \item Check
        \[
        \ARG.\Verify\!\left(\crs,
        (M_f,\cf=(\params,x,\mathbf h,c),\cf'=\accept,T_f),\pi\right)=1.
        \]
        \item For every $j\in[k]$, check
        $\MERKLE.\Verify(\params,c,i_j,y_{i_j},\mathbf p_j)=1$.
        \item Accept if and only if all checks accept and the response arrives
        within the execution time bound.
    \end{enumerate}
    \end{mdframed}

    \begin{mdframed}
    Prover $P_2(\params,S)$:
    \begin{enumerate}
        \item Parse $S=(c,\st,\pi)$ and receive
        $\nu=(i_1,\ldots,i_k)$.
        \item For each $j\in[k]$, compute
        $\mathbf p_j\from\MERKLE.\Open(\params,\st,c,i_j)$.
        \item Send $c$, $\pi$, and
        $\{(y_{i_j},\mathbf p_j)\}_{j\in[k]}$.
    \end{enumerate}
    \end{mdframed}
\end{itemize}
\end{construction}

\begin{theorem}[Soundness from semi-universal arguments]\label{thm:snarg-pos}
Assume the derandomization hypothesis required by Lemma~\ref{lem:extraction}.

Let $k=\Theta(\log^2 n)$ and $\beta=1/\sqrt{\log n}$ be the parameters fixed
above.

Let $f:\zo^{\lin}\to\zo^n$ be computable in time $T_f$ and be
$(\beta+\eta)$-approximately
$(t_{\mathsf K},s)$-incompressible with error
$\varepsilon_{\mathsf{inc}}$.  Let $\MERKLE$ be
$\varepsilon_{\mathsf{mt}}$-secure and let $\ARG$ have soundness
$\varepsilon_{\mathsf{arg}}$ for the root-checking instances above.

For every execution time bound $\tau$ and every parameter
$\rho>0$, let $\Delta_\rho$ denote the additive storage loss guaranteed by
Lemma~\ref{lem:extraction}, at radius $\beta$ and query count $k$, when applied
with success parameter
$\theta_\rho:=\varepsilon_{\mathsf{inc}}+\rho/2$, Merkle error
$\varepsilon_{\mathsf{mt}}$, second-phase time $\tau$, and state-size bound
$m\le s$.  Suppose this application of Lemma~\ref{lem:extraction} is in its
stated parameter range and has decompressor running time at most
$t_{\mathsf K}$.  

Then
Construction~\ref{con:fromarg}, instantiated with $k$ execution queries, is
complete, has efficient verification and honest proving time polynomial in
$T_f+n$, and is
\[
    \left(s-\Delta_\rho,\ \tau,\ 
    \varepsilon_{\mathsf{arg}}+2\varepsilon_{\mathsf{inc}}+\rho\right)
    \text{-sound}.
\]
\end{theorem}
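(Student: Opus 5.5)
Completeness and efficiency are immediate. The honest $P_1$ computes $y=f(x)$ and the Merkle root $c$, and runs $\ARG.\Prove$ on the true instance $(M_f,(\params,x,\mathbf h,c),\accept,T_f)$, all in time $\poly(T_f+n)$. Completeness of $\ARG$ and correctness of $\MERKLE.\Open$/$\Verify$ then give acceptance with probability $1$. $V_2$ runs $\ARG.\Verify$ and $k=\Theta(\log^2 n)$ Merkle verifications, which is efficient. The substance is soundness, which I would argue by contradiction. Fix a cheating prover $\cA=(\cA_1,\cA_2)$ with state size $|S|\le s-\Delta_\rho$ and execution time $\tau$, and suppose it is accepted with probability exceeding $\varepsilon_{\mathsf{arg}}+2\varepsilon_{\mathsf{inc}}+\rho$.

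\textbf{Step 1: remove false roots.} Let $E_{\mathsf{false}}$ be the event that $V_2$ accepts but the root $c$ sent by $\cA_2$ differs from the honest root $c_{\mathbf h}=\MERKLE.\Commit(\params,\mathbf h,f(x))$. On this event, $(M_f,(\params,x,\mathbf h,c),\accept,T_f)\notin\cL_\cU$ yet $\ARG.\Verify$ accepts. The composed algorithm $(\cA_1,\cA_2)$ is poly-size and outputs a pair (instance, proof), so semi-universal soundness gives $\Pr[E_{\mathsf{false}}]\le\varepsilon_{\mathsf{arg}}$. This is the only place the argument is used, which is why the prover may choose $c$ after seeing $\nu$. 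Consequently, the probability that $V_2$ accepts with $c=c_{\mathbf h}$ is greater than $2\varepsilon_{\mathsf{inc}}+\rho$.

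\textbf{Step 2: pass to a dense set of inputs.} Define $q_x$ as the probability, over $\mathbf h$, the coins of $\cA$, and $\nu$, that $V_2$ accepts with the honest root. Then $\bbE_x[q_x]>2\varepsilon_{\mathsf{inc}}+\rho$. Let $\good=\{x: q_x\ge\theta_\rho\}$ with $\theta_\rho=\varepsilon_{\mathsf{inc}}+\rho/2$. An averaging argument in the style of Lemma~\ref{lem:density} gives $\Pr_x[x\in\good]>\varepsilon_{\mathsf{inc}}+\rho/2>\varepsilon_{\mathsf{inc}}$: if $\Pr[\good]\le\theta_\rho$, then $\bbE[q_x]\le\Pr[\good]+\theta_\rho\le 2\theta_\rho$, a contradiction.

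\textbf{Step 3: apply Lemma~\ref{lem:extraction} to each good $x$.} Set $\cA_0:=\cA_1$, outputting $\sigma=S$ of length $m\le s-\Delta_\rho\le s$, and take the lemma's $\cA_1$ to be $\cA_2$ restricted to returning the openings. The event in the lemma is that all openings verify against $c$. I would take the root to be the honest $c_{\mathbf h}$ computed by the lemma's $\Commit$; Step 1 justifies this, since we only count executions where $\cA_2$ sends exactly $c_{\mathbf h}$. The lemma's hypothesis then holds with success parameter $\theta_\rho$. It yields $\hat{\vy}$ within distance $\beta n$ of $f(x)$ with $\aK^{t'}_z(\hat{\vy}\mid x)\le m+\Delta_\rho\le s$ and $t'\le t_{\mathsf K}$. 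The advice $z$ consists of $\params$, $\cA_2$, and the HSG description, and does not depend on $x$.

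\textbf{Step 4: contradiction.} The set of $x$ admitting such a compression has density greater than $\varepsilon_{\mathsf{inc}}$, which contradicts the $(\beta+\eta)$-approximate $(t_{\mathsf K},s)$-incompressibility of $f$, since $\beta n<(\beta+\eta)n$.

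The main obstacle is the bookkeeping in Step 3. The lemma is stated for a fixed $(x,\vy)$, with the root produced by $\Commit$ and $\mathbf h$ sampled inside the probability. Our cheating prover instead outputs its own root at execution time. I would handle this by conditioning the lemma's success event on "$\cA_2$ outputs $c_{\mathbf h}$ and all openings verify." This conditioned event implies the lemma's acceptance event, so its probability is still at least $\theta_\rho$. I would also check that the 2-PLOSC binding step in the lemma's proof applies unchanged, because the commitment there is to the fixed honest $\vy=f(x)$.
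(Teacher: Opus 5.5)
Your proposal is correct and matches the paper's proof step for step. You bound the false-root event by semi-universal soundness of $\ARG$ using the composed adversary that picks $c$ after seeing $\nu$. You then average to a set of inputs of density greater than $\theta_\rho>\varepsilon_{\mathsf{inc}}$, and apply Lemma~\ref{lem:extraction} to each good $x$ with a query algorithm that discards $c$ and $\pi$, using that honest-root acceptances give openings that verify against the honest commitment. The only differences are cosmetic: you use $q_x\ge\theta_\rho$ instead of $>$, and you spell out the averaging and the honest-root conditioning that the paper handles by citing Lemma~\ref{lem:density}.
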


\begin{proof}
Completeness follows from correctness of $\MERKLE$ and completeness of
$\ARG$.  The efficiency bounds are immediate from the construction: $P_1$
computes $f$, one Merkle commitment, and one argument for a time-$T_f$
computation; $P_2$ opens $k$ Merkle paths and sends one already-computed
argument; and $V_2$ verifies one succinct argument and $k$ Merkle paths.

It remains to prove soundness.  Suppose, toward contradiction, that a prover
$\cA=(\cA_1,\cA_2)$ with initialization state size at most
$s-\Delta_\rho$ and execution time at most $\tau$ makes
$V_2$ accept with probability
\[
    \mu>
    \varepsilon_{\mathsf{arg}}+2\varepsilon_{\mathsf{inc}}+\rho.
\]
Set $\theta_\rho=\varepsilon_{\mathsf{inc}}+\rho/2$.
The probability is over $x$, $\mathbf h$, the verifier's sampled tuple
$\nu\in[n]^k$, and the prover's randomness.

Let $\mathsf{FalseArg}$ be the event that $V_2$ accepts but the statement
accepted by $\ARG$ is false, namely
\[
    \MERKLE.\Commit(\params,h_1,\ldots,h_d,f(x))
\]
does not have root equal to the $c$ sent by the prover in the execution phase.
Although $c$ is sent only after the verifier samples $\nu$, an adversary for $\ARG$ can internally:
run the whole PoS initialization, sample $\nu$, run $\cA_2$ to obtain
$(c,\pi)$, and output the false root-checking instance together with $\pi$.
Therefore
\[
    \Pr[\mathsf{FalseArg}]\le \varepsilon_{\mathsf{arg}}.
\]
Let $\mathsf{TrueAcc}$ be the event that $V_2$ accepts and
$\mathsf{FalseArg}$ does not occur.  Then
\[
    \Pr[\mathsf{TrueAcc}]
    >
    \mu-\varepsilon_{\mathsf{arg}}
    >
    2\theta_\rho.
\]

For a fixed input $x$, let
$q_x=\Pr[\mathsf{TrueAcc}\mid x]$.  Define
$X_{\mathsf{good}}=\{x:q_x>\theta_\rho\}$.  By Lemma~\ref{lem:density},
applied with threshold $2\theta_\rho$, the set $X_{\mathsf{good}}$ has density
greater than $\theta_\rho$, and hence greater than
$\varepsilon_{\mathsf{inc}}$.  Fix any $x\in X_{\mathsf{good}}$ and write
$y=f(x)$.

We now instantiate Lemma~\ref{lem:extraction}.  The extraction initialization
algorithm runs $\cA_1$ on $(x,\mathbf h)$ and outputs its stored state.  The
query algorithm, on a tuple $\nu$, runs $\cA_2$ on that state after sending
$\nu$, parses the response as
$(c,\pi,\{(a_j,\mathbf p_j)\}_{j\in[k]})$, discards $c$ and $\pi$, and outputs
the claimed symbols and Merkle paths.  Whenever $\mathsf{TrueAcc}$ occurs, the
argument statement is true, so the root $c$ sent by the prover is exactly the
honest root of $y=f(x)$ under $\mathbf h$, and all $k$ Merkle openings verify
under this honest root.  The bad-opening probability required by
Lemma~\ref{lem:extraction} is bounded by $k\varepsilon_{\mathsf{mt}}$, by the
$2$-PLOSC security of $\MERKLE$ and a union bound over the $k$ opened
positions.  Thus the lemma applies with success parameter $\theta_\rho$ and
residual density
\[
    \gamma_\rho
    :=
    \theta_\rho-k\varepsilon_{\mathsf{mt}}.
\]

Consequently, there exists an advice $z$ such that for every $x\in X_{\mathsf{good}}$, there is a string
$\hat y$ such that
$\Delta(y,\hat y)\le\beta n$ and
\[
    \aK^{t_{\mathsf K}}_z(\hat y\mid x)
    \le (s-\Delta_\rho)+\Delta_\rho
    \le s.
\]
Since $X_{\mathsf{good}}$ has density greater than
$\varepsilon_{\mathsf{inc}}$ and
$\beta<\beta+\eta$, this contradicts the
assumed approximate incompressibility of $f$.  Hence no such prover exists.
\end{proof}

\begin{corollary}[Tight PoS from SNARGs]\label{cor:snarg-near-tight}
Assume the hypotheses of Theorem~\ref{thm:snarg-pos}.  Instantiate $f$ using
Corollary~\ref{cor:approx-incompressible} so that it is
$2/\sqrt{\log n}$-approximately incompressible against descriptions of length
$n-O(n\log\log n/\sqrt{\log n})$, with negligible error.  Assume also that the
argument and Merkle soundness terms are negligible and that extraction runs
within the incompressibility time bound.  Then Construction~\ref{con:fromarg}
is sound against every time-$\tau$ prover storing at most
\[
    n-C\frac{n\log\log n}{\sqrt{\log n}}
    =
    (1-o(1))n
\]
bits, for a sufficiently large constant $C$, with only negligible accepting
probability. 
\end{corollary}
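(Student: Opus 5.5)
The plan is to derive the corollary directly from Theorem~\ref{thm:snarg-pos}. The work is to instantiate $f$ and the parameters so that the storage loss $\Delta_\rho$ is $O(n\log\log n/\sqrt{\log n})$ and the soundness error is negligible.

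\textbf{Step 1: choosing $f$.} Apply Corollary~\ref{cor:approx-incompressible} with $m=n$, radius $2\beta=2/\sqrt{\log n}$, decompression time $t_{\mathsf K}$ equal to the polynomial decompressor time $t'$ from Lemma~\ref{lem:extraction}, and error $\mu=2^{-\log^2 n}$. The seed then has length $r=O(\log^2 n)$ and $\varepsilon_{\mathsf{inc}}=3\mu$ is negligible. The compression bound is
\[
s=n-H_2(2\beta)n-O(\log^2 n).
\]
Since $H_2(\rho)=O(\rho\log(1/\rho))$ for $\rho=o(1)$, we get $H_2(2\beta)n=O\bigl(n\log\log n/\sqrt{\log n}\bigr)$. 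Taking $\eta=\beta$ in Theorem~\ref{thm:snarg-pos} matches the hypothesis of $(\beta+\eta)$-approximate incompressibility.

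\textbf{Step 2: bounding $\Delta_\rho$.} Fix an arbitrary inverse polynomial $\rho$. Then $\theta_\rho=\varepsilon_{\mathsf{inc}}+\rho/2\ge n^{-O(1)}$, and because $\varepsilon_{\mathsf{mt}}$ is negligible, $\gamma_\rho=\theta_\rho-k\varepsilon_{\mathsf{mt}}\ge n^{-O(1)}$ as well. Choose the constant in $k=\Theta(\log^2 n)$ large enough for this $\rho$. Then
\[
\Lambda=k\log\tfrac{1}{1-\beta}-O(\log n)=\Omega\bigl(k/\sqrt{\log n}\bigr)=\Omega(\log^{3/2}n),
\]
and $\alpha=O\bigl(nH_2(\beta)/\Lambda\bigr)=O(n\log\log n/\log^2 n)$, as in Lemma~\ref{lem:extraction}. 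The additive loss is then
\[
\Delta_\rho=O\bigl(\alpha\log(1/\gamma_\rho)+\log(n+\tau+\ell_h+m)\bigr)=O\bigl(n\log\log n/\log n\bigr).
\]
This is dominated by the $H_2(2\beta)n$ term. Hence there is a constant $C$ with
\[
s-\Delta_\rho\ge n-Cn\log\log n/\sqrt{\log n}.
\]
The decompressor time $t'=\poly(\tau\alpha+n+\ell_h+m)$ is polynomial, and by assumption it lies within $t_{\mathsf K}$, which we fixed accordingly in Step 1.

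\textbf{Step 3: concluding.} Theorem~\ref{thm:snarg-pos} gives soundness error $\varepsilon_{\mathsf{arg}}+2\varepsilon_{\mathsf{inc}}+\rho$ against provers storing $n-Cn\log\log n/\sqrt{\log n}$ bits. Since this holds for every inverse polynomial $\rho$ and the other two terms are negligible, the acceptance probability is below every inverse polynomial, i.e.\ negligible.

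The main obstacle is uniformity in $\rho$. The constant hidden in $k$ (and hence in $\Delta_\rho$) depends on the exponent of $\rho$, whereas the corollary asks for a single constant $C$. I would handle this by noting that $\Delta_\rho=O(n\log\log n/\log n)$ for every fixed polynomial, which is $o\bigl(n\log\log n/\sqrt{\log n}\bigr)$. So a single $C$ absorbs it for all large $n$, with $C$ determined only by the $H_2(2\beta)$ term. Alternatively, one can choose $k=\log^{2}n\cdot\omega(1)$ slightly superquadratic; this still gives the same bound on $\alpha$ up to lower-order factors.
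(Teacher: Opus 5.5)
Your proposal is correct and is essentially the argument the paper intends: the corollary has no separate proof, and it amounts to Theorem~\ref{thm:snarg-pos} instantiated with the $k=\Theta(\log^2 n)$, $\beta=1/\sqrt{\log n}$ regime of Lemma~\ref{lem:extraction} and the bound $s_\rho$ of Corollary~\ref{cor:approx-incompressible}, with $\Delta_\rho=O(n\log\log n/\log n)$ absorbed into the $H_2(2\beta)n$ term and $\rho$ ranging over all inverse polynomials. One small correction: writing $k=K\log^2 n$, the constant $K$ never needs to depend on $\rho$, because $k\log\frac{1}{1-\beta}\ge K\log^{3/2}n$ dominates $\log(4/\gamma_\rho)=O(\log n)$ for any fixed $K$ once $n$ is large, so only $\Delta_\rho$ and the threshold on $n$ depend on $\rho$; this is exactly what lets a single fixed protocol handle every inverse-polynomial $\rho$, with no need for superquadratic $k$.
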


\subsection{Construction from Interactive Arguments}

The second instantiation replaces the SNARG by a Kilian-style PCP argument.
This reduces the cryptographic assumption on the argument system to
collision-resistant hashing, but the honest prover must store the PCP string
for the root-checking computation which has size quasipolynomial in the time to compute $f$.

\begin{construction}[PoS from Kilian-style arguments]\label{con:fromKilian}
Let $\MERKLE_1$ be the $2$-PLOSC scheme used to commit to $y=f(x)$, and let
$\MERKLE_2$ be the position-binding vector-commitment used in
Kilian's protocol for committing to a PCP string.  Let $\PCP=(P,V)$ be a PCP for the relation
$R_{\mathsf{root}}$ whose instances are
\[
    u=(M_f,\cf=(\params,x,\mathbf h,c_1),\cf'=\accept,T_f),
\]
and whose witnesses are accepting computations of $M_f$.

\begin{itemize}
    \item \textbf{Initialization phase.}
    \begin{mdframed}
    Verifier $V_1(\params)$:
    \begin{enumerate}
        \item Sample $x\from\zo^{\lin}$ and hash functions
        $\mathbf h=(h_1,\ldots,h_d)$ for $\MERKLE_1$.
        \item Send $(x,\mathbf h)$ to $P_1$.
        \item Output verifier state $\Phi=(x,\mathbf h)$.
    \end{enumerate}
    \end{mdframed}

    \begin{mdframed}
    Prover $P_1(\params)$:
    \begin{enumerate}
        \item Receive $(x,\mathbf h)$ and compute $y=f(x)$.
        \item Compute
        $(c_1,\st_1)=\MERKLE_1.\Commit(\params,h_1,\ldots,h_d,y)$.
        \item Form
        $u=(M_f,\cf=(\params,x,\mathbf h,c_1),\cf'=\accept,T_f)$ and
        compute the PCP string
        $\Pi\from\PCP.P(u,w)$, where $w$ is the accepting computation of
        $M_f$.
        \item Compute
        $(c_2,\st_2)=\MERKLE_2.\Commit(\params,\Pi)$.
        \item Store $S=(c_1,c_2,\st_1,\st_2,\Pi)$.
    \end{enumerate}
    \end{mdframed}

    \item \textbf{Execution phase.}
    \begin{mdframed}
    Verifier $V_2(\params,\Phi)$:
    \begin{enumerate}
        \item Parse $\Phi=(x,\mathbf h)$.  Sample and send
        $\nu=(i_1,\ldots,i_k)\from[n]^k$.
        \item Receive commitments $c_1,c_2$ and openings
        $\{(y_{i_j},\mathbf p_j)\}_{j\in[k]}$.
        \item For every $j\in[k]$, check
        $\MERKLE_1.\Verify(\params,c_1,i_j,y_{i_j},\mathbf p_j)=1$.
        \item Form
        $u=(M_f,\cf=(\params,x,\mathbf h,c_1),\cf'=\accept,T_f)$.
        Sample PCP randomness $\rho$ and send it to $P_2$.
        \item Receive $(\cQ,\ans,\pf)$.
        \item Check that $\MERKLE_2.\Verify(\params,c_2,\cQ,\ans,\pf)=1$
        and that $\PCP.V^{[\cQ,\ans]}(u;\rho)=1$, where the verifier rejects
        if the PCP verifier queries outside $\cQ$.
        \item Accept if and only if all checks accept and the response arrives
        within the execution time bound.
    \end{enumerate}
    \end{mdframed}

    \begin{mdframed}
    Prover $P_2(\params,S)$:
    \begin{enumerate}
        \item Parse $S=(c_1,c_2,\st_1,\st_2,\Pi)$ and receive
        $\nu=(i_1,\ldots,i_k)$.
        \item For every $j\in[k]$, compute
        $\mathbf p_j\from\MERKLE_1.\Open(\params,\st_1,c_1,i_j)$.
        \item Send $c_1,c_2$ and
        $\{(y_{i_j},\mathbf p_j)\}_{j\in[k]}$.
        \item Receive PCP randomness $\rho$.  Compute the PCP query set
        $\cQ$ made by $\PCP.V(u;\rho)$ and compute
        $\pf\from\MERKLE_2.\Open(\params,\st_2,c_2,\cQ)$.
        \item Send $(\cQ,\ans=\Pi[\cQ],\pf)$.
    \end{enumerate}
    \end{mdframed}
\end{itemize}
\end{construction}

\begin{theorem}[Soundness from Kilian-style arguments]\label{thm:kilian-pos}
Assume the derandomization hypothesis required by Lemma~\ref{lem:extraction}.

Let $k=\Theta(\log^2 n)$ and $\beta=1/\sqrt{\log n}$ be the parameters fixed
above.

Let $f:\zo^{\lin}\to\zo^n$ be computable in time $T_f$ and be
$(\beta+\eta)$-approximately
$(t_{\mathsf K},s)$-incompressible with error
$\varepsilon_{\mathsf{inc}}$.  Suppose $\MERKLE_1$ is
$\varepsilon_{\mathsf{mt}}$-secure and the Kilian protocol obtained from
$\PCP$ and $\MERKLE_2$ has soundness error
$\varepsilon_{\mathsf{kil}}$ for relation $R_{\mathsf{root}}$.

For every execution time bound $\tau$ and every parameter
$\rho>0$, let $\Delta_\rho$ denote the additive storage loss guaranteed by
Lemma~\ref{lem:extraction}, at radius $\beta$ and query count $k$, when applied
with success parameter $\theta_\rho=\varepsilon_{\mathsf{inc}}+\rho/2$, Merkle
error $\varepsilon_{\mathsf{mt}}$, second-phase time $\tau$, and state-size
bound $m\le s$.  Suppose this application of Lemma~\ref{lem:extraction} is in
its stated parameter range and has decompressor running time at most
$t_{\mathsf K}$.  

Then Construction~\ref{con:fromKilian}, instantiated with $k$ execution queries, is
complete, has efficient verification and honest proving time polynomial in
$T_f+n+|\Pi|$, and is
\[
    \left(s-\Delta_\rho,\ \tau,\ 
    \varepsilon_{\mathsf{kil}}+2\varepsilon_{\mathsf{inc}}+\rho\right)
    \text{-sound}.
\]
The honest prover's persistent state is $n+|\Pi|+\polylog n$ bits which is dominated by the proof length of $O(T\poly\log T)$.
\end{theorem}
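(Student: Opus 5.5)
The proof will follow that of Theorem~\ref{thm:snarg-pos} almost verbatim, with Kilian's argument taking the place of the SNARG. \textbf{Completeness} follows from correctness of $\MERKLE_1$, completeness of $\PCP$, and correctness of $\MERKLE_2$: the honest prover opens the correct values and supplies an honest PCP for a true instance $u$.

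\textbf{Efficiency.} $P_1$ evaluates $f$ in time $T_f$, commits to $y$, runs the PCP prover on the time-$T_f$ computation of $M_f$, and commits to $\Pi$; all of this is polynomial in $T_f+n+|\Pi|$. $P_2$ only opens $k$ paths in $\MERKLE_1$ and $q$ paths in $\MERKLE_2$. $V_2$ checks $k+q$ paths and runs the PCP verifier. The stored state $S=(c_1,c_2,\st_1,\st_2,\Pi)$ has size $n+|\Pi|$ plus Merkle overhead. With a quasilinear PCP this is dominated by $|\Pi|=O(T\polylog T)$.

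\textbf{Soundness.} Suppose a prover $\cA=(\cA_1,\cA_2)$ with state at most $s-\Delta_\rho$ and time $\tau$ is accepted with probability $\mu>\varepsilon_{\mathsf{kil}}+2\varepsilon_{\mathsf{inc}}+\rho$. Let $\mathsf{FalseArg}$ be the event that $V_2$ accepts while $c_1$ is not the honest $\MERKLE_1$ root of $f(x)$ under $\vh$, i.e.\ $u\notin R_{\mathsf{root}}$.

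The key step is to bound $\Pr[\mathsf{FalseArg}]\le\varepsilon_{\mathsf{kil}}$. I would build a cheating Kilian prover $\cP^*$ as follows. It samples $x,\vh$, runs $\cA_1$, samples $\nu$, and runs $\cA_2$ up to the point where it emits $c_1,c_2$ and the $\MERKLE_1$ openings. It then outputs $u$ (determined by $c_1$) together with the commitment $c_2$. When it receives $\rho$, it continues $\cA_2$ and forwards the answer $(\cQ,\ans,\pf)$.

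\textbf{Main obstacle.} Definition~\ref{def:kilian-sound} quantifies over a \emph{fixed} false instance $x$, whereas here $u$ is chosen adaptively by the prover before $\rho$ is sampled. I would handle this in one of two ways:
\begin{enumerate}[(i)]
\item invoke adaptive soundness of Kilian's protocol, which holds because the instance is fixed before the verifier's only challenge $\rho$, and the standard proof (position binding of $\MERKLE_2$ plus PCP soundness for each fixed committed string) goes through with $u$ chosen jointly with $c_2$; or
\item fix the initialization coins and $\nu$ nonuniformly, which fixes $u$ and $c_2$, and then average: for each such fixing either $u$ is true or the conditional acceptance probability is at most $\varepsilon_{\mathsf{kil}}$.
\end{enumerate}
Option (ii) uses only the non-adaptive definition and non-uniform PPT provers, so I would adopt it. This gives $\Pr[\mathsf{FalseArg}]\le\varepsilon_{\mathsf{kil}}$.

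\textbf{Extraction.} Let $\mathsf{TrueAcc}$ be the event that $V_2$ accepts and $\mathsf{FalseArg}$ does not occur; then $\Pr[\mathsf{TrueAcc}]>2\theta_\rho$. By Lemma~\ref{lem:density}, the set $X_{\mathsf{good}}=\{x:\Pr[\mathsf{TrueAcc}\mid x]>\theta_\rho\}$ has density greater than $\varepsilon_{\mathsf{inc}}$. For $x\in X_{\mathsf{good}}$, I would apply Lemma~\ref{lem:extraction} with the following choices:
\begin{itemize}
\item the extraction initialization algorithm is $\cA_1$;
\item the query algorithm runs $\cA_2$ on the given tuple, simulates the PCP round with its own randomness $\rho$, and outputs only the $\MERKLE_1$ symbols and paths.
\end{itemize}
The query algorithm runs in time about $\tau$, and on $\mathsf{TrueAcc}$ the openings verify against the honest root $c_{\vh}$. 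The $\MERKLE_1$ bad-opening loss is at most $k\varepsilon_{\mathsf{mt}}$, so the residual density is $\gamma_\rho=\theta_\rho-k\varepsilon_{\mathsf{mt}}$. The lemma then yields a $\hat y$ within distance $\beta n$ of $f(x)$ with $\aK^{t_{\mathsf K}}_z(\hat y\mid x)\le s$. This holds on a set of $x$ of density greater than $\varepsilon_{\mathsf{inc}}$, which contradicts $(\beta+\eta)$-approximate incompressibility of $f$.
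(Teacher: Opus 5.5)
Your proposal is correct and follows the paper's proof essentially step for step. The shared steps are the Kilian reduction that runs the cheating prover until it emits $c_1,c_2$ (bounding the false-root event by $\varepsilon_{\mathsf{kil}}$), the density argument producing $X_{\mathsf{good}}$, and the invocation of Lemma~\ref{lem:extraction} with residual density $\theta_\rho-k\varepsilon_{\mathsf{mt}}$ and a query algorithm that ignores the PCP transcript. Your option (ii) for the adaptively chosen instance $u$ is a refinement the paper leaves implicit, since its reduction simply takes $u$ as the Kilian instance; for option (ii) to work you must also fix $\cA_2$'s coins, not just the initialization coins and $\nu$, so that $u$ and $c_2$ are actually determined.
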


\begin{proof}
Completeness and efficiency follow from correctness of the two commitments and
the PCP.  The only point about message order is that $c_1$ and $c_2$ are sent
after the Merkle index tuple but before the PCP randomness $\rho$; hence the
Kilian commitment to the PCP string is fixed before the PCP challenge, exactly
as required in the standard soundness proof.

The soundness proof is parallel to the proof of
Theorem~\ref{thm:snarg-pos}.  Let a cheating prover with state at most
$s-\Delta_\rho$ and execution time at most $\tau$ make the verifier accept with
probability
\[
    \mu>
    \varepsilon_{\mathsf{kil}}+2\varepsilon_{\mathsf{inc}}+\rho .
\]
Set $\theta_\rho=\varepsilon_{\mathsf{inc}}+\rho/2$.  Let
$\mathsf{FalseKil}$ be the event that the verifier accepts although the
root-checking instance
$u=(M_f,\cf=(\params,x,\mathbf h,c_1),\cf'=\accept,T_f)$ is false.

This event breaks the Kilian argument with the same probability up to its
soundness error.  The reduction samples the PoS initialization and the Merkle
tuple $\nu$, runs the cheating prover until it sends $c_1,c_2$ and the claimed
Merkle openings, sets the Kilian instance to the corresponding $u$, sends
$c_2$ as the PCP commitment, and then forwards the external PCP randomness to
the prover.  Whenever $\mathsf{FalseKil}$ occurs, the reduction convinces the
Kilian verifier of a false instance.  Therefore
\[
    \Pr[\mathsf{FalseKil}]\le\varepsilon_{\mathsf{kil}}.
\]
Let $\mathsf{TrueAcc}$ be acceptance outside $\mathsf{FalseKil}$.  Then
$\Pr[\mathsf{TrueAcc}]>2\theta_\rho$.

As before, define $q_x=\Pr[\mathsf{TrueAcc}\mid x]$ and
$X_{\mathsf{good}}=\{x:q_x>\theta_\rho\}$.  By Lemma~\ref{lem:density}, this
set has density greater than $\theta_\rho$, and hence greater than
$\varepsilon_{\mathsf{inc}}$.  For each fixed $x\in X_{\mathsf{good}}$, set
$y=f(x)$ and invoke Lemma~\ref{lem:extraction} with an initialization
algorithm that runs the cheating initialization and a query algorithm that
sends the tuple $\nu$, receives $c_1,c_2$ and the Merkle openings, discards the
PCP transcript, and outputs the claimed symbols and paths.  On every
$\mathsf{TrueAcc}$ execution, the root-checking instance is true, so $c_1$ is
the honest Merkle root of $f(x)$ under $\mathbf h$, and the displayed openings
verify under the honest root.  The only remaining failure mode for the
extraction lemma is a bad opening, bounded by $k\varepsilon_{\mathsf{mt}}$ by
$2$-PLOSC security of $\MERKLE_1$.  Thus the extraction lemma is invoked with
residual density
\[
    \gamma_\rho
    =
    \theta_\rho-k\varepsilon_{\mathsf{mt}}.
\]

Lemma~\ref{lem:extraction} therefore gives that there exists an advice string $z$, under which for every
$x\in X_{\mathsf{good}}$, there exists a string $\hat y$ with
$\Delta(f(x),\hat y)\le\beta n$ and
$\aK^{t_{\mathsf K}}_z(\hat y\mid x)\le s$.  Since
$X_{\mathsf{good}}$ has density greater than
$\varepsilon_{\mathsf{inc}}$, this contradicts the
$(\beta+\eta)$-approximate incompressibility of $f$.
\end{proof}

\begin{remark}
    \label{rem:pcp-for-time}
    Since $M_f$ runs in time $T_f$, quasilinear-size
    PCPs~\cite{BenSassonSudan08,Dinur07} for $R_{\mathsf{root}}$ give proof
    length $|\Pi|=T_f^{1+o(1)}$ with $\polylog(T_f)$ queries and randomness
    and constant soundness error. Repeating the challenge
    $\lambda=\omega(\log n)$ times against the same commitment $c_2$ (fixed
    before $\rho$ by the message ordering of
    Construction~\ref{con:fromKilian}) yields Kilian soundness
    $\varepsilon_{\mathsf{kil}}=\negl(n)$, with proof length unchanged.
\end{remark}

\begin{corollary}[Kilian space bound with almost-linear PCPs]\label{cor:kilian-space}
Instantiate the PCP in Construction~\ref{con:fromKilian} as in
Remark~\ref{rem:pcp-for-time}, with negligible Kilian soundness.  Under the
same $\beta=1/\sqrt{\log n}$ parameters as
Corollary~\ref{cor:snarg-near-tight}, every time-$\tau$ cheating prover storing
at most
\[
    n-C\,\frac{n\log\log n}{\sqrt{\log n}}
\]
bits succeeds with only negligible probability.  If $N_{\mathsf{hon}}$ is the
honest prover's persistent storage, then
\[
    N_{\mathsf{hon}}=n+T_f^{1+o(1)}+\polylog (n) .
\]
Thus if $T_f=n^{1+o(1)}$ then an adversarial storage lower bound
$N_{\mathsf{hon}}^{1-o(1)}$; more generally, $T_f=n^a$ gives
$N_{\mathsf{hon}}^{1/a-o(1)}$.
\end{corollary}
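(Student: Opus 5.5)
This corollary is a parameter instantiation of Theorem~\ref{thm:kilian-pos}, so the plan has three parts. First we check the soundness bound. Then we account for the honest storage. Finally we translate between the adversarial bound $n-o(n)$ and $N_{\mathsf{hon}}$.

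\textbf{Soundness.} We take $f$ from Corollary~\ref{cor:approx-incompressible}, exactly as in Corollary~\ref{cor:snarg-near-tight}. This makes $f$ be $2\beta$-approximately incompressible, with $\beta=1/\sqrt{\log n}$ (so $\eta=\beta$), against descriptions of length $s=n-O(n\log\log n/\sqrt{\log n})$, with error $\varepsilon_{\mathsf{inc}}=\negl(n)$. This uses $H_2(2\beta)n=O(n\log\log n/\sqrt{\log n})$.

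\begin{itemize}
\item We choose the decompression time $t_{\mathsf K}$ to be the polynomial $t'=\poly(\tau\alpha+n+\ell_h+m)$ from Lemma~\ref{lem:extraction}. This is legitimate because $t'$ does not depend on $T_f$, and Corollary~\ref{cor:approx-incompressible} allows any polynomial decompression time.
\item For each inverse polynomial $\rho$ we evaluate the loss $\Delta_\rho$. We have $\gamma_\rho=\theta_\rho-k\varepsilon_{\mathsf{mt}}\ge n^{-O(1)}$, and $\Lambda=\Theta(k\beta)-O(\log n)>0$ once the constant in $k=\Theta(\log^2 n)$ is large. This gives $\alpha=O(n\log\log n/\log^2 n)$ and hence
\[
\Delta_\rho=O\!\left(\alpha\log n+\log(n+\tau+\ell_h+m)\right)=O\!\left(\frac{n\log\log n}{\log n}\right),
\]
which is $o(n\log\log n/\sqrt{\log n})$.
\item With a large enough constant $C$, the threshold $n-Cn\log\log n/\sqrt{\log n}$ is at most $s-\Delta_\rho$. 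Theorem~\ref{thm:kilian-pos} then bounds the success probability by $\varepsilon_{\mathsf{kil}}+2\varepsilon_{\mathsf{inc}}+\rho$.
\item By Remark~\ref{rem:pcp-for-time}, $\varepsilon_{\mathsf{kil}}$ is negligible. Since $\rho$ is an arbitrary inverse polynomial, the success probability is negligible.
\end{itemize}

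\textbf{Main obstacle.} The delicate point is the order of quantifiers in the previous step. The constant in $k$ must be fixed before $\rho$. The paper already stipulates that $k$ is chosen large enough for every fixed inverse-polynomial residual density. We will note that $\Lambda>0$ only needs $k\beta\gg\log(1/\gamma_\rho)$. This holds for every polynomial $\gamma_\rho$ provided $k=\omega(\log^{3/2}n)$, and $k=\Theta(\log^2 n)$ satisfies that. So a single fixed $k$ suffices for all $\rho$ simultaneously.

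\textbf{Honest storage.} We read $N_{\mathsf{hon}}$ off the stored state $S=(c_1,c_2,\st_1,\st_2,\Pi)$:
\begin{itemize}
\item $\st_1$ contains $y$, which is $n$ bits, plus the hash descriptions;
\item the commitments $c_1,c_2$ contribute $\polylog(n)$ bits;
\item $\st_2$ is determined by $\Pi$, whose length is $T_f^{1+o(1)}$ by Remark~\ref{rem:pcp-for-time}.
\end{itemize}
Repeating the PCP challenge does not increase $|\Pi|$, so $N_{\mathsf{hon}}=n+T_f^{1+o(1)}+\polylog(n)$.

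\textbf{Translating the bounds.} If $T_f=n^a$ then $N_{\mathsf{hon}}=n^{a+o(1)}$, so $n=N_{\mathsf{hon}}^{1/a-o(1)}$. The adversarial bound $(1-o(1))n$ therefore equals $N_{\mathsf{hon}}^{1/a-o(1)}$. Taking $a=1+o(1)$ gives the bound $N_{\mathsf{hon}}^{1-o(1)}$.
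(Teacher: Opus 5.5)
Your proposal is correct and follows the route the paper intends. The paper gives no separate proof: the corollary is a direct instantiation of Theorem~\ref{thm:kilian-pos} with the parameters of Corollary~\ref{cor:snarg-near-tight}, using Remark~\ref{rem:pcp-for-time} for $|\Pi|=T_f^{1+o(1)}$ and negligible $\varepsilon_{\mathsf{kil}}$, the storage count $n+|\Pi|+\polylog n$ stated in the theorem, and the exponent translation. Your check that $k\beta=\Theta(\log^{3/2}n)$ dominates $\log(1/\gamma_\rho)=O(\log n)$ is a useful addition: together with $\Delta_\rho=O(n\log\log n/\log n)$ being lower order, it shows that a single $k$ and a single $C$ work for every inverse-polynomial $\rho$, which makes precise a quantifier order the paper only stipulates.
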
